\documentclass[a4paper,USenglish,cleveref,numberwithinsect,thm-restate]{lipics-v2021}
\bibliographystyle{plainurl}

\usepackage{xurl}

\hypersetup{
    colorlinks,
    linkcolor={red!50!black},
    citecolor={blue!50!black},
    urlcolor={blue!30!black}
}

\makeatletter
\g@addto@macro\bfseries{\boldmath}
\makeatother

\PassOptionsToPackage{hyphens}{url}\usepackage{hyperref}

\usepackage[ruled]{algorithm2e}
\usepackage{colonequals}
\usepackage{stmaryrd}
\usepackage{tabularx}
\usepackage{booktabs}
\usepackage{bm}
\usepackage{tikz}
\usepackage{mathabx} 
\usetikzlibrary{decorations.pathreplacing,calligraphy} 

\usepackage{relsize}
\usepackage{stackengine}
\stackMath
\usepackage{nameref}

\makeatletter
\let\orgdescriptionlabel\descriptionlabel
\renewcommand*{\descriptionlabel}[1]{%
  \let\orglabel\label
  \let\label\@gobble
  \phantomsection
  \edef\@currentlabel{#1}%
  \let\label\orglabel
  \orgdescriptionlabel{#1}%
}
\makeatother

\newtheorem{result}{Result}
\newtheorem{fact}[theorem]{Fact}

\newcommand{\prm}{\Pr_\text{matching}}
\newcommand{\pre}{\Pr_\text{edgeCover}}
\newcommand{\pp}{p}
\newcommand{\pq}{q}
\newcommand{\pr}{r}
\newcommand{\ps}{s}
\newcommand{\np}{\bar{p}}
\newcommand{\nq}{\bar{q}}
\newcommand{\nr}{\bar{r}}
\newcommand{\ns}{\bar{s}}

\newcommand\mA{\bm{A}}
\newcommand\mB{\bm{B}}
\newcommand\mC{\bm{C}}
\newcommand\mV{\bm{V}}
\newcommand\mM{\bm{U}}
\newcommand\mMM{\bm{M}}
\newcommand\mVP{\bm{V'}}
\newcommand\mG{\bm{\Gamma}}
\newcommand\mS{\bm{S}}

\newcommand\mJ{\bm{J}}
\newcommand\calF{\mathcal{F}}
\newcommand\calG{\mathcal{G}}
\DeclareMathOperator\sub{\textrm{Sub}}
\DeclareMathOperator\probmatch{\textrm{PrMatching}}
\DeclareMathOperator\probedgecover{\textrm{PrEdgeCover}}
\DeclareMathOperator\shpmatch{\textrm{\#Matching}}
\DeclareMathOperator{\tw}{tw}

\newcommand{\shp}{\text{$\#$}\text{\rm P}}

\newcommand{\NN}{\mathbb{N}}
\newcommand{\RR}{\mathbb{R}}
\newcommand{\QQ}{\mathbb{Q}}

\newcommand{\edgep}[1]{~\overset{#1}{\text{------}}~}
\newcommand{\edgeg}{~\text{------}~}

\newcounter{sqindex}

\nolinenumbers

\title{Weighted Counting of Matchings
in Unbounded-Treewidth Graph Families}

\author{Antoine Amarilli}{LTCI, Télécom Paris, Institut Polytechnique de Paris,
France \and \url{https://a3nm.net/}}{a3nm@a3nm.net}{https://orcid.org/0000-0002-7977-4441}{Partially supported by the ANR project EQUUS
ANR-19-CE48-0019 and by the Deutsche Forschungsgemeinschaft (DFG, German Research Foundation) – 431183758.}
\author{Mikaël Monet}{Univ. Lille, Inria, CNRS, Centrale Lille, UMR 9189
CRIStAL, F-59000 Lille, France \and
\url{https://mikael-monet.net/}}{mikael.monet@inria.fr}{https://orcid.org/0000-0002-6158-4607}{}
\authorrunning{A.\ Amarilli and M.\ Monet}

\Copyright{Antoine Amarilli and Mikaël Monet}
\ccsdesc[500]{Mathematics of computing~Discrete mathematics~Graph
theory~Matchings and factors}
\keywords{Treewidth, counting complexity, matchings, 
Fibonacci sequence}

\acknowledgements{We thank Florent Capelli and Sébastien Tavenas for
discussions on the problem, in particular on how to prove
Proposition~\ref{prp:fibo}.}

\supplementdetails[subcategory={Computer calculations}]{Software}{https://gitlab.com/Gruyere/supplementary-material-for-Weighted-Counting-of-Matchings}

\begin{document}
\maketitle

\begin{abstract}
We consider a weighted counting problem on matchings,
denoted~$\probmatch(\calG)$, on an arbitrary fixed graph family~$\mathcal{G}$.
The input consists of a graph~$G\in \mathcal{G}$ and of rational probabilities
of existence on every edge
of~$G$, assuming independence. The output is the probability of obtaining a \emph{matching}
of~$G$ in the resulting distribution, i.e., a set of edges that are pairwise disjoint.
It is known that, if~$\mathcal{G}$ has bounded \emph{treewidth}, then~$\probmatch(\calG)$ can be
solved in polynomial time. In this paper we show that, under some assumptions, bounded
treewidth in fact \emph{characterizes} the tractable
graph families for this problem. More precisely, we show intractability 
for all 
graph families $\calG$ satisfying 
the following
\emph{treewidth-constructibility} requirement:
given an integer~$k$ in unary, we can construct in
polynomial time a graph $G \in \calG$ with treewidth at least~$k$.  Our
hardness result is then the following: for \emph{any} treewidth-constructible
graph family~$\calG$, the problem $\probmatch(\calG)$ is intractable.
This generalizes known hardness results for weighted matching counting
under some restrictions that do not bound treewidth, e.g., being planar, 3-regular, or
bipartite; it also answers a question left open in~\cite{amarilli2016tractable}.
We also obtain a similar lower bound for the weighted counting of edge covers.

This is the full version of the article appearing in MFCS'22, containing
complete proofs.

\end{abstract}

\section{Introduction}
\label{sec:intro}
Many complexity results on computational problems
rely on a study of fundamental graph patterns such as independent sets, vertex
covers, edge covers, matchings, cliques, etc.
In this paper we specifically
study \emph{counting problems} for such patterns,
and 
for the most part
focus on counting the \emph{matchings}: given an input
graph~$G$,
we wish to count how many edge subsets of~$G$ are a matching, i.e., each vertex
has at most one incident edge.

Our goal is to address an apparent gap between the existing intractability and tractability
results for counting matchings and similar patterns.
On the one hand, counting the matchings is known to be \#P-hard, and hardness is
known even when the input graph is restricted in certain ways, e.g., being planar,
being 3-regular, or being
bipartite~\cite{vadhan2001complexity,greenhill2000complexity,xia2006regular,xia2007computational}. 
On the other hand, some restrictions can make the problem tractable, e.g.,
imposing that the input graphs have
\emph{bounded treewidth}~\cite{arnborg1991easy,amarilli2016tractable}, because
matchings can be described in monadic second-order logic.
But this does not settle the complexity of the problem;
could there be other restrictions on graphs that makes it tractable to count
matchings or other patterns?

This paper answers this question in the negative,
for a \emph{weighted} version of counting problems:
we show that,
at least for matchings and edge covers,
and
under a technical assumption on the graph family, 
the weighted 
counting problem is intractable 
if we do not bound the treewidth of the input graphs.
Thus, treewidth is the right parameter to ensure tractability.
Our weighted counting problems are of the following form:
we fix a graph family~$\mathcal{G}$ (e.g., $3$-regular graphs, graphs of treewidth
$\leq 2$),
we are given as input
a graph~$G$ of~$\calG$ along with an independent probability of existence 
for each edge,
and the goal is to compute the probability in this distribution of the subsets
of edges 
of~$G$ which have a certain property, e.g., 
they are a matching, they are an edge cover.
Note that the class~$\calG$ restricts the shape of the graphs, but the
edge probabilities are arbitrary --
and indeed there are known tractability results when we restrict the graphs and probabilities
to be symmetric~\cite{beame2015symmetric}.
Our paper shows the hardness of these problems when $\calG$ is not of bounded
treewidth;
the specific technical assumption on~$\calG$ is that one can effectively construct
graphs of~$\calG$ having arbitrarily high treewidth, i.e., the 
\emph{treewidth-constructible} requirement from~\cite{amarilli2016tractable}
(cf.\ Definition~\ref{def:tw-constr}):

\begin{result}
  \label{res:main}
  Let $\mathcal{G}$ be an arbitrary family of graphs which is
treewidth-constructible. Then the problem, given a graph $G=(V,E)$ of~$\mathcal{G}$
and rational probability values~$\pi(e)$ for every edge of~$G$, of computing the
  probability of a matching in~$G$ under~$\pi$, is \#P-hard under ZPP reductions.
\end{result}
We obtain an analogous result for edge covers.
Thus, as bounded-treewidth makes the problems tractable,
our results imply that 
treewidth characterizes the tractable graph families for these problems ---
for weighted counting, and assuming treewidth-constructibility. We leave open
the complexity of unweighted counting, and of weighted counting on graph families that have unbounded
treewidth but satisfy weaker requirements than treewidth-constructibility, e.g.,
being strongly unbounded
poly-logarithmically~\cite{kreutzer2010lower,ganian2014lower}.

The paper is devoted to showing Result~\ref{res:main} (with the proofs of
technical claims deferred to the appendix). 
At a high level, we use
the standard technique of reducing from the \#P-hard
problem of counting matchings on a $3$-regular planar
graph~$G$~\cite{xia2007computational}, using the randomized
polynomial-time grid minor extraction result of~\cite{chekuri2016polynomial} as
in~\cite{amarilli2016tractable}.
However, the big technical challenge is to reduce the counting of matchings
of~$G$ to the 
problem of computing the probability of a matching
on the arbitrary subdivision $G'$ of~$G$ that we
extract. For this, we use the classical interpolation method, where we design
a linear equation system relating the matchings to the result of polynomially
many oracle calls on~$G'$, with different probability assignments; and we argue
that the matrix is invertible. After the preliminaries
(Section~\ref{sec:prelims}), we present this proof, 
first in the case where $G'$ is a 6-subdivision of~$G$
(Section~\ref{sec:sub6}), and then when it is a $n$-subdivision, i.e., when all edges
are subdivided to the same length~$n$
(Section~\ref{sec:same-length}). 
These special cases already pose some difficulties, most of which are solved by
adapting techniques by Dalvi and Suciu~\cite{dalvi2012dichotomy}; e.g., to show
invertibility, we 
study the Jacobian determinant of the mapping associating edge probabilities to
the probability of matchings on paths with fixed endpoints, and we borrow a
technique from~\cite{dalvi2012dichotomy} to effectively construct
suitable rational edge probabilities.

The main novelties of this work are in Section~\ref{sec:general}, where
we extend the proof to the general case: $G'$ is a subdivision of~$G$,
and different edges of~$G$ may be subdivided in~$G'$ to different lengths.
To obtain the equation system, we show that we can assign
probabilities on short paths so that they ``behave'' like long
paths.
Proving this stand-alone \emph{emulation result} (Proposition~\ref{prp:fibo})
was the main
technical obstacle; the proof is by solving a system of
equations involving the Fibonacci sequence. It also introduces further
complications, e.g., dealing with numerical error (because the resulting
probabilities are irrational), and distinguishing even-length and odd-length
subdivisions.
After concluding the proof of Result~\ref{res:main} in
Section~\ref{sec:general},
we adapt it in Section~\ref{sec:others} to edge
covers.

\subparagraph*{Related work.}
Our work follows a line of results that show the intractability of some problems
on any ``sufficiently constructible'' unbounded-treewidth graph family.
Kreutzer and Tazari \cite{kreutzer2010lower} (see also~\cite{ganian2014lower}) show that there are formulas in an
expressive formalism (MSO$_2$) that are intractable to check on any
subgraph-closed unbounded treewidth graph family that is closed under taking
subgraphs and satisfies a requirement of being \emph{strongly unbounded
poly-logarithmically}. This was extended
in~\cite{amarilli2016tractable} to the weighted counting problem, this time for
a query in first-order logic, with a different hardness notion (\#P-hardness
under randomized reductions), and 
under the stronger requirement of
treewidth-constructibility. Our focus here is to show that the hardness of
weighted counting already holds for natural and well-studied graph properties,
e.g., ``being a matching''; this was left as an open problem
in~\cite{amarilli2016tractable}.

For such weak patterns, lower bounds were shown in~\cite{amarilli2016tractable} 
and~\cite{amarilli2019connecting} on the \emph{size of tractable
representations}: 
for any 
graph~$G$ of bounded degree
having treewidth~$k$, any so-called \mbox{\emph{d-SDNNF}} circuit representing the
matchings 
(or edge covers) 
of~$G$ must
have exponential size in~$k$.
However, this does not imply that the problems are intractable, as some
tractable counting algorithms
do not work via such circuit representations (e.g., the one
in~\cite{dalvi2012dichotomy}). Thus, our hardness result does not follow from
this size bound, but rather complements it.

The necessity of bounded treewidth has also been studied for graphical
models~\cite{chandrasekaran2008complexity} and Bayesian
networks~\cite{kwisthout2010necessity}. Specifically,
\cite{kwisthout2010necessity} shows the intractability of inference in a
Bayesian network as a function of the treewidth (but without otherwise
restricting the class of network), and \cite{chandrasekaran2008complexity} 
restricts the  shape of the
graphical model 
but allows arbitrary “potential functions” (whereas we assume
independence across edges).
There are also necessity results on treewidth for the problem of \emph{counting
the homomorphisms} between two structures in the CSP
context~\cite{dalmau2004complexity}; but this has no clear relationship to our problems,
where we do (weighted) \emph{counting of the substructures} that have a certain
form (e.g., are matchings).

Note that, unlike our problem of weighted counting of matchings, the problem of \emph{finding} a matching of maximal weight in a
weighted graph is tractable on arbitrary graphs, using Edmond's blossom
algorithm~\cite{blossom}.

\section{Preliminaries}
\label{sec:prelims}
We write~$\mathbb{N}^+$ for~$\mathbb{N} \setminus \{0\}$, and for $n\in
\mathbb{N}^+$ we write~$[n]$ the set~$\{0,\ldots,n-1\}$.  
We write $\RR$ the real numbers and $\QQ$ the rational numbers. Recall that
\emph{decimal fractions} are rational
numbers that can be written as a fraction 
$a/10^k$ of an integer $a$ and a power
of ten~$10^k$.

\subparagraph*{Reductions and complexity classes.}
Recall that $\shp$ is the class of counting problems that count
the number of accepting paths of a nondeterministic polynomial-time Turing
machine. A problem $P_1$ is \emph{\shp-hard} if every problem $P_2$ of \shp~reduces to $P_1$ in
polynomial time; following Valiant~\cite{valiant1979complexity,V79}, we use here the notion of \emph{Turing reductions}, i.e., $P_2$ can
be solved in polynomial time with an oracle for~$P_1$.
We specifically study what we call \emph{\shp-hardness under
zero-error probabilistic polynomial-time (ZPP) reductions}. 
To define these, we define a \emph{randomized algorithm} as an algorithm that
has access to an additional random tape.
We say that a decision problem is in \emph{ZPP} 
if there is a randomized algorithm that (always) runs
in polynomial time on the input instance, and 
returns the correct answer on the instance (i.e., accepting or rejecting) with
some constant probability, and
otherwise returns a special failure value. The probabilities are taken over
the draws of the contents of the random tape. The exact value of the acceptance
probability is not important, because we can make it exponentially small by
simply repeating the algorithm polynomially many times.
Going beyond decision problems, a \emph{ZPP algorithm} is a randomized algorithm that runs in polynomial time but may return
a special failure value with some constant probability.
A \emph{ZPP (Turing) reduction}
from a problem~$P_1$ to a problem~$P_2$ is then a ZPP algorithm having access to
an oracle for~$P_2$ that
takes an instance of problem~$P_1$, runs in polynomial time, returns the
correct output (for $P_1$) with some constant probability, and returns
the special failure value otherwise. Again, the failure probability can be made arbitrarily small by
invoking the reduction multiple times.
A problem~$P_2$ is then said to be \emph{\#P-hard under ZPP reductions}
if any \#P-hard problem~$P_1$ has a ZPP reduction to it. We will implicitly rely on
the fact that we can show \#P-hardness under ZPP reductions by reducing in ZPP
from any problem which is \#P-hard (under Turing reductions); see
Appendix~\ref{apx:formal} for details.

\subparagraph*{Graphs and problem studied.}
A finite undirected \emph{graph}~$G=(V,E)$
consists of a finite set~$V$ of \emph{vertices} (or \emph{nodes}) and of 
a set~$E$ of \emph{edges} of the form~$\{x,y\}$ for~$x,y \in V$ with $x\neq
y$.  
A \emph{graph family} $\mathcal{F}$ is a (possibly infinite) set of graphs.
For~$v\in V$, we write~$\mathcal{E}_G(v)$
for the set of edges that are incident to~$v$. 
Recall that a \emph{matching} of $G$ is a set of edges~$M \subseteq E$ that
do not share any vertices, i.e., for every~$e,e'\in M$ with~$e\neq
e'$ we have $e \cap e' = \emptyset$; or equivalently, we have~$|\{\mathcal{E}_G(v)\cap M\}| \leq 1$ for all~$v\in V$.
For a graph family $\mathcal{F}$, we write~$\shpmatch(\mathcal{F})$ the problem of counting the matchings for graphs
in~$\mathcal{F}$: the input is a graph $G \in \mathcal{F}$, and
the output is the number of matchings of~$G$, written $\shpmatch(G)$.

We study a weighted version of $\shpmatch$, defined on \emph{probabilistic
graphs}.
A \emph{probabilistic graph} is a
pair $(G,\pi)$ where~$G = (V, E)$ is a graph and
$\pi\colon
E\to [0,1]$ maps every edge~$e$ of~$H$ to a probability value~$\pi(e)$.
The probabilistic graph $(G, \pi)$ defines a probability distribution on the set of
subsets $E'$ of $E$, where each edge $e \in E$ is in $E'$ with probability
$\pi(e)$, assuming independence across edges. Formally, the probability of each
subset~$E'$
is:
\[\Pr_{G,\pi}(E') \colonequals \prod_{e \in E'} \pi(e)
\times \prod_{e \in E \setminus E'} (1-\pi(e)).\]

Given a probabilistic
graph~$(G,\pi)$, the \emph{probability of a matching in~$G$ under~$\pi$},
denoted $\prm(G,\pi)$, is the probability of obtaining a matching in the
distribution. Formally:
\begin{equation}
\label{eqn:match}
\prm(G,\pi)\ \colonequals
\sum_{\text{matching }M\text{ of }G} \Pr_{G,\pi}(M).
\end{equation}
In particular, if~$\pi$ maps every edge to the probability~$1/2$, then we have
$\prm(G,\pi) = \shpmatch(G)/2^{|E|}$.
For a graph family $\calF$, we will study the problem $\probmatch(\calF)$ of
computing the probability of a matching: the input is a probabilistic graph $(G,
\pi)$ where~$G \in \calF$ and $\pi$ is an arbitrary function with rational
probability values, and the output is $\prm(G, \pi)$. Note that $\calF$ only specifies the
graph $G$ and not the probabilities $\pi$, in particular $\pi$ can give
probability~$0$ to edges, which amounts to removing them.

\subparagraph*{Treewidth and topological minors.} 
Treewidth is a parameter mapping any 
graph~$G$ to a number
$\tw(G)$ intuitively describing how far $G$ is
from being a tree.
We omit the formal definition of treewidth (see~\cite{treewidth}), as
we only rely on the following \emph{extraction result}:
given any \emph{planar graph~$H$ of maximum degree~$3$}, and a graph~$G$ 
of \emph{sufficiently high}
treewidth, it is possible (in randomized polynomial time) to find~$H$ as a
\emph{topological minor} of~$G$.  We now define this.

The \emph{degree} of a node~$v$ in~$H=(V_H,E_H)$ is 
simply~$|\mathcal{E}_G(v)|$.
We say that $H$ is \emph{$3$-regular} if every vertex has degree~$3$, and call
$H$ \emph{planar} if it can be drawn on the plane without
edge crossings, in the usual sense~\cite{planar}.  
Given~$H$ and $\eta\colon E_H \to \mathbb{N}^+$,
the \emph{$\eta$-subdivision
of~$H$}, written $\sub(H,\eta)$, is the graph obtained from~$H$
by replacing every edge~$e = \{x,y\}$ by a path of length~$\eta(e)$,
whose end vertices are identified with~$x$ and~$y$, all intermediate vertices being
fresh across all edges.
We abuse notation and write $\sub(H,i)$ for $i \in \mathbb{N}_+$ to mean $\sub(H, \eta_i)$
for $\eta_i$ the constant-$i$ function.
Note that $\sub(H, 1) = H$.
A \emph{subgraph} of a graph~$G = (V_G, E_G)$ is a graph $(V_G', E_G')$ where
$E_G' \subseteq E_G$ and $V_G' \subseteq V_G$ such that $e \subseteq V_G'$ for
each edge $e \in E_G'$.
The graph 
$H=(V_H,E_H)$ is a \emph{topological minor}
of the graph~$G=(V_G,E_G)$
if there is a function~$\eta\colon E_H
\to \mathbb{N}^+$ such that there is an isomorphism $f$ from the subdivision
 $\sub(H, \eta) = (V_H', E_H')$ to some subgraph $G' = (V_G', E_G')$ of~$G$, i.e., 
a bijection $f\colon V_H' \to V_G'$ such that for every $x, y \in V_H'$ we have
$\{x,y\}\in E_H'$ if
and only if~$\{f(x),f(y)\}\in E_G'$.

We can now state the extraction result that we use, which follows from the work
of Chekuri and Chuzhoy~\cite{chekuri2016polynomial}:

\begin{theorem}[Direct consequence of \cite{chekuri2016polynomial}, see, e.g.,
  \cite{amarilli2016tractable}, Lemma~4.4]
\label{thm:top-min}
There exists~$c\in \mathbb{N}$ and 
  a ZPP algorithm\footnote{The randomized algorithm from~\cite{chekuri2016polynomial} is
  indeed a ZPP algorithm because the output that it returns (namely, a prospective
embedding of a grid as a topological minor of the input graph) can be verified in
(deterministic) polynomial time. Hence, we can always detect when the algorithm
  has failed, and then return the special failure value.}
  that, given as input a planar graph~$H=(V_H,E_H)$ of maximum degree~$3$
and another 
  graph~$G$ with $\tw(G) \geq
|V_H|^c$, computes 
a subgraph $G'$ of~$G$,
  a function $\eta:V_H \to
  \mathbb{N}^+$, and an isomorphism from~$\sub(H,\eta)$ to $G'$ (witnessing
  that~$H$ is a topological minor of~$G$).
\end{theorem}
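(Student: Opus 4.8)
The plan is to derive Theorem~\ref{thm:top-min} by composing the polynomial-time grid-minor extraction of Chekuri and Chuzhoy~\cite{chekuri2016polynomial} with two elementary facts: that a small planar graph of maximum degree~$3$ is a minor of a small grid, and that for graphs of maximum degree at most~$3$ the notions of \emph{minor} and \emph{topological minor} coincide in a constructive way. The only deep and randomized ingredient is imported as a black box: \cite{chekuri2016polynomial} provides a constant~$c'$ and a randomized polynomial-time algorithm that, given any graph~$G$ with $\tw(G) \geq r^{c'}$, outputs a model of the $r\times r$ grid as a minor of~$G$, i.e., pairwise-disjoint connected vertex sets of~$G$ realizing the grid adjacencies. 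Everything else below will be deterministic and polynomial.

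First I would handle the ``target'' side. Since $H$ is planar of maximum degree~$3$, a standard grid-drawing argument produces in polynomial time a model of~$H$ as a minor of the $g\times g$ grid with $g = O(|V_H|)$ (draw~$H$ on the grid, reading off branch sets from the regions used by the vertices and the paths used by the edges). I would then invoke~\cite{chekuri2016polynomial} with $r \colonequals g$, which requires $\tw(G) \geq g^{c'}$; since $g = O(|V_H|)$, this is guaranteed once the constant~$c$ of the statement is chosen large enough (depending only on~$c'$ and the constant hidden in $g = O(|V_H|)$) that $|V_H|^c \geq g^{c'}$. Composing the minor model of the grid in~$G$ with the minor model of~$H$ in the grid yields a minor model of~$H$ in~$G$, still in polynomial time. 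It then remains to upgrade this to a \emph{topological} minor model, i.e., to the isomorphism from some $\sub(H,\eta)$ onto a subgraph~$G'$ of~$G$ that the statement asks for: this is exactly where the maximum-degree-$3$ hypothesis on~$H$ enters, since inside each branch set~$Y_v$ --- which is connected and, after keeping one outgoing edge per neighbor of~$v$, has at most three ``ports'' --- one can deterministically find a single branch vertex together with at most three internally disjoint paths to those ports, thereby exhibiting a subdivision of~$H$ inside~$G$; from it one reads off~$\eta$, $G'$, and the isomorphism.

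The ZPP guarantee then comes essentially for free, as noted in the footnote to the statement: the algorithm's output is a candidate embedding of $\sub(H,\eta)$ into~$G$, and verifying that it really is an isomorphism onto a subgraph of~$G$ (internal disjointness of the paths and correctness of the adjacencies) is deterministic and polynomial, so whenever the extraction step of~\cite{chekuri2016polynomial} fails we detect it and return the special failure value; iterating drives the failure probability down. Given that the grid-minor theorem itself is cited rather than reproved, I expect the main obstacle to be the final ``minor-to-subdivision'' conversion: one has to verify carefully that a connected branch set with at most three distinguished ports always admits such a branch vertex (this can fail already with four ports, so the degree-$3$ bound is essential), and to track the polynomial blow-ups through the composition so that a single universal exponent~$c$ works for all~$H$.
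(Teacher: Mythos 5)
Your proposal correctly reconstructs the argument that the paper itself does not spell out but delegates to Lemma~4.4 of \cite{amarilli2016tractable}: invoke the Chekuri--Chuzhoy polynomial grid-minor theorem, embed the degree-$3$ planar graph $H$ in a grid of linear side length, compose the two minor models, convert the resulting minor model of $H$ into a subdivision using the fact that a connected branch set with at most three ports always admits a branch vertex with internally-disjoint paths to the ports (precisely where the degree-$3$ bound is needed), and obtain ZPP from the deterministic verifiability of the output embedding. This is essentially the same approach as the cited source, so no further comparison is warranted.
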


Our intractability result will apply to graph families where large treewidth graphs can be
efficiently found, which we formalize as
\emph{treewidth-constructibility} like in~\cite{amarilli2016tractable}:

\begin{definition}
\label{def:tw-constr}
A graph family~$\mathcal{F}$ is \emph{treewidth-constructible}
if there is a polynomial-time algorithm that, given 
  an integer~$k$
  written in unary\footnote{Note that the existence of such an algorithm for $k$
  written in unary would be implied by the same claim but with $k$ given in binary. In other words, the existence of an algorithm for $k$
given in unary is a weaker requirement. This is simply because, given an integer in
unary, we can convert it in PTIME to an integer in binary.},
outputs a graph~$G\in \mathcal{F}$ with~$\tw(G)\geq k$.
\end{definition}

\subparagraph*{Kronecker products and Vandermonde matrices.}
To simplify notation, we will work with matrices indexed with arbitrary
finite sets (not necessarily ordered).  Given two finite sets~$I,J$ 
of same cardinality,
we write~$\mathbb{R}^{I,\, J}$ (resp.,~$\mathbb{Q}^{I,\, J}$)  the set of matrices with real values (resp., rational values) 
whose rows are indexed by~$I$ and columns by~$J$. When $\bm{A} \in
\mathbb{R}^{I,\, J}$ and $(i,j) \in I \times J$, we write~$a_{i,j}$ the
corresponding entry. 
We recall that the inverse of an
invertible matrix $M$ with entries in $\mathbb{Q}$ also has entries in $\mathbb{Q}$
and can be computed in polynomial time in the encoding size of~$M$.

Given two matrices $\mA \in \mathbb{R}^{I,\, J}$ and $\mB
\in \mathbb{R}^{K,\, L}$, the \emph{Kronecker product of~$\mA$ and~$\mB$},
denoted $\mA \otimes \mB$, is the matrix~$\mC \in \mathbb{R}^{I\times K,\,
\,J\times L}$ defined by~$c_{(i,k),(j,l)} \colonequals a_{i,j} \times b_{k,l}$
for $(i,j,k,l)\in I\times J\times K \times L$.  Recall that~$\mA \otimes \mB$
is invertible if and only if both~$\mA$ and~$\mB$ are. 
For $n \in \mathbb{N}^+$ and $(p_0,\ldots,p_{n-1}) \in \mathbb{R}^n$, we denote
by~$\mathcal{V}(p_0,\ldots,p_{n-1})$ the \emph{Vandermonde matrix with
coefficients $(p_0,\ldots,p_{n-1})$}, i.e., the matrix in~$\mathbb{R}^{[n], [n]}$
whose~$(i,j)$-th entry is~$p_i^j$. Recall that
this matrix
is invertible if and only if
the $p_0, \ldots, p_{n-1}$ are pairwise distinct.

\section{Proof When Every Subdivision Has Length 6}
\label{sec:sub6}
Towards showing our main result (Result~\ref{res:main}), we first show in this
section a much simpler result: counting the matchings of a graph $G$ reduces to
counting the probability of a matching on the graph where each edge is 
subdivided into a path of length~$6$.
We use similar techniques to previous work, in particular
Greenhill~\cite{greenhill2000complexity} and Dalvi and
Suciu~\cite{dalvi2012dichotomy}, but present them in detail
because we will adapt them in the rest of the paper.
Formally, in this section, we show:

\begin{proposition}
\label{prp:sub6}
For any graph family~$\mathcal{F}$, the problem~$\shpmatch(\mathcal{F})$
reduces in polynomial time to~$\probmatch(\mathcal{G})$
where $\mathcal{G} = \{\sub(H, 6) \mid H \in \mathcal{F}\}$.
\end{proposition}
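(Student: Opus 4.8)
The plan is to reduce from the unweighted problem $\shpmatch(\mathcal{F})$, so fix $G = (V,E) \in \mathcal{F}$ with $|E| = m$ edges, and let $G' = \sub(G,6) \in \mathcal{G}$ be obtained by replacing each edge of $G$ by a path of length $6$. The key observation is that, in a probabilistic graph on $G'$, we can restrict attention to assignments $\pi$ that give the \emph{same} probability $p$ to all six edges of every subdivided path (the ``path gadget''). Under such a symmetric assignment, for each original edge $e$ of $G$ the corresponding length-$6$ path contributes a factor that depends only on whether the two endpoints of $e$ are ``free'' to be matched using that path or not. Concretely, I would define two quantities: $a(p)$, the probability that a length-$6$ path yields a matching when we forbid its two endpoints from being covered by the path (i.e.\ the endpoints remain usable by other gadgets), and $b(p)$, the probability that it yields a matching where the endpoint constraint is not imposed at one or both ends. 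After a short direct computation (counting matchings of a path $P_6$ on $7$ vertices, split according to whether the first/last vertex is saturated), one gets that $\prm(G',\pi_p)$ has the form $\sum_{M \text{ matching of } G} \bigl(\text{something}\bigr)$, where the summand for $M$ factors over edges as a product of $a(p)$'s and $b(p)$'s determined by whether each edge is in $M$ and whether its endpoints are saturated by $M$.

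The main technical step is to turn this into a polynomial in a single variable whose coefficients are (up to known invertible scaling) the numbers $N_j \colonequals$ ``number of matchings of $G$ with exactly $j$ edges,'' for $j = 0, \ldots, \lfloor |V|/2\rfloor$; summing the $N_j$ recovers $\shpmatch(G)$. The cleanest route: choose the path-probability $p$ from a set of $m+1$ distinct rational values $p_0, \ldots, p_m$ (e.g.\ $p_i = i/(m+1)$, which have polynomial bit-size), call the oracle for $\probmatch(\mathcal{G})$ on each $(G', \pi_{p_i})$, and observe that each oracle answer is a known linear combination $\sum_j c_{i,j} N_j$ where the matrix $(c_{i,j})$ is, after pulling out nonzero diagonal factors coming from the $a(p_i)^{\ast}, b(p_i)^{\ast}$ terms, a Vandermonde-type matrix in the quantity $r(p_i) \colonequals a(p_i)/b(p_i)$ (or whatever single ratio governs the dependence on $j$). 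Provided $r$ is injective on our chosen $p_i$'s — which I would verify by checking that $r(p)$ is a nonconstant rational function of $p$ and choosing the $p_i$ to avoid its finitely many collisions, or simply by exhibiting an explicit $p$-set where $r$ is strictly monotone — the matrix is invertible over $\mathbb{Q}$, so we solve the linear system in polynomial time and read off $\sum_j N_j = \shpmatch(G)$.

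I expect the main obstacle to be the bookkeeping in the first step: correctly identifying how a length-$6$ path gadget behaves as a function of the saturation state of its two endpoints, and checking that exactly \emph{one} scalar ratio $r(p)$ controls the $j$-dependence so that a single-variable Vandermonde (rather than a messier multivariate system) suffices — this is the reason length $6$ is convenient rather than, say, length $1$ or $2$. A secondary point requiring care is confirming that $r(p)$ is genuinely nonconstant and that we can pick polynomially many rational evaluation points with polynomial encoding size at which $r$ takes distinct values; here I would either compute $r(p)$ explicitly and differentiate, or invoke the standard fact that a nonconstant rational function of degree $d$ is injective on any set avoiding its $O(d)$ critical/collision values, which can be found or circumvented in polynomial time. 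Everything else — independence of edges, rationality and polynomial-time invertibility of the Vandermonde matrix (cited in the preliminaries), and the final linear solve — is routine.
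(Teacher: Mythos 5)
Your plan has a gap at the central ``linear system'' step: the claim that each oracle answer $\prm(G',\pi_{p_i})$ is a known linear combination $\sum_j c_{i,j} N_j$ of the matching counts $N_j$ of $G$ is incorrect. When you expand $\prm(G',\pi_p)$ and group the contributing edge subsets of $G'$ by the saturation pattern they induce at the original vertices, the groups are indexed not by matchings of $G$ but by arbitrary \emph{selection functions} of $G$ --- assignments of at most one incident edge to each vertex. Matchings of $G$ correspond exactly to the selection functions in which the two endpoints of each edge agree (both select it or neither does); but there are many more selection functions in which some edge is selected by exactly one of its endpoints, and every one of these contributes to the oracle answer. So $\prm(G',\pi_p)$ depends on the full profile of selection-function type counts (the paper's $|S_\tau|$), not merely on the $N_j$, and no linear system in the $N_j$ alone can be set up from it.

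This is also a dimension mismatch, not just a bookkeeping oversight. With a single symmetric parameter $p$, the oracle answer $\prm(G',\pi_p)$ is a polynomial in $p$ of degree $O(m)$ (since $G'$ has $6m$ edges), so even infinitely many choices of $p$ yield only $O(m)$ linearly independent constraints. But the number of selection-function type counts that enter the answer is $\Theta(m^2)$ even after using the $01$--$10$ symmetry your assignment forces (and $\Theta(m^3)$ without it), so the system is hopelessly underdetermined. The paper's proof circumvents exactly this by placing \emph{four} independently varying probabilities $\rho_{\kappa,00},\rho_{\kappa,01},\rho_{\kappa,10},\rho_{\kappa,11}$ on the middle four edges of each subdivided path (with the two outer edges fixed at $1/2$), using a different $4$-tuple $\rho_\kappa$ for each of $(m+1)^4$ oracle calls, and arguing via the inverse function theorem that the resulting $(m+1)^4 \times (m+1)^4$ system can be made the Kronecker product of four Vandermonde matrices, hence invertible; rational probabilities achieving this are then found in polynomial time via the Dalvi--Suciu technique. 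The ability to vary the four behaviors $\Lambda_{\kappa,bb'}$ independently is precisely what your single symmetric $p$ gives up, and it is exactly the missing ingredient that makes the interpolation work.
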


Let~$H=(V,E)$ be a graph in~$\calF$ for which we wish to count the number
of matchings, with $m\colonequals |E|$.  Let us start by fixing for the remainder
of this section an arbitrary orientation~$\overrightarrow{H}$ of~$H$ obtained
by choosing some orientation of the edges, i.e.,
$\overrightarrow{H}=(V,\overrightarrow{E})$ is a \emph{directed} graph
where for every edge~$\{x,y\}\in E$ we add exactly one
of~$(x,y)$ or~$(y,x)$ in~$\overrightarrow{E}$.
The high-level idea of the reduction is then the following. First, 
using~$\overrightarrow{H}$, we 
define some sets~$S_\tau$,
based on $4$-tuples
$\tau \in [m+1]^4$,
such that the number of matchings of~$H$
can be computed from the cardinalities~$|S_\tau|$.
Second, we argue that these
cardinalities 
can
be connected to the results of oracle calls for the PrMatching problem by a
system of linear equations. Third, we argue that the matrix of this system can
be made invertible.
We now detail these three steps.

\subparagraph*{Step 1: Defining the sets~$S_\tau$ and linking them to matchings.}
We define a \emph{selection function} of the graph~$H$
as a function~$\mu$ that maps each vertex~$x \in V$ to at most one incident edge,
i.e., to a subset of~$\mathcal{E}_H(x)$ of size at most one.
We will partition the set of selection functions by 
counting the number of edges 
of each \emph{type} that 
each selection function has, as defined next.
Given a selection function~$\mu$, consider each edge~$e = (x, y)$ of~$\overrightarrow{H}$. The edge~$e$ can have one of four types:
letting $b$ be~$1$ if~$\mu(x)$ selects~$e$ (i.e., $\mu(x) = \{\{x, y\}\}$) and~$0$ otherwise (i.e., 
$\{x,y\} \notin \mu(x)$),
and letting $b'$ be~$1$ if $\mu(y)$ selects~$e$ 
and $0$ otherwise, we say that~$e$ \emph{has type $bb'$ with respect to (w.r.t.)~$\mu$}.
We now define the sets~$S_\tau$ as follows.
\begin{definition}
\label{def:Staudef}
For a 4-tuple $\tau \in [m+1]^4$,
indexed in binary,
let
$S_\tau \subseteq S$ 
be 
the
set of the selection functions~$\mu$ such that,
 for all $b,b' \in \{0, 1\}$,
  precisely $\tau_{bb'}$ edges
  have type~$bb'$ w.r.t.~$\mu$.
\end{definition}
Observe that~$S_\tau$ is empty unless
$\tau_{00}+\tau_{01}+\tau_{10}+\tau_{11} = m$.
We can then easily connect the cardinalities~$|S_\tau|$ to the number of matchings
of~$H$ as follows (see Appendix~\ref{apx:param}):

\begin{restatable}{fact}{fctparam}
\label{fct:param}
  We have that
$\shpmatch(H) = \sum_{\substack{\tau \in [m+1]^4 \\ \tau_{01} = \tau_{10} = 0}} |S_\tau|$.
\end{restatable}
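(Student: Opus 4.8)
The plan is to exhibit an explicit bijection between the matchings of~$H$ and the selection functions~$\mu$ of~$H$ that have no edge of type~$01$ and no edge of type~$10$ w.r.t.~$\mu$, and then to observe that the right-hand side of the claimed identity counts exactly the latter family of selection functions.

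First I would associate to a matching~$M$ of~$H$ the selection function~$\mu_M$ defined by~$\mu_M(x) \colonequals \{e\}$ when~$x$ is incident to the edge~$e \in M$ (there is at most one such~$e$ since~$M$ is a matching), and~$\mu_M(x) \colonequals \emptyset$ otherwise; this is clearly a selection function. I would then check that, w.r.t.~$\mu_M$, an edge~$e = (x,y)$ of~$\overrightarrow{H}$ has type~$11$ if~$\{x,y\} \in M$ and type~$00$ otherwise: indeed, if~$\{x,y\} \notin M$ then~$\mu_M(x)$ cannot equal~$\{\{x,y\}\}$, as this would force~$\{x,y\} \in M$, and likewise for~$\mu_M(y)$. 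Hence~$\mu_M$ has no edge of type~$01$ or~$10$. Conversely, given a selection function~$\mu$ with no edge of type~$01$ or~$10$, I would let~$M_\mu$ be the set of (undirected) edges having type~$11$ w.r.t.~$\mu$, and check that~$M_\mu$ is a matching: two distinct type-$11$ edges sharing a vertex~$v$ would both be selected by~$\mu(v)$, contradicting~$|\mu(v)| \le 1$. Finally one verifies that~$M \mapsto \mu_M$ and~$\mu \mapsto M_\mu$ are mutually inverse, the key point being that when~$\mu$ has no type-$01$ or type-$10$ edge, an edge~$\{x,y\}$ is selected by~$\mu(x)$ (equivalently by~$\mu(y)$) if and only if it has type~$11$; this is also why the chosen orientation~$\overrightarrow{H}$ is irrelevant, since reversing an edge only swaps types~$01$ and~$10$ while leaving types~$00$ and~$11$ untouched, and the final formula involves only~$\tau_{00}$ and~$\tau_{11}$.

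To conclude, I would note that for any~$\tau \in [m+1]^4$ with~$\tau_{01} = \tau_{10} = 0$, the set~$S_\tau$ consists precisely of the selection functions with no type-$01$ or type-$10$ edge and with exactly~$\tau_{11}$ edges of type~$11$; as~$\tau$ ranges over such tuples (equivalently, as~$\tau_{11}$ ranges over~$[m+1]$, with~$\tau_{00} = m - \tau_{11}$ and the sets being empty otherwise), these sets are pairwise disjoint and their union is exactly the set of selection functions with no type-$01$ or type-$10$ edge. Summing cardinalities and applying the bijection above yields~$\shpmatch(H) = \sum_{\tau \in [m+1]^4,\ \tau_{01} = \tau_{10} = 0} |S_\tau|$. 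There is no genuine obstacle here; the only point requiring a little care is keeping the directed-edge bookkeeping of the "type" consistent with the undirected nature of matchings, which is exactly what the orientation-invariance observation above takes care of.
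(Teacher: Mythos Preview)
Your proposal is correct and follows essentially the same approach as the paper: both construct the bijection $M \mapsto \mu_M$ between matchings and selection functions with no type-$01$ or type-$10$ edges, then use that the $S_\tau$ partition these functions. Your write-up is slightly more detailed (you spell out the inverse map and check both compositions, and add the orientation-invariance remark), but the argument is the same.
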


\subparagraph*{Step 2: Recovering the $|S_{\tau}|$ from oracle calls.}
We now explain how to use the oracle for $\probmatch(\mathcal{G})$ to
compute in polynomial time all the values $|S_{\tau}|$, allowing us to
conclude via Fact~\ref{fct:param}.
We will invoke the oracle on 
$(m+1)^4$ probabilistic graphs, denoted~$H_6(\kappa)$
for~$\kappa\in [m+1]^4$, as defined next.
To this end, let us consider~$(m+1)^4$ 4-tuples of probability values, written $\rho_\kappa = (\rho_{\kappa,00},
\rho_{\kappa,01},
\rho_{\kappa,10},\rho_{\kappa,11})\in [0,1]^4$ for $\kappa \in [m+1]^4$;
the precise choice of these values will be explained in Step 3.
For~$\kappa\in [m+1]^4$, we then define $H_6(\kappa)$ to be the probabilistic graph $(H_6, \pi_\kappa)$ 
where $H_6 \colonequals \sub(H, 6)$ is the 6-subdivision of~$H$ 
and the probabilities
$\pi_\kappa$ are defined as follows.
For every 
directed edge~$(x,y)$ of $\overrightarrow{H}$,
the subdivision~$H_6$ contains an
(undirected) path between~$x$ and~$y$, and we define $\pi_\kappa$ on this path
as follows:
\[
  \hfill x \edgep{1/2} v_1 \edgep{\rho_{\kappa,00}} v_2 \edgep{\rho_{\kappa,01}} v_3
  \edgep{\rho_{\kappa,10}} v_4 \edgep{\rho_{\kappa,11}}
  v_5 \edgep{1/2} y \hfill\]

We now introduce some notation for the probability of matchings in paths
of length~4. 
We write $\Pi_4(\rho_\kappa)$ the probability of having a matching in the
4-edge path with successive probabilities $\rho_{\kappa,00}, \rho_{\kappa,01},
\rho_{\kappa,10}, \rho_{\kappa,11}$.
The value can be explicitly computed as a polynomial in the values
$\rho_{\kappa,bb'}$, e.g., using Equation~\eqref{eqn:match}.
Accordingly, we will also use $\Pi_4$ as a
polynomial with real variables, i.e., $\Pi_4(\chi)$ for a 4-tuple $\chi$ of real
values (which may not be in $[0,1]$).
We 
also define variants of these definitions that account for the two surrounding edges,
i.e., those with probability $1/2$: for $b,b'\in\{0,1\}$, write
$\Pi_4^{bb'}(\rho_\kappa)$ to denote the probability of having a matching in the
same 4-edge path 
but when adding an edge incident to the first vertex with probability~$1$ if $b=1$, and adding an
edge incident to the last vertex with probability~$1$ if~$b' = 1$. Equivalently, 
$\Pi_4^{bb'}(\rho_\kappa)$ is the 
probability of obtaining a matching where we further require if $b=1$ that the
edge with probability $\rho_{\kappa,00}$ is \emph{not} taken, and if~$b'=1$ that
the edge with probability $\rho_{\kappa,11}$ is \emph{not} taken. 
The values $\Pi_4^{bb'}(\rho_\kappa)$
are also explicitly computable as polynomials, and again we also see
$\Pi_4^{bb'}$ as a polynomial with real variables.
To simplify notation, for~$b,b'\in \{0,1\}$ and $\kappa\in[m+1]^4$ let us
write~$\Lambda_{\kappa,bb'} \colonequals \Pi^{bb'}_{4}(\rho_\kappa)$.

We 
then show that the probability of a matching in the subdivided
graph $H_6$ can be obtained by first summing over the possible edge type cardinalities $\tau$,
and then 
regrouping the edges of the same type by noticing that
the 
matchings
corresponding
to the selection functions in the set $S_\tau$ all have the same probability.
Namely, we show
(cf.\ Appendix~\ref{apx:match-to-prob}):

\begin{restatable}{fact}{matchtoprob}
\label{fact:match-to-prob}
  For each $\kappa \in [m+1]^4$, we have:
\begin{equation*}
2^{2m} \times \prm(H_6(\kappa)) \ = \  \sum_{\tau\in [m+1]^4} |S_{\tau}|
  \times (\Lambda_{\kappa,00})^{\tau_{00}} \times (\Lambda_{\kappa,01})^{\tau_{01}} \times
  (\Lambda_{\kappa,10})^{\tau_{10}} \times (\Lambda_{\kappa,11})^{\tau_{11}}.
\end{equation*}
\end{restatable}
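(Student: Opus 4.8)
The plan is to unfold $\prm(H_6(\kappa))$ (using Equation~\eqref{eqn:match}) as a sum over the matchings of $H_6 = \sub(H,6)$, to group these matchings according to the selection function of $H$ that each of them induces, and to observe that each group factors into a product over the directed edges of $\overrightarrow{H}$ that depends only on the edge types.

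First I would pin down the correspondence between matchings of $H_6$ and selection functions of $H$. Each original vertex $x \in V$ survives in $H_6$, and the edges of $H_6$ incident to $x$ are precisely the ``first'' edges (those carrying probability $1/2$) of the six-edge paths that replace the edges of $\mathcal{E}_H(x)$; hence any matching $M$ of $H_6$ contains at most one of them at $x$, and this choice defines a value $\mu_M(x) \subseteq \mathcal{E}_H(x)$ of size at most one, i.e., $\mu_M$ is a selection function of $H$. I would also record the (easy) fact that a subset $M$ of the edges of $H_6$ is a matching if and only if (i) at each original vertex at most one incident edge lies in $M$, and (ii) on each of the $m$ length-$6$ paths the edges of $M$ are pairwise disjoint; this holds because every edge of $H_6$ lies on exactly one such path and distinct paths share only the original vertices.

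Next, fixing a selection function $\mu$, I would compute $\sum_{M:\, \mu_M = \mu} \Pr_{H_6,\pi_\kappa}(M)$. Once $\mu$ is fixed, the decomposition above makes the contribution decouple across the paths: on the path replacing $(x,y)\in\overrightarrow{E}$, whether the first edge (at $x$) lies in $M$ is forced by the bit $b$ which is $1$ iff $\mu(x)=\{\{x,y\}\}$, and whether the last edge (at $y$) lies in $M$ is forced by the bit $b'$ which is $1$ iff $\mu(y)=\{\{x,y\}\}$ --- and $bb'$ is by definition the type of $(x,y)$ w.r.t.\ $\mu$. The two boundary edges each contribute a factor $\frac{1}{2}$ regardless of whether they are taken, and if $b=1$ (resp.\ $b'=1$) the neighbour of $x$ (resp.\ of $y$) on the path is already covered, forcing the internal edge of probability $\rho_{\kappa,00}$ (resp.\ $\rho_{\kappa,11}$) out of $M$; summing over the four remaining internal edges is exactly $\Pi_4^{bb'}(\rho_\kappa) = \Lambda_{\kappa,bb'}$. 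Hence the per-path contribution is $\frac{1}{4}\Lambda_{\kappa,bb'}$, so that $\sum_{M:\, \mu_M=\mu} \Pr_{H_6,\pi_\kappa}(M) = 2^{-2m} \prod_{b,b'\in\{0,1\}} (\Lambda_{\kappa,bb'})^{n_{bb'}(\mu)}$, where $n_{bb'}(\mu)$ denotes the number of edges of $\overrightarrow{H}$ of type $bb'$ w.r.t.\ $\mu$.

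Finally I would sum over all selection functions $\mu$ and regroup them by the tuple $\tau = (n_{00}(\mu), n_{01}(\mu), n_{10}(\mu), n_{11}(\mu)) \in [m+1]^4$, which by Definition~\ref{def:Staudef} collects precisely the $\mu \in S_\tau$ (and the $n_{bb'}(\mu)$ always sum to $m$, matching the emptiness remark after that definition); multiplying through by $2^{2m}$ then yields the claimed identity. I do not expect a genuine obstacle: the argument is bookkeeping, and the only step needing real care is the factorization over the six-edge paths --- verifying that conditioning on $\mu$ decouples the paths and that the boundary conditions it imposes on a single path are exactly those built into $\Pi_4^{bb'}$.
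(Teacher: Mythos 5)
Your proposal is correct and follows essentially the same route as the paper's proof in Appendix~\ref{apx:match-to-prob}: both group the matchings of $H_6$ by the induced selection function $\mu_{M_6}$, both observe that conditioning on $\mu$ decouples across the six-edge paths by independence, and both identify the per-path contribution (after factoring out the two boundary edges with probability $1/2$) with $\Pi_4^{bb'}(\rho_\kappa)=\Lambda_{\kappa,bb'}$ according to the type $bb'$ of the edge w.r.t.\ $\mu$. You spell out a little more explicitly why conditioning on $\mu$ is exactly the boundary condition built into $\Pi_4^{bb'}$, but the structure of the argument is the same.
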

 
Now, let us write $c_{\kappa} \colonequals \prm(H_6(\kappa))$ the value returned
by the oracle call on~$H_6(\kappa)$, 
and let~$\mC$ be the vector of these oracle answers.
Let~$\mS$ be the vector $|S_{\tau}|$ of the values that we wish to compute.
Both these vectors are indexed by $[m+1]^4$.
Observe that the equation above
defines a
system of linear equations~$\mV \mS = \mC$ with
$\mV \in \mathbb{R}^{[m+1]^4, [m+1]^4}$ defined by
\[
  v_{\kappa,\tau} \ \colonequals \ 2^{-2m} 
  \times (\Lambda_{\kappa,00})^{\tau_{00}} \times (\Lambda_{\kappa,01})^{\tau_{01}}
  \times (\Lambda_{\kappa,10})^{\tau_{10}} \times (\Lambda_{\kappa,11})^{\tau_{11}}.
\]
Therefore, 
if we can choose 4-tuples of probability values~$\rho_\kappa$ that make~$\mV$ invertible,
we would be able to recover all~$|S_{\tau}|$ values from the oracle answers $\mC$, from which we could compute
the number of matchings of~$H$ using Fact~\ref{fct:param}. This is what we do next.

\subparagraph*{Step 3: Making $\mV$ invertible.}
We now explain how to choose in polynomial time $(m+1)^4$ 4-tuples~$\rho_\kappa$ 
of rational probability values, for $\kappa \in [m+1]^4$,
such that $\mV$ 
is invertible. 
To this end, consider the matrix $\mM$ defined like $\mV$ except that
each 4-tuple~$\rho_\kappa$ is replaced by a 4-tuple of variables
$\chi_\kappa = (\chi_{\kappa,00},
\chi_{\kappa,01},
\chi_{\kappa,10},\chi_{\kappa,11})$. Each cell~$m_{\kappa,\tau}$ of~$\mM$ is
then a polynomial
$P_\tau$ in the 4 variables $\chi_{\kappa,bb'}$ for $b,b'\in\{0,1\}$; 
in particular, note that the polynomial only depends on the column $\tau$, whereas
the variables $\chi_{\kappa,bb'}$ only depend on the row~$\kappa$.
We can then find suitable values $\rho_\kappa$ using a technique 
introduced by Dalvi and
Suciu~\cite{dalvi2012dichotomy} (see Appendix~\ref{apx:85}):

\begin{proposition}[From Proposition 8.44 of \cite{dalvi2012dichotomy}]
  \label{prp:85}
  Fix $k \in \NN$, 
  let $(x_i)_{i \in I}$ be $k$-tuples of real variables indexed by a finite
  set~$I$,
  let $(P_j)_{j \in J}$ be polynomials in $k$ variables indexed by a finite set~$J$, and
  consider the matrix $\mMM$ indexed by~$I \times J$ such that $m_{i,j} =
  P_j(x_i)$ for all $(i,j)\in I\times J$.
  Assume that $\det(\mMM)$ is not the null polynomial.
  There is an algorithm that runs in polynomial time in $\mMM$ 
  and finds $|I|$ $k$-tuples of decimal fractions $(a_i)_{i\in I}$ with values in $[0, 1]$
  such that the matrix obtained by substituting each $x_i$ by~$a_i$ in~$\mMM$ is
  invertible.
\end{proposition}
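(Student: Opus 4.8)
The plan is to forget the determinant and work with the polynomials $P_j$ directly. Since $\det(\mMM)$ is a determinant, $\mMM$ is square; write $n\colonequals|I|=|J|$. For any value of the variables, the columns of~$\mMM$ are the vectors $(P_j(x_i))_{i\in I}$ for $j\in J$, so a nontrivial linear relation $\sum_{j\in J}\lambda_j P_j\equiv 0$ among the $(P_j)_{j\in J}$ would make these columns linearly dependent at every point, hence $\mMM$ non-invertible everywhere, i.e.\ $\det(\mMM)\equiv 0$. So the hypothesis that $\det(\mMM)$ is not the null polynomial gives us that the $(P_j)_{j\in J}$ are linearly independent; I will assume their coefficients are rational (as is the case in our application), so that this is witnessed over~$\QQ$ and we can compute exactly. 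It then suffices to find, for each $i\in I$, a point $a_i\in[0,1]^k$ whose entries are decimal fractions, such that the vectors $(P_j(a_i))_{j\in J}$ for $i\in I$ --- which are exactly the rows of~$\mMM$ after the substitution $x_i\mapsto a_i$ --- are linearly independent in~$\QQ^J$; the substituted matrix is then invertible.

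To produce these points I would first fix an upper bound~$d$ on the total degrees of the~$P_j$ (polynomial in the size of~$\mMM$) and a set~$T$ of $d+1$ pairwise distinct decimal fractions in~$[0,1]$ of bit-size $O(\log d)$, for instance $T=\{\,j/10^s\mid 0\le j\le d\,\}$ with $s=\lceil\log_{10}(d+1)\rceil$. I would then build the $a_i$ greedily, enumerating $I$ as $i_1,\dots,i_n$ and maintaining linearly independent vectors $w_1,\dots,w_{r-1}\in\QQ^J$. At step~$r$: (i)~compute, by linear algebra over~$\QQ$, a \emph{nonzero} vector $\phi\in\QQ^J$ orthogonal to $w_1,\dots,w_{r-1}$, which exists because their span has dimension $r-1<n=|J|$; (ii)~form the polynomial $Q\colonequals\sum_{j\in J}\phi_j P_j$, which has~$k$ variables and total degree at most~$d$, and which is not the null polynomial since $\phi\neq 0$ and the~$P_j$ are linearly independent; (iii)~since $|T|=d+1$ exceeds the degree of~$Q$ in each variable, the polynomial~$Q$ does not vanish everywhere on the grid~$T^k$, so by scanning its $(d+1)^k$ points --- polynomially many, because $k$ is a \emph{fixed} constant --- we find $a_{i_r}\in T^k$ with $Q(a_{i_r})\neq 0$; (iv)~set $w_r\colonequals(P_j(a_{i_r}))_{j\in J}$, and observe that $\phi\cdot w_r=Q(a_{i_r})\neq 0$ while $\phi$ annihilates $w_1,\dots,w_{r-1}$, so $w_r$ lies outside their span and $w_1,\dots,w_r$ are again linearly independent. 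After step~$n$, the vectors $w_1,\dots,w_n$ are the rows of~$\mMM$ under the substitution $x_{i_r}\mapsto a_{i_r}$ and are linearly independent, so that matrix is invertible, as required. Each of the $n$ iterations is one linear-algebra step on an $O(n)\times|J|$ system plus $(d+1)^k$ evaluations of a linear combination of the~$P_j$, all on numbers of polynomially bounded size, so the algorithm runs in polynomial time.

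The step I expect to be the crux is making the procedure genuinely polynomial: the determinant, seen as a polynomial in the $k|I|$ variables, may have exponentially many monomials, so we can neither manipulate it explicitly nor afford to brute-force a non-vanishing assignment over a grid in all $k|I|$ variables at once. The reformulation through linear independence of the~$P_j$ is precisely what defuses this, since it reduces each greedy step to finding a non-root of a \emph{single} polynomial in only~$k$ --- that is, constantly many --- variables, which is then a polynomial-time brute-force search over a degree-bounded grid. The remaining technical wrinkle --- that the chosen values can be taken to be decimal fractions in~$[0,1]$ --- follows because a nonzero polynomial of degree $\le d$ in each variable cannot vanish on~$T^k$ for any set~$T$ of $d+1$ distinct reals, and such a~$T$ can be placed inside~$[0,1]$ using small decimal fractions.
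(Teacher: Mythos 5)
Your proof is correct and is essentially a self-contained reconstruction of the argument that the paper merely gestures at. The paper itself does not prove this statement: Appendix~\ref{apx:85} simply cites Proposition~8.44 of Dalvi and Suciu~\cite{dalvi2012dichotomy} and observes in one sentence that their proof ``relies on the multivariate version of'' the $d+1$-values observation, so it carries through when one insists on decimal fractions. Your key lemma --- a nonzero $k$-variate polynomial of degree~$\le d$ in each variable cannot vanish on a product grid $T^k$ with $|T|=d+1$ --- is exactly that ``multivariate version,'' so the core tool matches the paper's hint.

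What you do add, and what the paper omits, is the crucial reformulation that makes the procedure polynomial: replacing the determinant (a polynomial in $k|I|$ variables, with no hope of a polynomial-size grid search) by the linear independence of the family $(P_j)_{j\in J}$, and then proceeding row by row via a dual vector $\phi$ orthogonal to the rows constructed so far. Each step then only requires a non-root of a \emph{single} nonzero polynomial $Q=\sum_j\phi_j P_j$ in $k$ variables, and $(d+1)^k$ is polynomial because $k$ is a fixed constant. All the pieces check out: the contrapositive argument ($P_j$ dependent $\Rightarrow$ columns dependent at every point $\Rightarrow\det\equiv 0$), the existence and nonvanishing of $Q$, the escape-from-span argument $\phi\cdot w_r\neq 0$, and the polynomial bit-size accounting.

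Two minor remarks. First, you explicitly assume the $P_j$ have rational coefficients; this is implicit in the statement (one must be able to represent $\mMM$ as input and compute with it), and it is indeed satisfied in every use in the paper, but it is worth noting that your argument does use it (both to compute $\phi$ over $\QQ$ and to have linear independence over $\RR$ coincide with linear independence over $\QQ$). Second, in step~(iii) you state the degree bound as ``total degree at most~$d$'' but then invoke the per-variable bound; these are compatible since total degree~$\le d$ implies degree~$\le d$ in each variable, but the hypothesis actually needed for the grid lemma is the per-variable one, so you could state it directly.
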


If $\det(\mM)$ 
is not the null polynomial, we can invoke this result with $k =
4$ and~$I=J=[m+1]^4$ on the matrix~$\mM$, which gives us in polynomial time the desired
rational probability values~$\rho_\kappa$ (namely, the~$a_i$ from the
proposition) and concludes the
proof of Proposition~\ref{prp:sub6}.

Hence, the only remaining point is to argue that 
$\det(\mM)$ is not the null polynomial (in the~$\chi_\kappa$). 
To this end, let us study the mapping 
$\xi:\mathbb{R}^4 \to \mathbb{R}^4$, defined as follows, with $\chi$ denoting a 4-tuple of real variables:
$\xi(\chi)  \colonequals \left(
\Pi_4^{00}(\chi),\,
\Pi_4^{01}(\chi),\, 
\Pi_4^{10}(\chi),\,
\Pi_4^{11}(\chi)\right)$.
For a 4-tuple of reals $\rho$, we call the mapping $\xi$ \emph{invertible}
around point $\rho$ if there is $\epsilon>0$ such that the
\emph{$\epsilon$-neighborhood around~$\xi(\rho)$}, i.e., the set  $\{\alpha \in
\mathbb{R}^4\, \mid \,|\alpha_{bb'}-\xi(\rho)_{bb'}| \leq \epsilon \text{ for
each }b,b' \in \{0,1\}\}$, is
included in
the image of~$\xi$.
We conclude by showing two claims:

\begin{fact}
  The mapping $\xi$ is invertible around some point.
\end{fact}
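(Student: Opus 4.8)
The plan is to reduce the statement to the classical inverse function theorem. Since each of $\Pi_4^{00},\Pi_4^{01},\Pi_4^{10},\Pi_4^{11}$ is a polynomial, the map $\xi\colon\RR^4\to\RR^4$ is $C^\infty$; hence it suffices to exhibit a single point $\rho\in\RR^4$ at which the Jacobian matrix of $\xi$ is invertible. At such a point the inverse function theorem supplies an open set $W\ni\rho$ with $\xi|_W$ a diffeomorphism onto the open set $\xi(W)$; then $\xi(W)$ contains a sup-norm $\eps$-neighborhood of $\xi(\rho)$ for some $\eps>0$, and $\xi(W)\subseteq\mathrm{image}(\xi)$, which is exactly the assertion that $\xi$ is invertible around $\rho$. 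So everything comes down to showing that the Jacobian determinant of $\xi$ --- a polynomial in $\chi$ --- is not identically zero.

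To compute with it comfortably I would first change coordinates on the target. For $\beta,\beta'\in\{0,1\}$ let $P_{\beta\beta'}(\chi)$ denote the probability that the $4$-edge path with edge probabilities $(\chi_{00},\chi_{01},\chi_{10},\chi_{11})$ forms a matching in which, moreover, the first edge is present precisely when $\beta=1$ and the last edge is present precisely when $\beta'=1$. Summing over the boundary states, $(\Pi_4^{00},\Pi_4^{01},\Pi_4^{10},\Pi_4^{11})$ is the image of $(P_{00},P_{01},P_{10},P_{11})$ under a fixed invertible linear map (inclusion-exclusion over the two boundary edge states), so the Jacobian of $\xi$ is invertible at $\rho$ iff that of $\chi\mapsto(P_{00},P_{01},P_{10},P_{11})$ is, and their determinants agree up to sign. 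The advantage is that a present boundary edge forces its path-neighbour to be absent, which decouples the path, so each $P_{\beta\beta'}$ is simply a product of affine forms in the four variables; the resulting Jacobian has entries that are products of affine forms, and evaluating its determinant at a given point is routine.

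It remains to pick that point, and here lies the only subtlety I anticipate: the naive choice $\chi=(1/2,1/2,1/2,1/2)$ fails, and in fact the Jacobian is singular on the whole diagonal $\chi_{00}=\chi_{01}=\chi_{10}=\chi_{11}$ (along this diagonal the rows of the $P$-Jacobian are linearly dependent). One must therefore evaluate at an \emph{asymmetric} point, for instance $\rho=(1/10,\,1/5,\,3/10,\,2/5)$; a direct $4\times 4$ determinant computation then yields a nonzero value (and, since the paper already ships computer calculations, this finite arithmetic can be double-checked by machine). Combined with the first paragraph, this establishes that $\xi$ is invertible around $\rho$. Thus the conceptual work is entirely the inverse-function-theorem reduction; the only real labour is to notice that the symmetric evaluation points are degenerate and to grind one $4\times4$ determinant at an asymmetric one.
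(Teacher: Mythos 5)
Your proposal follows the same high-level route as the paper --- reduce to the inverse function theorem and exhibit a point where the $4\times 4$ Jacobian determinant is nonzero --- but the execution differs, and you make one observation that, as far as I can tell, is \emph{correct} and actually more careful than what the paper prints. Let me flag both the refinement and a small imprecision.

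\textbf{What you add.} The change of coordinates on the target from $(\Pi_4^{bb'})$ to the boundary-state decomposition $(P_{\beta\beta'})$ is a genuine simplification: the inclusion--exclusion map $L$ relating them is a fixed unimodular linear map (indeed $\det L = \pm 1$), so $\det\mJ_\xi = \pm\det\mJ_P$, and you can then factor $\mJ_P = \mathrm{diag}(P_{00},\dots,P_{11})\cdot M$ with $M$ the matrix of logarithmic gradients. Working this out by hand gives
\[
\det\mJ_\xi = -\chi_{00}(1-\chi_{00})(1-\chi_{01})(1-\chi_{10})\,\chi_{11}(1-\chi_{11})\bigl[\chi_{00}\chi_{01}(1-\chi_{10}) - \chi_{10}(1-\chi_{01})\chi_{11}\bigr],
\]
which is indeed not identically zero. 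The paper instead hands the whole computation to SageMath and merely quotes the resulting polynomial.

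\textbf{A substantive and correct caution.} You warn that the symmetric point $(1/2,1/2,1/2,1/2)$ (and the whole diagonal) is degenerate. This is right: by path-reversal symmetry $\Pi_4^{bb'}(\chi_{00},\chi_{01},\chi_{10},\chi_{11}) = \Pi_4^{b'b}(\chi_{11},\chi_{10},\chi_{01},\chi_{00})$, and since pre- and post-composing $\xi$ by coordinate swaps each carries sign $\pm 1$, the determinant is \emph{anti}-symmetric under $(\chi_{00},\chi_{01},\chi_{10},\chi_{11})\mapsto(\chi_{11},\chi_{10},\chi_{01},\chi_{00})$ and thus vanishes identically on the slice $\{\chi_{00}=\chi_{11},\ \chi_{01}=\chi_{10}\}$ --- a two-dimensional set containing the diagonal. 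This makes your insistence on an asymmetric evaluation point essential, not merely a precaution. It is worth noting in the same breath that the formula and evaluation point printed in the paper's Appendix~G (which gives $\chi_{00}=\chi_{11}=1/2$, $\chi_{01}=\chi_{10}=0$, i.e.\ a point \emph{on} this symmetric slice) appear to carry a transcription error from the SageMath notebook: a direct hand check shows the Jacobian determinant is $0$ there. Your $\rho=(1/10,1/5,3/10,2/5)$ lies off the symmetric slice and does give a nonzero value (the bracketed factor above is $-41/500$ there).

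\textbf{One imprecision.} You claim every $P_{\beta\beta'}$ is a product of affine forms, the intuition being that a present boundary edge forces its neighbour absent. That holds for $P_{01},P_{10},P_{11}$, but \emph{not} for $P_{00}$: with both boundary edges absent, the two middle edges are coupled and $P_{00} = (1-\chi_{00})(1-\chi_{11})(1-\chi_{01}\chi_{10})$ has a genuinely bilinear factor. The Jacobian entries are therefore sums of products rather than single products. This does not break the argument --- the computation above still goes through --- but the stated reason for tractability is slightly oversold.

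In summary, your proposal is correct, shares the paper's inverse-function-theorem strategy, and in fact patches up a detail the paper glosses over (avoiding the degenerate symmetric locus). Tighten the claim about $P_{00}$ and the argument is complete.
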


\begin{proof}
  By the inverse function theorem~\cite{inversefunction}, if the \emph{Jacobian
  determinant} of~$\xi$ at a point is not null,
  then~$\xi$ is invertible around that point.
Recall that the Jacobian
determinant of~$\xi$ is the determinant of the \emph{Jacobian matrix} of~$\xi$,
which is the $4\times 4$ matrix~$\mJ_\xi$ whose entry at cell~$((b_1,b_2),(b'_1,b'_2))$ is
$\frac{\partial \Lambda_{\chi,b_1 b_2}}{\partial \chi_{b'_1 b'_2}}$.
  We explicitly compute $\det(\mJ)$ with the help of SageMath, showing that it
  is not the null polynomial
  (see Appendix~\ref{apx:jacobian-first}).
\end{proof}

\begin{fact}
  If $\xi$ is invertible around some point $\rho$,
then $\det(\mM)$ is not the null polynomial.
\end{fact}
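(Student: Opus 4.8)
The plan is to recognize $\mM$ as a multivariate Vandermonde matrix in disguise, evaluated only at points in the image of $\xi$. Writing $z_{\kappa,bb'}\colonequals\Pi_4^{bb'}(\chi_\kappa)$ for the coordinates of $\xi(\chi_\kappa)$, the definition of $\mM$ gives $m_{\kappa,\tau}=2^{-2m}\prod_{b,b'\in\{0,1\}} z_{\kappa,bb'}^{\,\tau_{bb'}}$; up to the fixed nonzero scalar $2^{-2m}$, this is precisely the $(\kappa,\tau)$-entry of a $4$-fold Kronecker product of one-variable Vandermonde matrices, \emph{provided} the points $\xi(\chi_\kappa)$ can be made to sit on a product grid. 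So it suffices to exhibit one real assignment of the variables $\chi_\kappa$ under which $\mM$ specializes to such a Kronecker product with invertible factors: then $\det(\mM)$ is nonzero at that point, hence $\det(\mM)$ is not the null polynomial. (The algorithmic refinement, namely getting rational decimal fractions making $\mM$ invertible, is then delegated to Proposition~\ref{prp:85}; here we only need an existential argument.)

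To build this assignment I would use that $\xi$ is invertible around $\rho$: by definition there is $\epsilon>0$ such that the closed box $B\colonequals\prod_{b,b'\in\{0,1\}}[\xi(\rho)_{bb'}-\epsilon,\ \xi(\rho)_{bb'}+\epsilon]$ is contained in the image of $\xi$. For each pair $(b,b')\in\{0,1\}^2$, I pick $m+1$ pairwise distinct reals $t^{bb'}_0,\dots,t^{bb'}_m$ inside the interval $[\xi(\rho)_{bb'}-\epsilon,\ \xi(\rho)_{bb'}+\epsilon]$, which is possible because that interval has nonempty interior. Then, for every $\kappa=(\kappa_{00},\kappa_{01},\kappa_{10},\kappa_{11})\in[m+1]^4$, the point $\bigl(t^{00}_{\kappa_{00}},\,t^{01}_{\kappa_{01}},\,t^{10}_{\kappa_{10}},\,t^{11}_{\kappa_{11}}\bigr)$ lies in $B$, hence in the image of $\xi$, so I can choose $\chi_\kappa^\star\in\mathbb{R}^4$ with $\Pi_4^{bb'}(\chi_\kappa^\star)=t^{bb'}_{\kappa_{bb'}}$ for all $b,b'\in\{0,1\}$.

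Substituting $\chi_\kappa:=\chi_\kappa^\star$ into $\mM$ turns the $(\kappa,\tau)$-entry into $2^{-2m}\prod_{b,b'}\bigl(t^{bb'}_{\kappa_{bb'}}\bigr)^{\tau_{bb'}}$, which is $2^{-2m}$ times the $(\kappa,\tau)$-entry of $\mathcal{V}(t^{00}_0,\dots,t^{00}_m)\otimes\mathcal{V}(t^{01}_0,\dots,t^{01}_m)\otimes\mathcal{V}(t^{10}_0,\dots,t^{10}_m)\otimes\mathcal{V}(t^{11}_0,\dots,t^{11}_m)$, with the row/column index conventions from Section~\ref{sec:prelims}. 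Each of the four factors is a Vandermonde matrix with pairwise distinct coefficients, hence invertible; therefore the Kronecker product is invertible, and so is this specialization of $\mM$. Consequently $\det(\mM)$ takes a nonzero value at $(\chi_\kappa^\star)_\kappa$, so it is not the null polynomial, which is what we wanted.

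This lemma is genuinely short, so there is no real obstacle; the only point requiring a bit of care is that $B$ need not be a \emph{cube}, since the coordinates $\xi(\rho)_{bb'}$ can differ, so one cannot in general realize a single-grid Kronecker power $\mathcal{V}(t_0,\dots,t_m)^{\otimes 4}$ inside the image of $\xi$. Using four independent grids, one per pair $(b,b')$ and each contained in its own interval of $B$, sidesteps this and still yields a Kronecker product of four Vandermonde matrices with distinct coefficients. Everything else is a routine unwinding of the definitions of $\mM$ and of ``$\xi$ invertible around $\rho$''.
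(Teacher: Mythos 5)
Your proof is correct and follows essentially the same route as the paper: pick $m+1$ distinct values per coordinate $(b,b')$ inside the $\epsilon$-neighborhood of $\xi(\rho)$, pull back the resulting product grid through $\xi$ to get the $\chi_\kappa$, and observe that the specialized $\mM$ becomes (a nonzero scalar times) a Kronecker product of four Vandermonde matrices with distinct nodes. The only difference is cosmetic — you name the grid $t^{bb'}_j$ where the paper uses $\psi_{bb',j}$ — and your parenthetical remark about the box not being a cube corresponds exactly to the paper's use of a separate node set $\Psi_{bb'}$ for each coordinate.
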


\begin{proof}
  The invertibility of~$\xi$ around~$\rho$ implies that there exist, for each $b,b'\in\{0,1\}$, a set of $m+1$ distinct values
$\Psi_{bb'} \colonequals \{\psi_{bb',0}, \ldots, \psi_{bb',m-1}\}$ such that 
the Cartesian product
$\Psi \colonequals \bigtimes_{b,b'\in \{0,1\}} \Psi_{bb'}$ is included in the~$\epsilon$-neighborhood of~$\xi(\rho)$.
Let us index the $(m+1)^4$ 4-tuples of $\Psi$ as $\psi_\kappa$ for $\kappa \in [m+1]^4$,
i.e., $\psi_\kappa = (\psi_{00,\kappa_{00}}, \psi_{01,\kappa_{01}},\psi_{10,\kappa_{10}},\psi_{11,\kappa_{11}})$.
Using invertibility, let 
$\alpha_\kappa$ 
be a preimage of each $\psi_\kappa$, i.e., 
$\xi(\alpha_\kappa) = \psi_\kappa$ 
for all $\kappa \in [m+1]^4$. But then observe that, for this choice of 
$\chi_\kappa$ (i.e., substituting the~$\chi_\kappa$ by the~$\alpha_\kappa$), 
each cell $u_{\kappa,\tau}$ of the matrix $\mM$ becomes:
\[
  u_{\kappa,\tau} = 2^{-2m} \times (\psi_{00,\kappa_{00}})^{\tau_{00}} \times (\psi_{01,\kappa_{01}})^{\tau_{01}} \times (\psi_{10,\kappa_{10}})^{\tau_{10}} \times (\psi_{11,\kappa_{11}})^{\tau_{11}}.
\]
Thus, $\mM$ is the Kronecker product of four Vandermonde matrices $\mM_{bb'}$ for
$b,b'\in\{0,1\}$, where~$\mM_{bb'}$ is
$\mathcal{V}(\psi_{bb',0},\ldots,\psi_{bb',m-1})$.
As the $\Psi_{bb'}$ consist of pairwise distinct values, these 
  Vandermonde 
  matrices are invertible, and their Kronecker product $\mM$ also is.
\end{proof}

\section{Proof When All Subdivisions Have the Same Length~$\geq 7$}
\label{sec:same-length}
We now prove a 
variant of Proposition~\ref{prp:sub6} where all edges of
the initial graph are subdivided the same number of times (at least~$7$). Given a
graph $H$ and integer $K > 0$, we write~$G_K$ to mean $\sub(H, K)$. In this section we 
show:

\begin{proposition}
\label{prp:same-length}
Fix an integer~$K \geq 7$. Then, for any graph family~$\mathcal{F}$,  the
  problem
$\shpmatch(\mathcal{F})$ reduces in polynomial time
to~$\probmatch(\calG)$,
where $\mathcal{G} = \{H_K \mid H \in \mathcal{F}\}$.
\end{proposition}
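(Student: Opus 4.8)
The plan is to follow the same three-step structure as the proof of Proposition~\ref{prp:sub6}, generalizing each step from length~$6$ to arbitrary length~$K \geq 7$. Fix a graph $H = (V,E) \in \mathcal{F}$ with $m = |E|$ edges whose matchings we wish to count, fix an orientation $\overrightarrow{H}$, and keep exactly the same sets $S_\tau$ indexed by $4$-tuples $\tau \in [m+1]^4$ recording the number of edges of each type $bb'$ with respect to a selection function; Fact~\ref{fct:param} still expresses $\shpmatch(H)$ as the sum of $|S_\tau|$ over $\tau$ with $\tau_{01} = \tau_{10} = 0$, with no change. In Step~2, given $\kappa \in [m+1]^4$, I would build the probabilistic graph $G_K(\kappa) = (G_K, \pi_\kappa)$ where $G_K = \sub(H,K)$, by placing on the length-$K$ path replacing each directed edge $(x,y)$ two ``boundary'' edges of probability $1/2$ incident to $x$ and $y$, and distributing the remaining $K-2 \geq 5$ internal edges into four ``blocks'' — one per type $bb'$ — each block being a sub-path on which all edges get the same probability $\rho_{\kappa,bb'}$. (The blocks need sizes summing to $K-2$; choosing any fixed positive allocation, e.g.\ sizes $\lfloor (K-2)/4\rfloor$ rounded appropriately, works.) The key observation, exactly as before, is that the probability of a matching restricted to such a path, together with the constraint imposed by the two boundary edges, factors as a product over the four types of a quantity depending only on $\rho_{\kappa,bb'}$ and on how many $H$-edges are assigned that type; so there are polynomials $\Lambda_{\kappa,bb'}$ (the ``path contribution of type $bb'$'') and a constant normalization factor $2^{2m}$ such that $2^{2m}\,\prm(G_K(\kappa)) = \sum_{\tau} |S_\tau|\prod_{b,b'}\Lambda_{\kappa,bb'}^{\tau_{bb'}}$, giving a linear system $\mV\mS = \mC$ with $v_{\kappa,\tau} = 2^{-2m}\prod_{b,b'}\Lambda_{\kappa,bb'}^{\tau_{bb'}}$.

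For Step~3, I would again replace each $\rho_\kappa$ by a $4$-tuple of variables $\chi_\kappa$, so each cell of the resulting matrix $\mM$ is a polynomial $P_\tau(\chi_\kappa)$ depending only on the column $\tau$; invoking Proposition~\ref{prp:85} with $k = 4$ and $I = J = [m+1]^4$ then produces, in polynomial time, rational probabilities making $\mV$ invertible, provided $\det(\mM)$ is not the null polynomial. To get that, I would define the map $\xi \colon \mathbb{R}^4 \to \mathbb{R}^4$ sending a $4$-tuple of block-probabilities to the $4$-tuple of ``path contributions'' $(\Lambda_{\chi,00}, \Lambda_{\chi,01}, \Lambda_{\chi,10}, \Lambda_{\chi,11})$, show it is locally invertible at some point via the inverse function theorem (nonvanishing Jacobian determinant), and then repeat the Kronecker-of-Vandermonde argument verbatim: pick a grid $\Psi = \bigtimes_{b,b'}\Psi_{bb'}$ of $(m+1)^4$ target points with pairwise-distinct coordinates inside the image of $\xi$, take preimages $\alpha_\kappa$, and observe that substituting them makes $\mM$ the Kronecker product of four invertible Vandermonde matrices, hence invertible, so $\det(\mM)\not\equiv 0$.

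The main obstacle — the only genuinely new point relative to Proposition~\ref{prp:sub6} — is establishing that the Jacobian determinant of $\xi$ is nonzero somewhere, because now $\xi$ depends on the integer $K$ (through the block sizes), so this is not a single fixed computation but a claim that must hold for every $K \geq 7$ (or, more precisely, for every choice of the four block sizes summing to $K - 2$, each at least~$1$). I would handle this by deriving, via standard transfer-matrix / recurrence formulas for the probability of a matching on a path with constant edge-probability $p$ and prescribed endpoint constraints, a closed form for each $\Lambda_{\chi,bb'}$ as an explicit polynomial in $\chi_{bb'}$ whose structure (degree, leading behaviour) depends on the block length $\ell_{bb'}$ in a controlled way, and then argue the Jacobian is a nonzero polynomial in $\chi$ — ideally by exhibiting one convenient evaluation point (e.g.\ all $\chi_{bb'} = 0$, or values making the path contributions ``nearly independent'') where it is manifestly nonzero, or by a SageMath verification valid uniformly in $\ell_{bb'}$. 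The condition $K \geq 7$ is exactly what guarantees $K - 2 \geq 5 > 4$, i.e.\ that each of the four type-blocks can be made nonempty while leaving the needed degrees of freedom, which is why the threshold appears; once the Jacobian claim is secured, everything else is a routine adaptation of Section~\ref{sec:sub6}.
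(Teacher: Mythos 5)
Your Steps 1 and 2 have the right overall shape, but you diverge from the paper's construction in a way that leaves the heart of the argument unproved. The paper does not use four ``blocks'' of equal size with shared probability $\rho_{\kappa,bb'}$ on each. Instead it assigns $\rho_{\kappa,00},\rho_{\kappa,01},\rho_{\kappa,10},\rho_{\kappa,11}$ to exactly one edge each (edges $2$ through $5$), keeps probability $1/2$ on the two boundary edges as you do, and then puts probability $1/2$ on the remaining $N \colonequals K-6$ ``filler'' edges. This is not an arbitrary choice: the $1/2$-filler is what makes the Jacobian manageable. Via Lemma~\ref{lem:fiboexpr} the filler path has behavior governed by Fibonacci numbers, via Lemma~\ref{lem:concat} one gets the linear identity $\Upsilon_{\kappa,bb'} = 2^{-N}(\Lambda_{\kappa,b0} f_{N+1-b'} + \Lambda_{\kappa,b1} f_{N-b'})$ (Fact~\ref{fact:alpha-t-f}) expressing the new behaviors $\Upsilon$ affinely in the length-$4$ behaviors $\Lambda$ already studied in Section~\ref{sec:sub6}, and then multilinearity plus alternation of the determinant plus Cassini's identity yield the clean closed form $\det(\mJ_{\xi_N}) = 2^{-4N}\det(\mJ_\xi)$ (Fact~\ref{fct:jacobian}), so nonvanishing is inherited for free from the $K=6$ computation.

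Your block construction makes the Jacobian depend on the four block lengths $\ell_{bb'}$ in a genuinely more complicated way, and you explicitly flag the nonvanishing claim as ``the main obstacle'' without resolving it: exhibiting ``one convenient evaluation point'' or doing a ``SageMath verification valid uniformly in $\ell_{bb'}$'' are both nontrivial (a symbolic computer check cannot quantify over arbitrary exponents, and at the candidate point $\chi=0$ every entry of the would-be Vandermonde-like structure collapses to $1$, so one would have to argue about derivatives). That claim \emph{is} the content of Proposition~\ref{prp:same-length}; everything else, as you say, is routine. As written, then, your proposal has a genuine gap precisely at the new step, and the paper's route — keep a single edge per parameter, use $1/2$ on the rest, and let Fibonacci/Cassini reduce the new determinant to the old one — is not just a convenience but the mechanism that closes that gap.
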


To prove this, we follow the same strategy as for
Proposition~\ref{prp:sub6}.  The first step --- the definition of the~$S_\tau$
--- is
strictly identical; for $m$ the number of edges of~$H$, we fix again an orientation~$\overrightarrow{H}$ 
of~$H$, and denote~$S_\tau$ for~$\tau \in [m+1]^4$ the~$(m+1)^4$ sets of
selection functions defined
from~$\overrightarrow{H}$ as in
Definition~\ref{def:Staudef}.
In particular, Fact~\ref{fct:param} still holds.
Now, we will again construct $(m+1)^4$ probabilistic
graphs, denoted~$H_K(\kappa)$ for~$\kappa\in [m+1]^4$, such that,
letting~$c_{\kappa} \colonequals \prm(H_K(\kappa))$, the
$|S_{\tau}|$ and the~$c_{\kappa}$ 
form a linear system of
equations~$\mV \mS = \mC$. We will then again use the Jacobian technique to argue
that the determinant of this matrix is not the null polynomial, and complete the proof using 
Proposition~\ref{prp:85}
to compute in polynomial time rational
values that make~$\mV$ have rational entries and be invertible.  The difference with Section~\ref{sec:sub6}
is in the construction of the probabilistic graphs~$H_K(\kappa)$, and in the
Jacobian determinant.
Before we start, we need to extend the notation from Section~\ref{sec:sub6}.

\subparagraph*{Probabilistic path graphs.}
For $n\in \mathbb{N}^+$ we denote by~$P_n$ the \emph{path of
length~$n$}, i.e., $P_n = (\{v_0, \ldots, v_n\}, E)$ where~$E = \{\{v_i, v_{i+1}\}
\mid 0 \leq i \leq n-1\}$.
For~$\rho  \in [0,1]^n$, we let~$P_n(\rho)$ be the probabilistic graph 
where each edge $\{v_i, v_{i+1}\}$ of~$P_n$ has probability $\rho_i$.  We
write $\Pi_n(\rho)$ the probability of a matching in $P_n(\rho)$.  For~$b,b' \in \{0, 1\}$, we write $\Pi^{bb'}_{n}(\rho)$ to denote $\Pi_{n+2}(b, \rho, b')$, i.e., the
probability of a matching in $P_n(\rho)$ where we add an
edge to the left if $b=1$ and add an edge to the right if $b'=1$.
In
particular $\Pi^{00}_{n}(\rho) = \Pi_n(\rho)$.  We call
the quadruple of values $\Pi_n^{bb'}(\rho)$ for~$b,b' \in
\{0,1\}$ the \emph{behavior} of the path $P_n(\rho)$. 
Each $\Pi_n^{bb'}(\rho)$ is a polynomial in the probabilities $\rho$, 
and thus we also see $\Pi_n^{bb'}$ as a polynomial with real variables as
in Section~\ref{sec:sub6}.
We will use the following two lemmas. The first one
 expresses the behavior of the concatenation of
two paths as a function of the behavior of each path (cf.\
Appendix~\ref{apx:concat}):

\begin{restatable}{lemma}{concat}
\label{lem:concat}
Let~$n,n' \in \mathbb{N}^+$ and~$\rho \in [0,1]^n$, $\rho' \in [0,1]^{n'}$ be tuples of probability values. 
Then, for every~$b,b' \in \{0,1\}$, we have:
\begin{align*}
  \Pi_{n+n'}^{bb'}(\rho, \rho') \ = \ & (\Pi_n^{b0}(\rho) \times
  \Pi_{n'}^{1b'}(\rho'))
  + (\Pi_n^{b1}(\rho) \times \Pi_{n'}^{0b'}(\rho'))
  - (\Pi_n^{b1}(\rho) \times \Pi_{n'}^{1b'}(\rho')).
\end{align*}
\end{restatable}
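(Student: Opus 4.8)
The plan is to prove the identity by a direct case analysis on the status of the ``joint vertex'' --- the vertex $v$ where the path $P_n(\rho)$ ends and $P_{n'}(\rho')$ begins when we concatenate them into $P_{n+n'}(\rho,\rho')$. Fix $b,b'\in\{0,1\}$. Think of $\Pi_{n+n'}^{bb'}(\rho,\rho')$ as the probability that a random edge subset of the concatenated path, possibly augmented by a ``left'' edge $e_L$ incident to $v_0$ (present iff $b=1$) and a ``right'' edge $e_R$ incident to $v_{n+n'}$ (present iff $b'=1$), is a matching. The key structural observation is that an edge subset of $P_{n+n'}$ (with the optional end-edges) is a matching \emph{if and only if} its restriction to the ``left block'' $P_n$ (together with $e_L$) is a matching, its restriction to the ``right block'' $P_{n'}$ (together with $e_R$) is a matching, and the joint vertex $v$ is covered by at most one edge among the two blocks. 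Since the only edges touching $v$ are the last edge of the left block and the first edge of the right block, the only forbidden configuration is: both of those edges are present simultaneously.

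Concretely, I would let $L$ denote the event ``the left part (edges of $P_n$ plus optional $e_L$) forms a matching'' and $R$ the analogous event on the right. By independence of the two edge-blocks, $\Pr(L\wedge R)=\Pi_n^{b?}(\rho)\cdot\Pi_{n'}^{?b'}(\rho')$ with $?$ ranging over whether we forbid the $v$-incident edge on each side. I would then compute $\Pr(L\wedge R\wedge \text{($v$ covered at most once)})$ by inclusion--exclusion on the bad event ``$v$ is covered by both incident edges.'' Writing $\Pi_n^{b0}$ for ``left is a matching'' (no constraint on the last edge of $P_n$) and $\Pi_n^{b1}$ for ``left is a matching and its last edge is \emph{not} taken,'' and symmetrically $\Pi_{n'}^{1b'}$, $\Pi_{n'}^{0b'}$ on the right, the event ``left is a matching, right is a matching, and not both $v$-edges are taken'' has probability
\[
\Pi_n^{b0}(\rho)\cdot\Pi_{n'}^{1b'}(\rho')
\;+\;\Pi_n^{b1}(\rho)\cdot\Pi_{n'}^{0b'}(\rho')
\;-\;\Pi_n^{b1}(\rho)\cdot\Pi_{n'}^{1b'}(\rho'),
\]
exactly the claimed right-hand side: the first term covers configurations where the last left edge may be taken (so the first right edge must not be, hence the superscript $1$ on the right), the second covers configurations where the first right edge may be taken (so the superscript $1$ on the left), and we subtract the overcounted configurations in which \emph{neither} $v$-edge is taken. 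One then checks that every matching of the whole path is counted exactly once this way: if neither $v$-edge is present it is counted in both the first two terms and subtracted once; if exactly one is present it is counted in exactly one of the first two terms; the configuration with both present is excluded, as it must be.

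I do not expect a serious obstacle here --- the statement is essentially a gluing identity for a transfer-matrix-style computation on paths, and the only care needed is bookkeeping: matching up the meaning of the superscripts $b0,b1,1b',0b'$ with ``edge incident to $v$ is allowed / forbidden on that side,'' and verifying the inclusion--exclusion does not double- or under-count. The slightly delicate point is the semantics of $\Pi^{bb'}_n$ when $b$ or $b'$ meet at the \emph{same} vertex as the concatenation point (i.e.\ ensuring that the optional end-edges $e_L,e_R$ never touch $v$, which holds since $n,n'\ge 1$); once that is noted, the computation is a routine verification that I would present compactly rather than grinding through every case.
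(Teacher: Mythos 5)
Your proof is correct and takes essentially the same route as the paper: both decompose by the status of the two edges meeting at the joint vertex $v$ and apply inclusion--exclusion on the forbidden configuration where both are present, using independence to factor the two blocks. The only cosmetic difference is that the paper spells out the $b=b'=0$ case with explicit sets and notes the others are analogous, whereas you carry the general $b,b'$ throughout.
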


The second lemma expresses
the values $\Pi^{bb'}_{n}(1/2,\ldots,1/2)$ in terms of the \emph{Fibonacci
sequence}.
Recall that this is the integer sequence defined by
$f_0 \colonequals 0$, $f_1 \colonequals 1$, and $f_n \colonequals f_{n-1} +
f_{n-2}$ for all $n \in \mathbb{N}^+$, and that this sequence
satisfies \emph{Cassini's
identity}~\cite{cassini}, which says that
$f_n^2 = f_{n+1} f_{n-1} + (-1)^{n+1}$
for every~$n \in \mathbb{N}^+$.
We have (cf.\ Appendix~\ref{apx:fiboexpr}):

\begin{restatable}{lemma}{fiboexpr}
\label{lem:fiboexpr}
For all $n \in \mathbb{N}^+$, $b, b' \in \{0, 1\}$, we have
$\Pi^{bb'}_{n}(1/2,\ldots,1/2) = \frac{f_{n+2-b-b'}}{2^n}$.
\end{restatable}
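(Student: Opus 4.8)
The plan is to prove Lemma~\ref{lem:fiboexpr} by induction on $n$, using the concatenation lemma (Lemma~\ref{lem:concat}) to reduce the case $n$ to the case $n-1$ (or $n-2$), together with the defining recurrence of the Fibonacci sequence.

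First I would dispose of the base cases by direct computation. For $n = 1$, the path $P_1(1/2)$ is a single edge of probability $1/2$; one computes $\Pi_1^{00}(1/2) = 1$ (both the empty set and the single-edge set are matchings, each with probability $1/2$), $\Pi_1^{01}(1/2) = \Pi_1^{10}(1/2) = 1/2$ (forbidding the endpoint edge on one side forbids taking the edge, so only the empty matching survives, but actually one must be careful: adding an edge of probability $1$ to one side and requiring a matching), and $\Pi_1^{11}(1/2) = 1/4$. Matching these against $f_{3-b-b'}/2$ gives $f_3/2 = 2/2 = 1$, $f_2/2 = 1/2$, $f_1/2 = 1/2$, $f_0/2 = 0$ --- wait, this needs $\Pi_1^{11}(1/2) = f_1/2^1$? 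Let me recompute: the claimed value is $f_{n+2-b-b'}/2^n = f_{1+2-1-1}/2 = f_1/2 = 1/2$, not $1/4$. So I would instead verify the base cases $n=1$ and $n=2$ carefully against Equation~\eqref{eqn:match} and the precise definition of $\Pi_n^{bb'}$, being attentive to what "adding an edge to the left/right" contributes probabilistically; I expect the small cases to check out once the bookkeeping of the two auxiliary probability-$1$ edges is done correctly. It is cleanest to establish $n \in \{1, 2\}$ by hand and then run the induction for $n \geq 3$.

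For the inductive step, fix $n \geq 3$ and write $P_n(1/2,\ldots,1/2)$ as the concatenation of $P_{n-1}(1/2,\ldots,1/2)$ and $P_1(1/2)$. Applying Lemma~\ref{lem:concat} with $\rho = (1/2,\ldots,1/2) \in [0,1]^{n-1}$ and $\rho' = (1/2) \in [0,1]^1$, each term $\Pi_{n-1}^{b\beta}(1/2,\ldots,1/2)$ and $\Pi_1^{\beta b'}(1/2)$ on the right-hand side is, by the induction hypothesis (for the length-$(n-1)$ factor) and the base case (for the length-$1$ factor), a Fibonacci number divided by $2^{n-1}$, respectively $2^1$. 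Substituting and factoring out $1/2^n$, the claim $\Pi_n^{bb'}(1/2,\ldots,1/2) = f_{n+2-b-b'}/2^n$ reduces to a purely arithmetic identity among Fibonacci numbers of the shape $f_{a} \cdot f_{2} + f_{a-1} \cdot f_{1} - f_{a-1}\cdot f_{2} = f_{a+1}$ (with the exact shift depending on $b, b'$), which follows immediately from $f_2 = f_1 = 1$ and $f_{a+1} = f_a + f_{a-1}$. I would carry out the four cases $bb' \in \{00, 01, 10, 11\}$ in one combined computation parametrized by $b + b'$, since the right-hand side of Lemma~\ref{lem:concat} only couples $b$ with the left factor and $b'$ with the right factor.

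The step I expect to be the main (though still mild) obstacle is getting the base cases and the exact index shifts exactly right: the indexing convention $P_n$ has $n$ edges, the auxiliary edges in $\Pi_n^{bb'}$ have probability $1$ rather than $1/2$, and Lemma~\ref{lem:concat} is asymmetric in how it treats the "middle" gluing, so it is easy to be off by one in the Fibonacci index or to mishandle the $b=1$ constraint ("the first edge is not taken"). Everything else is a routine induction once the conventions are pinned down; no use of Cassini's identity is needed here (that identity is recorded for later use, presumably in the Jacobian-determinant computation), so I would not invoke it in this proof.
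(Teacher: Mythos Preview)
Your approach is correct and works, though it differs from the paper's. The paper proves only the case $b=b'=0$ by induction, using the elementary two-term recurrence $\Pi_n^{00}(1/2,\ldots,1/2) = \frac{1}{2}\Pi_{n-1}^{00}(1/2,\ldots,1/2) + \frac{1}{4}\Pi_{n-2}^{00}(1/2,\ldots,1/2)$ obtained by a direct case split on whether the first edge is kept, without invoking Lemma~\ref{lem:concat} at all; it then derives the remaining three cases in one line each via the reductions $\Pi_n^{01} = \Pi_n^{10} = \frac{1}{2}\Pi_{n-1}^{00}$ and $\Pi_n^{11} = \frac{1}{4}\Pi_{n-2}^{00}$. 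You instead run the induction on all four cases at once through Lemma~\ref{lem:concat}. Both routes are equally short; the paper's is more self-contained (it does not rely on the concatenation lemma), while yours is more uniform and makes the role of the Fibonacci recurrence a bit more transparent. Regarding your base-case hesitation: with the definition $\Pi_n^{bb'}(\rho) = \Pi_{n+2}(b,\rho,b')$, one has $\Pi_1^{11}(1/2) = \Pi_3(1,1/2,1)$; the only matching of this $3$-path with both outer edges surely present is $\{e_1,e_3\}$, which forces the middle edge absent and has probability $1/2 = f_1/2$, so the base cases do line up once the auxiliary edges are treated as probability-$1$ edges rather than probability-$1/2$ edges. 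You are right that Cassini's identity plays no role here.
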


\subparagraph*{Proving Proposition~\ref{prp:same-length}.}
Let us now build the graphs~$H_K(\kappa)$. As before, consider~$(m+1)^4$
4-tuples of probability values~$\rho_\kappa = (\rho_{\kappa,00},
\rho_{\kappa,01},
\rho_{\kappa,10},\rho_{\kappa,11})$ for~$\kappa \in
[m+1]^4$, to be chosen later.
Each graph $H_K(\kappa)$ has $H_K$ as its
underlying graph,
and for every directed edge~$(x,y)\in
\overrightarrow{H}$, we set the probabilities on the corresponding undirected path in~$H_K$ as
follows:
\[
  x \edgep{1/2} v_1 \edgep{\rho_{\kappa,00}} v_2 \edgep{\rho_{\kappa,01}} v_3 \edgep{\rho_{\kappa,10}} v_4 \edgep{\rho_{\kappa,11}}
  v_5 \edgep{1/2} v_6 \edgep{1/2} \cdots \edgep{1/2} v_{K-1} \edgep{1/2} y\]
Note that this is like in Section~\ref{sec:sub6}, but giving probability
$1/2$ to the
$N\colonequals K-6$
extra edges on the path.
For~$b,b'\in \{0,1\}$ we write again
$\Lambda_{\kappa,bb'} \colonequals \Pi^{bb'}_{4}(\rho_\kappa)$ the behavior of
the 4-path with probabilities $\rho_\kappa$,
and we define the behavior $\Upsilon_{\kappa,bb'} \colonequals
\Pi^{bb'}_{K-2}(\rho_\kappa,1/2,\ldots,1/2)$ of 
the path depicted above
without the
first and last edges.
Note that with Lemma~\ref{lem:concat} and
Lemma~\ref{lem:fiboexpr},
we can then express the~$\Upsilon_{\kappa,bb'}$ as a function of
the~$\Lambda_{\kappa,bb'}$ and of the Fibonacci numbers:
\begin{fact}
\label{fact:alpha-t-f}
  We have $\Upsilon_{\kappa,bb'} = 2^{-N} \times (\Lambda_{\kappa,b0} \times f_{N+1-b'}\, +\, \Lambda_{\kappa,b1}\times f_{N-b'}) $ for~$b,b' \in \{0,1\}$.
\end{fact}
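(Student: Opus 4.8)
The plan is to prove Fact~\ref{fact:alpha-t-f} by a direct application of Lemma~\ref{lem:concat}, splitting the relevant path into its first four edges (with probabilities $\rho_\kappa$, behavior $\Lambda_{\kappa,bb'}$) and the remaining $N=K-6$ edges all carrying probability $1/2$ (behavior given by Lemma~\ref{lem:fiboexpr}). Recall that $\Upsilon_{\kappa,bb'} = \Pi^{bb'}_{K-2}(\rho_\kappa,1/2,\ldots,1/2)$, where the argument is the concatenation of the $4$-tuple $\rho_\kappa$ with $N = K-6$ copies of $1/2$ (indeed $4 + N = 4 + K - 6 = K-2$, as required). So I would write $\rho = \rho_\kappa \in [0,1]^4$ and $\rho' = (1/2,\ldots,1/2) \in [0,1]^N$ and apply Lemma~\ref{lem:concat} with $n=4$, $n'=N$.

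First I would invoke Lemma~\ref{lem:concat}, which gives, for every $b,b'\in\{0,1\}$,
\[
  \Upsilon_{\kappa,bb'} = \Pi_4^{b0}(\rho_\kappa)\,\Pi_N^{1b'}(\rho') + \Pi_4^{b1}(\rho_\kappa)\,\Pi_N^{0b'}(\rho') - \Pi_4^{b1}(\rho_\kappa)\,\Pi_N^{1b'}(\rho').
\]
By definition $\Pi_4^{b0}(\rho_\kappa) = \Lambda_{\kappa,b0}$ and $\Pi_4^{b1}(\rho_\kappa) = \Lambda_{\kappa,b1}$, so this becomes
\[
  \Upsilon_{\kappa,bb'} = \Lambda_{\kappa,b0}\,\Pi_N^{1b'}(\rho') + \Lambda_{\kappa,b1}\bigl(\Pi_N^{0b'}(\rho') - \Pi_N^{1b'}(\rho')\bigr).
\]
Next I would substitute the values of $\Pi_N^{cb'}(1/2,\ldots,1/2)$ from Lemma~\ref{lem:fiboexpr}, namely $\Pi_N^{cb'}(1/2,\ldots,1/2) = f_{N+2-c-b'}/2^N$. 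Thus $\Pi_N^{1b'}(\rho') = f_{N+1-b'}/2^N$ and $\Pi_N^{0b'}(\rho') = f_{N+2-b'}/2^N$, so the bracketed difference is $(f_{N+2-b'} - f_{N+1-b'})/2^N = f_{N-b'}/2^N$ by the Fibonacci recurrence (this holds for both values of $b'$, since $N = K-6 \geq 1$ because $K \geq 7$, so all the indices $N-b', N+1-b', N+2-b'$ are nonnegative and the recurrence applies). Plugging in yields
\[
  \Upsilon_{\kappa,bb'} = \Lambda_{\kappa,b0}\cdot\frac{f_{N+1-b'}}{2^N} + \Lambda_{\kappa,b1}\cdot\frac{f_{N-b'}}{2^N} = 2^{-N}\bigl(\Lambda_{\kappa,b0}\, f_{N+1-b'} + \Lambda_{\kappa,b1}\, f_{N-b'}\bigr),
\]
which is exactly the claimed identity.

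There is essentially no obstacle here: the statement is a routine bookkeeping exercise combining the two previously established lemmas. The only points requiring a bit of care are (i) checking the arithmetic on path lengths ($4 + N = K-2$), (ii) correctly matching the superscripts in Lemma~\ref{lem:concat} — the left endpoint of the second path is "glued" to the right endpoint of the first, so the $\Pi_4$ factors see superscript $b\,0$ or $b\,1$ and the $\Pi_N$ factors see superscript $1\,b'$ or $0\,b'$ — and (iii) verifying that the Fibonacci index $N-b'$ never goes negative, which is guaranteed by $K \geq 7$. None of these is a genuine difficulty; the result is a direct corollary.
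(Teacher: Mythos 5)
Your proof is correct and takes exactly the approach the paper intends: the paper states Fact~\ref{fact:alpha-t-f} as an immediate consequence of Lemma~\ref{lem:concat} and Lemma~\ref{lem:fiboexpr} without writing out the computation, and your argument is precisely that computation, including the right decomposition into $n=4$, $n'=N$, the correct matching of superscripts, and the Fibonacci recurrence step (justified by $N = K-6 \geq 1$).
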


Studying the graphs $H_K(\kappa)$, by the same reasoning as 
for Fact~\ref{fact:match-to-prob},
we can easily show:
\begin{equation}
\label{eqn:match-to-prob-same}
2^{2m} \times \prm(H_K(\kappa)) \ = \  \sum_{\tau\in [m+1]^4} |S_{\tau}| \times (\Upsilon_{\kappa,00})^{\tau_{00}} \times (\Upsilon_{\kappa,01})^{\tau_{01}} \times (\Upsilon_{\kappa,10})^{\tau_{10}} \times (\Upsilon_{\kappa,11})^{\tau_{11}}.
\end{equation}
This is again a system of linear equations $\mV \mS = \mC$ with $\mV
\in \mathbb{R}^{[m+1]^4, [m+1]^4}$, where 
$v_{\kappa,\tau} \ \colonequals 2^{-2m} \times (\Upsilon_{\kappa,00})^{\tau_{00}} \times (\Upsilon_{\kappa,01})^{\tau_{01}} \times (\Upsilon_{\kappa,10})^{\tau_{10}} \times (\Upsilon_{\kappa,11})^{\tau_{11}}$.  
To show
that we can compute in polynomial time 4-tuples of rational probability values $\rho_\kappa$
 for~$\kappa \in [m+1]^4$ 
that make~$\mV$ have rational entries and be invertible, we reason 
as in
Section~\ref{sec:sub6}.
Specifically, we
study
the Jacobian determinant of the
mapping
$\xi_N:\chi \mapsto \big(\Pi^{00}_{K-2}(\chi,1/2,\ldots,1/2),\,
\Pi^{01}_{K-2}(\chi,1/2,\ldots,1/2),\linebreak
\Pi^{10}_{K-2}(\chi,1/2,\ldots,1/2),\,
\Pi^{11}_{K-2}(\chi,1/2,\ldots,1/2)\big)$,
where~$\chi$ is a 4-tuple of real variables. We show that this
determinant
is not the null polynomial.
To do this, starting from the Jacobian~$\mJ_\xi$ of Section~\ref{sec:sub6}, 
using Fact~\ref{fact:alpha-t-f} and Cassini's identity, and using the fact that the determinant is
multilinear and alternating,
we obtain (cf.\ Appendix~\ref{apx:jacobian}):
\begin{restatable}{fact}{jacobian}
  \label{fct:jacobian}
We have:
  $\det(\mJ_{\xi_N}) = 2^{-4N} \times \det(\mJ_\xi)$ .
\end{restatable}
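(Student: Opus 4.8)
The plan is to express each entry of $\mJ_{\xi_N}$ as a linear combination of entries of $\mJ_\xi$, with coefficients built from Fibonacci numbers, and then to exploit multilinearity of the determinant to factor out a scalar. Concretely, by Fact~\ref{fact:alpha-t-f} we have $\Upsilon_{\chi,bb'} = 2^{-N}(\Lambda_{\chi,b0}\, f_{N+1-b'} + \Lambda_{\chi,b1}\, f_{N-b'})$, where $\Lambda_{\chi,bb'} = \Pi_4^{bb'}(\chi)$ and $\Upsilon_{\chi,bb'} = \Pi^{bb'}_{K-2}(\chi,1/2,\ldots,1/2)$. Differentiating with respect to the variable $\chi_{b'_1 b'_2}$ and using linearity of the derivative, each row of $\mJ_{\xi_N}$ — indexed by a pair $(b_1,b_2)$ — is $2^{-N}$ times a fixed linear combination of the two rows of $\mJ_\xi$ indexed by $(b_1,0)$ and $(b_1,1)$, namely with coefficients $f_{N+1-b_2}$ and $f_{N-b_2}$ respectively. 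Thus $\mJ_{\xi_N} = 2^{-N}\, T\, \mJ_\xi$, where $T$ is a $4\times 4$ block-diagonal ``change-of-row'' matrix: writing the row index as $(b_1,b_2)$ and grouping the four rows into the two blocks $b_1 = 0$ and $b_1 = 1$, each $2\times 2$ block of $T$ equals $\begin{pmatrix} f_{N+1} & f_{N} \\ f_{N} & f_{N-1} \end{pmatrix}$ (taking $b_2 = 0$ for the first row of the block and $b_2 = 1$ for the second).

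From here the computation is immediate: $\det(\mJ_{\xi_N}) = (2^{-N})^4 \det(T) \det(\mJ_\xi) = 2^{-4N} \det(T) \det(\mJ_\xi)$, and since $T$ is block-diagonal with two identical $2\times 2$ blocks, $\det(T) = \left( f_{N+1} f_{N-1} - f_N^2 \right)^2$. By Cassini's identity, $f_{N+1} f_{N-1} - f_N^2 = (-1)^N$, so $\det(T) = \left( (-1)^N \right)^2 = 1$. This yields $\det(\mJ_{\xi_N}) = 2^{-4N} \det(\mJ_\xi)$, as claimed. (One small edge case: $N = K - 6$ and $K \geq 7$ give $N \geq 1$, so $f_{N-1}$ is well-defined; the identity $f_n^2 = f_{n+1} f_{n-1} + (-1)^{n+1}$ holds for $n \in \mathbb{N}^+$ as recalled, which covers $N$.)

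The main obstacle — and it is a mild one — is bookkeeping: being careful that the differentiation in $\xi_N$ is only with respect to the four ``free'' variables $\chi_{bb'}$ appearing in the first four edges, while the remaining edges carry the constant probability $1/2$, so that Fact~\ref{fact:alpha-t-f} applies verbatim and the $1/2$-edges contribute only the Fibonacci factors and not any further $\chi$-dependence. Once the identity $\mJ_{\xi_N} = 2^{-N} T \mJ_\xi$ is set up correctly, everything reduces to the two one-line determinant facts above (multiplicativity of $\det$ under matrix product, and Cassini), so no genuine calculation is needed. The appendix version presumably spells out the partial-derivative identities entrywise and then invokes ``the determinant is multilinear and alternating'' to extract the factor, which is exactly the route sketched here.
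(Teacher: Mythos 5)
Your proof is correct and rests on exactly the same ingredients as the paper's: Fact~\ref{fact:alpha-t-f} to express each $\Upsilon_{\chi,bb'}$ as a Fibonacci-weighted combination of $\Lambda_{\chi,b0},\Lambda_{\chi,b1}$, the resulting row relations between $\mJ_{\xi_N}$ and $\mJ_\xi$, and Cassini's identity to show the surviving scalar is $1$. The only difference is in packaging: where the paper expands $\det(\mJ_{\xi_N})$ by multilinearity of the determinant into $16$ terms, discards the $12$ with repeated rows, collects the remaining $4$ into $f_{N+1}^2 f_{N-1}^2 - 2f_{N+1}f_{N-1}f_N^2 + f_N^4$, and then recognizes this as $(f_{N+1}f_{N-1}-f_N^2)^2$, you instead write the row relations as the explicit matrix factorization $\mJ_{\xi_N} = 2^{-N}\, T\, \mJ_\xi$ with $T$ block-diagonal, and read off $\det(T) = (f_{N+1}f_{N-1}-f_N^2)^2$ directly from the two identical $2\times 2$ blocks, invoking $\det(AB)=\det(A)\det(B)$ rather than redoing the expansion by hand. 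This is the same argument but with less bookkeeping --- multiplicativity of $\det$ abstracts the multilinearity-plus-alternating expansion --- and it makes the emergence of the square, and hence the role of Cassini, more transparent. Your edge-case remark ($N \geq 1$, so $f_{N-1}$ and the stated Cassini range are fine) is also correct and worth having.
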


Hence, $\det(\mJ_{\xi_N})$ is not the null polynomial and, as in Section~\ref{sec:sub6}, we can use
Proposition~\ref{prp:85}
to complete the proof of
Proposition~\ref{prp:same-length} (cf. Appendix~\ref{apx:85}).

\section{Proof for Arbitrary Subdivisions}
\label{sec:general}

In this section we finally prove our main result (Result~\ref{res:main}), which
we re-state here:

\begin{theorem}
  \label{thm:main}
  Let $\mathcal{G}$ be an arbitrary family of graphs which is
treewidth-constructible. Then $\probmatch(\mathcal{G})$ is \#P-hard under ZPP
reductions.
\end{theorem}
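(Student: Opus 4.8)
The plan is to reduce from $\shpmatch(\calF_0)$ where $\calF_0$ is the family of $3$-regular planar graphs, which is \#P-hard by~\cite{xia2007computational}, and to combine the topological-minor extraction result (Theorem~\ref{thm:top-min}) with an interpolation argument generalizing Sections~\ref{sec:sub6} and~\ref{sec:same-length}. Given a $3$-regular planar graph $H$ on which we wish to count matchings, we use the treewidth-constructibility of~$\calG$ to build, in polynomial time, a graph $G\in\calG$ of treewidth at least $|V_H|^c$ (with $c$ the constant of Theorem~\ref{thm:top-min}); then we run the ZPP algorithm of Theorem~\ref{thm:top-min} to extract a subgraph $G'$ of $G$ that is isomorphic to $\sub(H,\eta)$ for some $\eta\colon E_H\to\mathbb{N}^+$. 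Since $\pi$ may assign probability~$0$ (equivalently, delete) every edge of~$G$ outside $G'$, oracle calls to $\probmatch(\calG)$ on $(G,\pi)$ with such a $\pi$ compute exactly $\prm(G',\pi')$ for probability assignments $\pi'$ on the subdivision $G'\cong\sub(H,\eta)$. The whole construction is a ZPP reduction: it fails only when the extraction algorithm fails, which we detect.

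It therefore remains to reduce $\shpmatch(H)$ to computing $\prm$ on the fixed subdivision $\sub(H,\eta)$, where now different edges of $H$ may be subdivided to different lengths $\eta(e)$. We keep the Step~1 machinery unchanged: fix an orientation $\overrightarrow{H}$, define the sets $S_\tau$ for $\tau\in[m+1]^4$ as in Definition~\ref{def:Staudef}, so that Fact~\ref{fct:param} still gives $\shpmatch(H)$ from the $|S_\tau|$. For Step~2 we again want to put a length-$4$ "gadget" with tunable probabilities $\rho_\kappa$ near the middle of each subdivided edge and probability $1/2$ on all remaining edges; the subdivided edge of $H$ corresponding to $(x,y)$ now has length $\eta(\{x,y\})$ rather than $6$ or $K$, so we must place the $4$-edge gadget somewhere inside a path of length $\eta(\{x,y\})$ and fill the rest with $1/2$-edges. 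By Lemma~\ref{lem:concat} and Lemma~\ref{lem:fiboexpr} the behavior of such a padded path is again an explicit (affine) function of the gadget behavior $\Lambda_{\kappa,bb'}$ and of Fibonacci numbers, exactly as in Fact~\ref{fact:alpha-t-f}, but with coefficients depending on $\eta(\{x,y\})$. The crucial difficulty is that these coefficients differ across edges, so the per-edge "effective behaviors" are no longer uniform; to get a clean linear system $\mV\mS=\mC$ in the $|S_\tau|$ we need the padded short paths of \emph{every} edge to realize the \emph{same} target behavior. This is precisely the \emph{emulation result}, Proposition~\ref{prp:fibo}, announced in the introduction: for paths of length $\eta(e)$ we can choose gadget probabilities so that the resulting behavior coincides with that of a canonical reference path. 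Granting Proposition~\ref{prp:fibo}, each subdivided edge of $H$ behaves like a single canonical path of fixed behavior, so Equation~\eqref{eqn:match-to-prob-same} goes through verbatim with a uniform $\Upsilon_{\kappa,bb'}$, and we again obtain a Kronecker product of four Vandermonde matrices via a Jacobian-nonvanishing argument as in the previous sections.

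I expect the main obstacle to be exactly Proposition~\ref{prp:fibo} and its aftermath. First, the emulation requires solving a system of equations in which the Fibonacci sequence appears, and the solutions are in general irrational (the relevant ratios involve $\sqrt{5}$); so the gadget probabilities we want to use are not rational, and we cannot feed them directly to the oracle. This forces a numerical-error analysis: we perturb to nearby rational (indeed decimal-fraction) probabilities, bound the propagated error in the oracle answers, and argue that the linear system remains well-conditioned enough to recover the integers $|S_\tau|$ by rounding --- here one needs that the Vandermonde/Kronecker matrix obtained from a small neighborhood of the ideal point is still invertible with a controlled inverse, which is where the open-set ("invertible around a point") formulation from Section~\ref{sec:sub6} pays off. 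Second, the emulation behaves differently for even-length and odd-length subdivisions (the parity of $\eta(e)$ changes which Fibonacci identities apply and whether a short path can match the reference behavior at all), so one splits $E_H$ into even- and odd-subdivided edges and handles the two classes with two slightly different reference paths, then checks that the combined system is still of the required Kronecker form. Once these points are settled, the reduction is assembled as above: build $G$, extract $G'\cong\sub(H,\eta)$, for each edge choose (rational, $\eps$-perturbed) gadget probabilities emulating the reference path, make $(m+1)^4$ oracle calls with the tuned probabilities, solve the linear system, round, and output $\shpmatch(H)$ via Fact~\ref{fct:param}; the overall procedure runs in polynomial time and errs only with the bounded failure probability inherited from Theorem~\ref{thm:top-min}, establishing \#P-hardness of $\probmatch(\calG)$ under ZPP reductions.
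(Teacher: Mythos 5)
Your proposal correctly identifies the skeleton of the reduction and all four of its principal difficulties: reducing from $\shpmatch$ on $3$-regular planar graphs, combining treewidth-constructibility with Theorem~\ref{thm:top-min} to extract $\sub(H,\eta)$ as a subgraph of some $G\in\calG$ and then killing the other edges with probability~$0$, using an \emph{emulation} result to make short paths behave like long ones, noting that the emulating probabilities are irrational and hence need a precision argument, and noting the parity obstruction. This is the paper's approach. However, there are three concrete gaps in how you set it up.

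First, your Step~1 is inconsistent with your own later remark on parity. You say you keep $S_\tau$ for $\tau\in[m+1]^4$ and Fact~\ref{fct:param} ``unchanged,'' but at the end you observe (correctly) that even- and odd-subdivided edges must be treated with different reference lengths, because a length-$4$ path can only emulate paths of the \emph{same parity} (this is why Proposition~\ref{prp:fibo} is stated only for even $i$; see Remark~\ref{rmk:parity}). Once you split, the sets must be indexed by \emph{pairs} $(\tau,\tau')\in[m+1]^4\times[m+1]^4$, the number of oracle calls becomes $(m+1)^8$, and the system matrix is $\mV\otimes\mVP$ rather than a single $(m+1)^4\times(m+1)^4$ matrix. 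The analogue of Fact~\ref{fct:param} is then Equation~\eqref{eqn:param-8}, not the original. As written, your plan would not produce a well-defined linear system.

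Second, you never ensure that the gadgets fit. The paper places, on each subdivided edge, a boundary $1/2$-edge, the four tunable edges $\rho_{\kappa}$, a four-edge emulation gadget $\pp(\cdot),\pq(\cdot),\pr(\cdot),\ps(\cdot)$, then $1/2$-edges, then a final $1/2$-edge: this needs $\eta(e)\geq 10$. To guarantee this you must first replace $H$ by $\sub(H,10)$ before applying Theorem~\ref{thm:top-min}, which is still planar with maximum degree $3$. Without this preprocessing, the extraction may hand you $\eta(e)$ as small as $1$ and the construction collapses.

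Third, the precision argument you sketch is not the paper's, and as stated it has a hole. You propose to bound the propagated error into the approximate $\widehat{\mS}$ and round, invoking ``invertible around a point.'' But the open-neighborhood invertibility of the map $\xi$ only serves to show that $\det(\mJ_\xi)$ is not identically zero, which feeds Proposition~\ref{prp:85}; it gives no quantitative bound on $\|\mG^{-1}\|$, which is what you would need to control rounding of $\widehat{\mS}$. The paper sidesteps the conditioning question entirely by observing that $\mG$ has \emph{exact rational} entries (the $\rho_\kappa$ are chosen as decimal fractions) and that the oracle answer $\mC$ is \emph{exactly} a vector of decimal fractions with at most $m(\max(N,N')+10)\cdot z$ places, because $G(\kappa,\kappa')$ behaves identically to a graph with only decimal-fraction probabilities. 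Thus one computes $\widehat{\mC}$ from oracle calls with approximated $\widehat{\pp(i)},\ldots,\widehat{\ps(i)}$, truncates to the known number of places to recover $\mC$ exactly (Lemmas~\ref{lem:precision}--\ref{lem:precgraphb}), and then computes $\mS=\mG^{-1}\mC$ by exact rational arithmetic. You should replace your conditioning argument with this exact-recovery argument.
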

We will reduce from the problem of counting matchings in~$3$-regular planar graphs of, which is \#P-hard\footnote{Note that, in holographic literature, graphs may be multigraphs (i.e., can have
multiple edges between two nodes) --- see
\cite{multigraphs}. However, inspecting the proof
of~\cite{xia2007computational}, we see that the
graphs are in fact simple.} by~\cite{xia2007computational}.
Our reduction will be similar to that of Section \ref{sec:same-length}, with
the major issue that
the various edges of the input graph can now be subdivided to different lengths.

The proof consists of five steps. In step 1, we show a general
result allowing us to 
assign probabilities to a path of length 4 so as to ``emulate'' the behavior of
any long path of
\emph{even} length. We then revisit the proof of the previous section.
Step 2 extracts the input graph $H$ 
from the treewidth-constructible family.
Step 3 relates the number of matchings of~$H$ to
cardinalities similar to those of the previous section, but taking the
parities of the subdivisions into account. Step 4 then explains how to conclude
using emulation. Last, step~5 works around the issue that the probabilities 
of Step 1 
could be
irrational, by explaining how we can
conclude with sufficiently precise approximations. We now detail these steps.

\subparagraph*{Step 1: Emulating long even paths.}
We start by presenting the main technical tool, namely, how to emulate long paths of even length by paths of length~$4$.

\begin{restatable}[Emulation result]{proposition}{fibo}
  \label{prp:fibo}
 There exist closed-form expressions, denoted
  $\pp(i),\allowbreak\pq(i),\allowbreak\pr(i),\allowbreak\ps(i)$,
  such that for every even integer $i\geq 4$ the
following hold:
\begin{description}
\item[(A)\label{itema}] the expressions evaluate to well-defined probability
  values, i.e., we have\linebreak $0 \leq \pp(i),\pq(i),\pr(i),\ps(i) \leq 1$; and
\item[(B)\label{itemb}] the path of length 4 with probabilities
  $\pp(i),\pq(i),\pr(i),\ps(i)$ behaves like a path of length $i$
    with probabilities $1/2$, i.e., $\Pi^{bb'}_{4}(\pp(i), \pq(i), \pr(i), \ps(i)) \ = \
\Pi^{bb'}_{i}(1/2, \ldots, 1/2)$ for all $b,b' \in \{0,1\}$.
\end{description}
  Further, each of these expressions is of the form $\frac{P \pm \sqrt{Q}}{R}$
  where $P, Q, R$ are polynomials in the Fibonacci numbers $f_{i-1}$ and
  $f_{i-2}$ and in $2^{-i}$, with rational coefficients.
\end{restatable}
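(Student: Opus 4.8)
The plan is to prove (B) by solving an explicit system of four polynomial equations in $p,q,r,s$, and to prove (A) by examining the relevant root. By Lemma~\ref{lem:fiboexpr}, condition (B) is exactly the requirement that the length-$4$ path $P_4(p,q,r,s)$ have behavior $\Pi^{bb'}_4(p,q,r,s)=f_{i+2-b-b'}/2^i$ for all $b,b'\in\{0,1\}$. I would first write out the four behavior polynomials explicitly (directly, or by iterating Lemma~\ref{lem:concat}; it is convenient to record the behavior of a path as the product of the $2\times2$ transfer matrices of its edges, so that the behavior of the all-$1/2$ path of length $i$ is governed by the powers of the Fibonacci matrix). Taking suitable differences --- e.g.\ the difference of the $\Pi^{10}_4$-equation and the $\Pi^{11}_4$-equation collapses to a product form --- turns the system into the equivalent normalized shape
\[
(1-p)(1-s)(1-qr)=\tfrac{f_i}{2^i},\qquad p(1-q)(1-rs)=\tfrac{f_i}{2^i},\qquad p(1-q)(1-s)=\tfrac{f_{i-1}}{2^i},\qquad (1-p)(1-r)s=\tfrac{f_{i-1}}{2^i}.
\]

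To solve this I would eliminate the variables one by one using ratios of the equations. Dividing the second by the third gives $(1-rs)/(1-s)=f_i/f_{i-1}$, which (using $f_i-f_{i-1}=f_{i-2}$) expresses $r$ as a rational function of $s$; feeding that into the fourth equation gives $(1-p)(1-s)=f_{i-1}^2/(2^i f_{i-2})$. Substituting this into the first equation gives $1-qr=f_i f_{i-2}/f_{i-1}^2$, and here \emph{Cassini's identity} is crucial: since $i$ is even, $f_{i-1}^2=f_i f_{i-2}+1$, so this simplifies to $qr=1/f_{i-1}^2$. Dividing the third equation by $(1-p)(1-s)=f_{i-1}^2/(2^i f_{i-2})$ gives $p(1-q)/(1-p)=f_{i-2}/f_{i-1}$, i.e.\ $p=f_{i-2}/(f_i-qf_{i-1})$, and symmetrically $s=f_{i-2}/(f_i-rf_{i-1})$. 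Substituting $r=1/(qf_{i-1}^2)$ and these expressions for $p,s$ back into $(1-p)(1-s)=f_{i-1}^2/(2^i f_{i-2})$ and clearing denominators produces a single \emph{quadratic} in $q$ whose coefficients, after dividing through by a power of $2$, are polynomials in $f_{i-1},f_{i-2},2^{-i}$ with rational coefficients. Thus $q=(P\pm\sqrt Q)/R$ has the claimed shape, and substituting back and rationalizing gives $p,r,s$ of the same shape. I would then verify directly, using Cassini's identity again, that this assignment satisfies all four equations (all quantities we divided by being nonzero), establishing (B).

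For (A): the discriminant of the quadratic is $\geq0$ for every even $i\geq4$, with equality only at $i=4$ (where the solution degenerates to $p=q=r=s=1/2$); one sees this by evaluating the quadratic at $q=1/f_{i-1}$, where it becomes a positive multiple of $2^i f_{i-2}(f_{i-1}-1)^2-f_{i-1}^2(f_i-1)^2\geq0$. With $r=1/(qf_{i-1}^2)$, the requirements $0\le q\le1$ and $0\le r\le1$ are together equivalent to $1/f_{i-1}^2\le q\le1$; and once $q,r\le1$ one has $f_i-qf_{i-1}\ge f_i-f_{i-1}=f_{i-2}>0$ (similarly for $r$), so $p=f_{i-2}/(f_i-qf_{i-1})$ and $s=f_{i-2}/(f_i-rf_{i-1})$ automatically land in $(0,1)$. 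Finally, evaluating the quadratic at $q=1/f_{i-1}^2$ and at $q=1$ shows both values are negative (each a positive multiple of $1-f_if_{i-1}<0$), which, together with the negative leading coefficient (as $2^i f_{i-2}$ dominates $f_if_{i-1}$), forces both roots into the open interval $(1/f_{i-1}^2,1)$. Hence either root gives $0<p,q,r,s<1$, which is (A).

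The main obstacle is the elimination step. A priori the four-equation system looks over-determined for a clean closed form --- in particular there is no reversal-symmetric solution $s=p$, $r=q$ once $i>4$ --- and it is only the arithmetic of the Fibonacci numbers (the recursion, Cassini's identity, and crucially that $i$ is even, so that the sign $(-1)^i$ in Cassini equals $+1$) that makes the system collapse to a single quadratic. Finding the right order of elimination and then re-verifying that the resulting root solves the full system is the bulk of the argument; the range analysis is comparatively routine but still leans on Fibonacci/exponential estimates. Note also that the solution is genuinely irrational, which is precisely what forces the numerical-approximation arguments in the remaining steps of the proof of Theorem~\ref{thm:main}.
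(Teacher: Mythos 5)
Your proposal is correct and reaches the same intermediate system as the paper's Step~(i), but then diverges in two interesting ways, both of them valid.

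On (B): the paper's Appendix~G rewrites (B) into the equivalent system $p\bar q\bar s=f_{i-1}/2^i$, $s\bar r\bar p=f_{i-1}/2^i$, $qr=1/f_{i-1}^2$, $\bar p\bar s=f_{i-1}^2/(2^if_{i-2})$, and then simply \emph{exhibits} the closed-form expressions $p(i),q(i),r(i),s(i)$ (found with SageMath) and checks by symbolic computation that they solve the system. You instead derive the same four normalized relations by a different (but equivalent) chain of ratios, and then push the elimination one step further to obtain a single quadratic in~$q$, with $r,p,s$ recovered rationally from~$q$; this explains \emph{where} the paper's $\frac{P\pm\sqrt Q}{R}$ form comes from, which the paper leaves implicit. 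Your intermediate system $(1-p)(1-s)(1-qr)=f_i/2^i$, $p(1-q)(1-rs)=f_i/2^i$, $p(1-q)(1-s)=f_{i-1}/2^i$, $(1-p)(1-r)s=f_{i-1}/2^i$ is indeed equivalent to~(B): the first, third and fourth equations are $\Pi^{11}_4$, $\Pi^{01}_4-\Pi^{11}_4$, $\Pi^{10}_4-\Pi^{11}_4$, and the second equals $p\bar q\bar s+ps\bar q\bar r$, which together with the others recovers $\Pi^{00}_4$. Both proofs identify the same two essential ingredients: Cassini's identity with $i$ even (to get $qr=1/f_{i-1}^2$), and the asymmetric form of the solution (Remark~\ref{rmk:asym}).

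On (A): the paper proves a stronger, abstract statement---\emph{any} real solution of the normalized system lies in $(0,1)^4$---via an exhaustive case analysis over the $3^4=81$ possible sign/range patterns, pruned to six and then five contradictions derived using Fibonacci inequalities (Lemmas~\ref{lem:2nf2-f0f0}--\ref{lem:forsigma}). You instead do a direct interval analysis on the quadratic: negative leading coefficient (requiring $2^if_{i-2}>f_if_{i-1}$, which is the paper's Lemma~\ref{lem:2nf2-f0f0}), nonnegative discriminant via positivity at $q=1/f_{i-1}$, and negativity at the endpoints $q=1/f_{i-1}^2$ and $q=1$, which sandwiches both roots in the open interval; $r,p,s\in(0,1)$ then follow by monotonicity. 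Your argument is more economical but applies only to the two specific roots of your quadratic, whereas the paper's applies to any solution of the system; both suffice for the proposition. Note that the three sign evaluations really are all needed for the sandwich argument (negativity at the two endpoints alone would not locate the roots for a downward parabola); you state them, but it would help to say explicitly that the IVT between consecutive evaluations is what places both roots inside $(1/f_{i-1}^2,1)$.

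Two small points to be careful about if you flesh this out. First, the discriminant inequality $2^if_{i-2}(f_{i-1}-1)^2\geq f_{i-1}^2(f_i-1)^2$ is not an identity: like the paper's Lemmas~\ref{lem:2nf2-f0f0}--\ref{lem:forsigma}, it needs a short Binet-style estimate for large~$i$ plus a finite check for small~$i$, since the golden-ratio asymptotics only kick in eventually. Second, the elimination repeatedly divides by expressions such as $p\bar q\bar s$, $f_{i-1}$, and $f_i-qf_{i-1}$; you do note at the end that these are nonzero once $q,r\in(0,1)$, but in a complete write-up the logic should be arranged so that the derivation of the quadratic is presented as a necessary condition and the back-substitution as a verification, as you indicate---otherwise the argument could appear circular.
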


\begin{proof}[Proof sketch]
  The result is simple to state, but we did not find an elegant way
  to show it.
  Our proof consists of four steps:
(i) rewriting condition~\ref{itemb} into a simpler equivalent system of
  equations (using Lemma~\ref{lem:fiboexpr}),
  (ii) proving that any solution of that system must be
in~$(0,1)^4$, (iii) exhibiting closed-form expressions that
satisfy the system, found with the help of SageMath;
and (iv) verifying that these expressions are well-defined. 
See Appendix~\ref{apx:fibo}.
\end{proof}

\begin{remark}
\label{rmk:asym}
As~$\Pi^{bb'}_{i}(1/2, \ldots, 1/2)$ is symmetric, one would expect the
  closed-form expressions to satisfy
  $\pp(i) = \ps(i)$ and~$\pq(i) = \pr(i)$.
  However, surprisingly, numerical evaluation (already for $i=6$) shows that our solution does not
  have this property.
\end{remark}

\begin{remark}
\label{rmk:parity}
  It is necessary to require that $i$ is even, as otherwise
  Proposition~\ref{prp:fibo} demonstrably does not hold.  In fact, we can prove that, more generally, the behavior of a
probabilistic path inherently depends on the parity of its length (cf.\
Appendix~\ref{apx:parity-incompatibility}). This is
why we will distinguish even-length and odd-length subdivisions in the sequel.
\end{remark}
\subparagraph*{Step 2: Choosing the graph in~$\bm{\calG}$.}
Let~$H=(V,E)$ be the input to the reduction, i.e., the~$3$-regular planar graph for which we want to
compute $\shpmatch(H)$, and let~$m\colonequals |E|$.  
We first
build the graph~$H_{10} =\sub(H,10)$, writing $H_{10} = (V_{10},E_{10})$ and
we compute~$k \colonequals
|V_{10}|^c$ where~$c$ is the constant from Theorem~\ref{thm:top-min}. Notice
that $H_{10}$ is a planar graph of maximum degree~$3$, and that the size of~$k$ in unary is
polynomial in (the encoding size of) $H$.
Intuitively, this initial subdivision in~$10$ will ensure that we have enough room for our probabilistic gadgets.
Now, we use the
treewidth-constructibility of~$\mathcal{G}$ to build in polynomial time a
graph~$G = (V_G,E_G) \in \mathcal{G}$ such that~$\tw(G) \geq k$, and using
Theorem~\ref{thm:top-min} we compute in ZPP a 
subgraph~$G'$ of~$G$
with a subdivision $\eta_{10}:E_{10} \to \mathbb{N}^+$ of~$H_{10}$ and an isomorphism
from~$\sub(H_{10},\eta_{10})$ to $G'$. 
This gives us
a subdivision $\eta:E
\to \mathbb{N}^+$ of~$H$ and an isomorphism~$f$ from $\sub(H,\eta)$ to $G'$,
with the initial subdivision ensuring
that~$\eta(e) \geq 10$ for each~$e\in E$.  

\subparagraph*{Step 3: Defining the new sets~$S_{\tau,\tau'}$ and linking them to matchings.}
As before, fix an 
orientation~$\overrightarrow{H}$ of~$H$. 
We call an edge $e$ of $H$ \emph{even} if $\eta(e)$ is even, and \emph{odd}
otherwise.
For
$\tau,\tau' \in [m+1]^4$, both indexed in binary, we define~$S_{\tau,\tau'}$ to be the
set of selection functions~$\mu$ of~$H$ such that, for $b,b' \in \{0, 1\}$,
precisely~$\tau_{bb'}$ even edges~$e$ of~$H$
have type~$bb'$ w.r.t.~$\mu$,
and 
precisely~$\tau'_{bb'}$ odd edges~$e$ of~$H$
have type~$bb'$ w.r.t.~$\mu$.
Then, as
in Section~\ref{sec:sub6}, we have:
\begin{equation}
\label{eqn:param-8}
\shpmatch(H) = \sum_{\substack{\tau,\tau' \in [m+1]^4\\ \tau_{01} = \tau_{10} = \tau'_{01} = \tau'_{10} = 0}} |S_{\tau,\tau'}|.
\end{equation}

\subparagraph*{Step 4: Describing the probabilistic graphs and obtaining the system.}
To complete the definition of the reduction, let us build the~$(m+1)^8$
probabilistic graphs on which we 
want to 
invoke the oracle, denoted 
$G(\kappa,\kappa')$ for $\kappa,\kappa' \in [m+1]^4$. 
Let $K \colonequals \max_{\substack{e\in E\\\eta(e) \text{ is even}}}(\eta(e))$
and $K' \colonequals \max_{\substack{e\in E\\\eta(e) \text{ is odd}}}(\eta(e))$
and~$N\colonequals K-6$ and~$N'\colonequals K'-6$.
The underlying graph of $G(\kappa,\kappa')$ is~$G$, every edge~$e\in E_G$ that is not in~$G'$ is assigned
probability zero, and we explain next what is the probability associated to the
edges that are in~$G'$.  Consider~$2\times (m+1)^4$
4-tuples
of probability values $\rho_{\kappa} =
(\rho_{\kappa,00},\rho_{\kappa,01},\rho_{\kappa,10},\rho_{\kappa,11})$ and
$\rho'_{\kappa} = (\rho'_{\kappa',00},\rho'_{\kappa',01},\rho'_{\kappa',10},\rho'_{\kappa',11})$ for $\kappa,\kappa' \in [m+1]^4$,
to be chosen later.  For every
directed edge~$(x,y)\in \overrightarrow{H}$, let~$\gamma \colonequals
\eta(\{x,y\})$ be the length to which it is subdivided in~$G'$. Letting~$f(x), v_1, \dots, v_{\gamma -1}, f(y)$ be the
corresponding path in~$G'$, we set the probabilities of the~$\gamma$
edges along that path as follows:
\begin{itemize}
  \item If~$\gamma$ is even (illustrated in Figure~\ref{fig:emul}):
    \begin{itemize}
      \item $1/2,\rho_{\kappa,00},\rho_{\kappa,01},\rho_{\kappa,10},\rho_{\kappa,11}$ for the
first~$5$ edges,
\item 
    $p(N-\gamma+10),q(N-\gamma+10),r(N-\gamma+10),s(N-\gamma+10)$ for
the next four edges,
        \item $1/2$ for the remaining~$\gamma - 9$ edges.
    \end{itemize}
  \item If~$\gamma$ is odd:
    \begin{itemize}
      \item $1/2,\rho'_{\kappa',00},\rho'_{\kappa',01},\rho'_{\kappa',10},\rho'_{\kappa',11}$ for the
first~$5$ edges,
        \item $p(N'-\gamma+10),q(N'-\gamma+10),r(N'-\gamma+10),s(N'-\gamma+10)$ for
the next four edges,
        \item $1/2$ for the remaining~$\gamma - 9$ edges.
    \end{itemize}
\end{itemize}
We know that $N-\gamma+10$
(resp., $N'-\gamma+10$) is an even
integer when~$\gamma$ is even
(resp., when~$\gamma$ is odd); and it is $\geq 4$ by definition of~$K$ (resp.,
of~$K'$).
Thus, using Proposition~\ref{prp:fibo}
and then Lemma~\ref{lem:concat}, we know that the path that we defined behaves
exactly like the path
$P_{K}(1/2,\rho_{\kappa,00},\rho_{\kappa,01},\rho_{\kappa,10},\rho_{\kappa,11},1/2,\ldots,1/2)$
if~$\gamma$ is even, and
exactly like the path
$P_{K'}(1/2,\rho'_{\kappa',00},\rho'_{\kappa',01},\rho'_{\kappa',10},\rho'_{\kappa',11},1/2,\ldots,1/2)$ if~$\gamma$ is
odd (see again Figure~\ref{fig:emul}).
\begin{figure}
\begin{tikzpicture}[mynode/.style={circle,fill=black, inner sep=2pt}]
\node[mynode,label=above:{$f(x)$}] (x) at (0,0) {};
\node[mynode,label=above:{$f(y)$}] (y) at (13,0) {};
\node[mynode] (v1) at (1,0) {};
\node[mynode] (v2) at (2,0) {};
\node[mynode] (v3) at (3,0) {};
\node[mynode] (v4) at (4,0) {};
\node[mynode] (v5) at (5,0) {};
\node[mynode] (v6) at (6,0) {};
\node[mynode] (v7) at (7,0) {};
\node[mynode] (v8) at (8,0) {};
\node[mynode] (v9) at (9,0) {};
\node[mynode] (v12) at (12,0) {};
\draw (x) -- (v1) node[above,midway] {$\frac{1}{2}$}; 
\draw (v1) -- (v2) node[above,midway] {$\rho_{\kappa,00}$}; 
\draw (v2) -- (v3) node[above,midway] {$\rho_{\kappa,01}$}; 
\draw (v3) -- (v4) node[above,midway] {$\rho_{\kappa,10}$}; 
\draw (v4) -- (v5) node[above,midway] {$\rho_{\kappa,11}$}; 
\draw (v5) -- (v6) node[above,midway] {$p(i)$}; 
\draw (v6) -- (v7) node[above,midway] {$q(i)$}; 
\draw (v7) -- (v8) node[above,midway] {$r(i)$}; 
\draw (v8) -- (v9) node[above,midway] {$s(i)$}; 
\draw (v12) -- (y) node[above,midway] {$\frac{1}{2}$}; 
\draw[dotted] (v9) -- (v12) node[above,midway] {$\frac{1}{2}$ on all edges};
\draw [decorate, decoration = {calligraphic brace, mirror, raise = 6pt}, thick] (v9) --  (v12);
\node at (10.5,-.7) {$\gamma - 10$ edges};
(b
\node[mynode] (bx) at (0,-1.8) {};
\node[mynode] (by) at (13,-1.8) {};
\node[mynode] (bv1) at (1,-1.8) {};
\node[mynode] (bv2) at (2,-1.8) {};
\node[mynode] (bv3) at (3,-1.8) {};
\node[mynode] (bv4) at (4,-1.8) {};
\node[mynode] (bv5) at (5,-1.8) {};
\node[mynode] (bv12) at (12,-1.8) {};
\draw (bx) -- (bv1) node[above,midway] {$\frac{1}{2}$}; 
\draw (bv1) -- (bv2) node[above,midway] {$\rho_{\kappa,00}$}; 
\draw (bv2) -- (bv3) node[above,midway] {$\rho_{\kappa,01}$}; 
\draw (bv3) -- (bv4) node[above,midway] {$\rho_{\kappa,10}$}; 
\draw (bv4) -- (bv5) node[above,midway] {$\rho_{\kappa,11}$}; 
\draw[dotted] (bv5) -- (bv12); 
\draw (bv12) -- (by) node[above,midway] {$\frac{1}{2}$}; 
\draw[dotted] (bv5) -- (bv12) node[above,midway] {$\frac{1}{2}$ on all edges};
\draw [decorate, decoration = {calligraphic brace, mirror, raise = 6pt}, thick] (bv5) --  (bv12);
\node at (8.5,-2.5) {$N$ edges};
\end{tikzpicture}
\caption{The upper path depicts how we set the probabilities along a path $f(x),
  v_1,
\dots, v_{\gamma -1}, f(y)$ corresponding to an edge~$(x,y)\in
\overrightarrow{H}$ such that~$\gamma \colonequals \eta(\{x,y\})$ is even. We
write~$i \colonequals N - \gamma +10$. By Lemma~\ref{lem:concat} and Proposition~\ref{prp:fibo}, this path
has exactly the same behavior as the lower path.}
\label{fig:emul}
\end{figure}
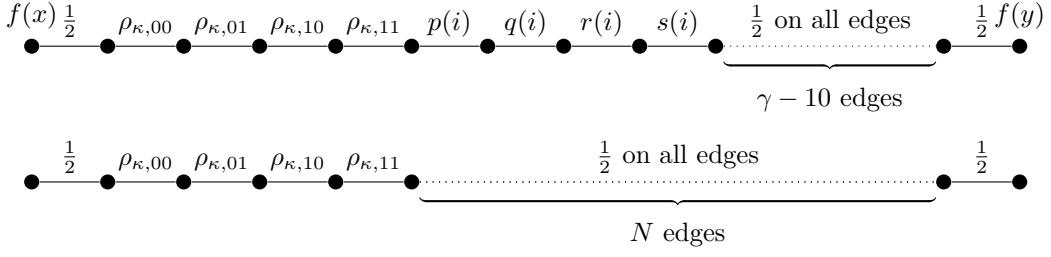

We have now managed to ensure that all paths for even edges (resp., for odd
edges) behave as if they had been subdivided to length~$K$ (resp., to
length~$K'$). We continue the proof as in the previous section, except that we
distinguish odd and even edges. 
Specifically, for~$b,b'\in \{0,1\}$, 
 we write as in the previous section
$\Upsilon_{\kappa,bb'} \colonequals
\Pi^{bb'}_{K-2}(\rho_{\kappa},1/2,\ldots,1/2)$ and
$\Upsilon'_{\kappa',bb'} \colonequals
\Pi^{bb'}_{K'-2}(\rho'_{\kappa'},1/2,\ldots,1/2)$.
Using the same reasoning as for Equation~\ref{eqn:match-to-prob-same}, we 
obtain:
\begin{align}
\label{eqn:match-to-prob-general}
\!\!\!\!\!2^{2m} \times \!\!\!\prm\!\!\!\!(G(\kappa,\kappa')) \ = \ &
  \!\!\!\!\!\!\!\sum_{\tau,\tau' \in [m+1]^4}\!\!\!\! |S_{\tau,\tau'}|
 \times (\Upsilon_{\kappa,00})^{\tau_{00}} \times (\Upsilon_{\kappa,01})^{\tau_{01}} \times (\Upsilon_{\kappa,10})^{\tau_{10}}\times (\Upsilon_{\kappa,11})^{\tau_{11}}
\nonumber
\\
& \qquad\times (\Upsilon'_{\kappa',00})^{\tau'_{00}} \times (\Upsilon'_{\kappa',01})^{\tau'_{01}} \times (\Upsilon'_{\kappa',10})^{\tau'_{10}} \times (\Upsilon'_{\kappa',11})^{\tau'_{11}},
\end{align}
i.e., we obtain a system of linear equations $\mG \mS = \mC$
with $\mS$ the vector of the desired values $|S_{\tau,\tau'}|$,
with $\mC$ the vector of the oracle answers $\prm(G(\kappa,\kappa'))$, and
with $\mG \in
\mathbb{R}^{[m+1]^8, [m+1]^8}$, whose entries are given according to the above
equation.  But notice that we have~$\mG = \mV \otimes \mVP$, with 
$v_{\kappa,\tau} \ \colonequals 2^{-m} \times (\Upsilon_{\kappa,00})^{\tau_{00}} \times (\Upsilon_{\kappa,01})^{\tau_{01}} \times (\Upsilon_{\kappa,10})^{\tau_{10}} \times (\Upsilon_{\kappa,11})^{\tau_{11}} $ and
$v'_{\kappa',\tau'} \ \colonequals 2^{-m} \times (\Upsilon'_{\kappa',00})^{\tau'_{00}} \times (\Upsilon'_{\kappa',01})^{\tau'_{01}} \times (\Upsilon'_{\kappa',10})^{\tau'_{10}} \times (\Upsilon'_{\kappa',11})^{\tau'_{11}} $.
Since~$\mV$ and~$\mVP$ share no variables and are identical up to renaming
variables, to argue that there exist 4-tuples of probabilistic values
$\rho_{\kappa}$ and~$\rho'_{\kappa'}$ for $\kappa,\kappa' \in [m+1]^4$
that make~$\mG$ invertible, it is enough to know that the Jacobian determinant of the
mapping~$\xi_{N}$
is not identically null, as we showed in the previous section (Fact~\ref{fct:jacobian}).
Thus, we can again
use Proposition~\ref{prp:85}
to compute in polynomial
time $2\times (m+1)^4$
$4$-tuples of rational probability
values $\rho_\kappa$ and $\rho'_{\kappa'}$ such that the matrices~$\mV$ and~$\mV'$, hence~$\mG$,
are invertible (cf. Appendix~\ref{apx:85}). By Equation~\ref{eqn:match-to-prob-general}, $\mG$ has rational entries, and its
inverse $\mG^{-1}$ also does and is computable in polynomial time.

\subparagraph*{Step 5: Using decimal fractions approximations.}
The last issue is that we cannot really obtain $\mC$ via
oracle calls, because the graphs $G(\kappa,\kappa')$
may
have irrational edge probabilities, namely, the $p(i),q(i),r(i),s(i)$.
We now argue that we can still recover the $\mC$, so that we can
compute $\mS = \mG^{-1} \mC$ and conclude.
To do this, we first observe 
that~$\mC$ is in fact a vector of decimal fractions, as the graphs
$G(\kappa,\kappa')$ emulate a graph where the probabilities are
decimal fractions; further, we can bound the number of
decimal places of its values to $[m \times (\max(N,N') + 10)] \times z$, with $z$ the maximal
number of decimal places of a decimal fraction in $\rho_\kappa, \rho'_{\kappa}$
Second, we show how to compute 
decimal fraction \emph{approximations} $\widehat{\pp(i)}, \widehat{\pq(i)}, \widehat{\pr(i)}, \widehat{\ps(i)}$ of the $p(i),q(i),r(i),s(i)$,
in polynomial time in the desired number of places, using the form
that they have according to
Proposition~\ref{prp:fibo}.
Third, we argue that when invoking the oracles on the graphs where we replace
$p(i),q(i),r(i),s(i)$ by $\widehat{\pp(i)}, \widehat{\pq(i)}, \widehat{\pr(i)},
\widehat{\ps(i)}$, then the error on the answer is bounded as a function of that of the
approximations, so that we can recover $\mC$ exactly if the approximations
were sufficiently precise.
See Appendix~\ref{apx:precision} for detailed proofs.

\section{Result for Edge Covers}
\label{sec:others}
Having shown Result~\ref{res:main},
we now explain how to adapt its proof to obtain our analogous results for 
edge covers.
We only sketch the argument, and refer to
Appendix~\ref{apx:others} for more details.
Recall that an \emph{edge cover} of a
graph~$G=(V,E)$ is a set of edges~$S \subseteq E$ such that~$V = \bigcup_{e\in S}e$. Given
a probabilistic graph~$(G,\pi)$, we define $\pre(G,\pi)$ to be the sum of the
probabilities of all edge covers in the probability distribution induced
by~$\pi$, and define $\probedgecover(\calF)$ for a graph family~$\calF$ 
to be the corresponding computational problem.
We first note that, in this context, the strict analogue of
Result~\ref{res:main} does not hold. Indeed, take
some treewidth-constructible graph family $\calG$, and consider the graph family~$\calG'$ 
obtained from~$\calG$ as follows: for every graph~$G \in \calG$, we add
to~$\calG'$ the graph that is obtained from~$G$ by 
attaching a dangling edge with a fresh vertex to
every node of~$G$.
The family $\calG'$ is still 
treewidth-constructible, but $\probedgecover(\calG')$ is now tractable
as it is easy to see that the edge covers of a graph in~$\calG'$ are precisely the edge subsets where
all dangling edges are kept.

To avoid this, let us assume that~$\calG$ is closed
under taking subgraphs, i.e., if~$G\in \calG$ and~$G'$ is a
subgraph of~$G$, then~$G' \in \calG$. We then have:

\begin{theorem}
  \label{thm:edgecovers}
  Let $\mathcal{G}$ be an arbitrary family of graphs which is
treewidth-constructible and closed under taking subgraphs. Then $\probedgecover(\mathcal{G})$ is \#P-hard under ZPP
reductions.
\end{theorem}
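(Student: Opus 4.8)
The plan is to mirror the proof of Theorem~\ref{thm:main} for matchings, making the changes needed to handle edge covers. The overall structure remains: reduce from a \#P-hard counting problem on $3$-regular planar graphs, extract a grid/topological minor via Theorem~\ref{thm:top-min}, and interpolate. First I would fix a concrete \#P-hard source problem for edge covers on $3$-regular planar graphs — the counting of edge covers is \#P-hard on such graphs (and, failing a direct citation, one can obtain it by a standard local reduction from \#Matchings, since in any graph $G$ the edge covers correspond bijectively to complements of matchings only after a gadget replacement; more cleanly one cites the known \#P-hardness of counting edge covers under planarity and bounded degree). The key point is that the source graph $H$ is $3$-regular and planar, so $\sub(H,10)$ is planar of maximum degree~$3$ and we can run Theorem~\ref{thm:top-min} exactly as in Step~2 of Section~\ref{sec:general}.

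Next I would replay Steps~3--4, replacing the notion of selection function (which assigns at most one incident edge per vertex) with the appropriate local structure for edge covers: here each vertex must be covered, so the relevant gadget behavior is governed by path-matching-like polynomials but counting edge covers of the subdivided path with prescribed boundary conditions. Concretely, one defines for a probabilistic path $P_n(\rho)$ the four ``edge-cover behaviors'' $\Theta_n^{bb'}(\rho)$ analogous to the $\Pi_n^{bb'}$, where the bits $b,b'$ record whether the endpoint is already covered from outside. One then needs the analogues of Lemma~\ref{lem:concat} (concatenation of behaviors — still a short linear-algebraic identity), Lemma~\ref{lem:fiboexpr} (closed form for the all-$1/2$ path — this will again be a linear-recurrence sequence, presumably again Fibonacci-like or a close relative, and this is exactly where one must recompute), and Proposition~\ref{prp:fibo} (the emulation result: short path of length~$4$ emulating a long even path). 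The Jacobian-nonvanishing fact (the analogue of the Fact proved with SageMath in Section~\ref{sec:sub6} and Fact~\ref{fct:jacobian}) must be re-verified for the edge-cover polynomials; this is a finite symbolic computation. Finally Step~5 (decimal-fraction approximations) goes through verbatim since it only uses the closed-form shape of the emulation probabilities.

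The essential extra ingredient — and the reason the statement needs $\calG$ closed under taking subgraphs — is that the naive reduction would want to set $\pi(e)=0$ on the edges of $G$ that are \emph{not} in the extracted subgraph $G'$; but an edge of probability~$0$ cannot be dropped for edge covers, since its endpoints would then be uncoverable and the edge-cover probability would collapse to~$0$. The fix is to actually \emph{delete} those edges, producing a genuine subgraph of $G$, which lies in $\calG$ precisely by the subgraph-closure hypothesis; the example with dangling edges shows this hypothesis cannot be dropped. After deletion one must still cover any vertices that were left dangling, which is handled by the same dangling-edge observation noted in the text, or avoided by choosing the extraction so that $G'$ has no isolated vertices and treating leftover vertices of $G$ via the subgraph operation. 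I would therefore, in Step~2, build $G' \subseteq G$ as the extracted subdivision, replace $G$ by the subgraph $G''$ of $G$ consisting exactly of $G'$'s edges together with a dangling edge attached to each vertex of $G$ not touched by $G'$ (so $G'' \in \calG$ by subgraph-closure applied to an appropriate subgraph, or directly), and then every edge cover of $G''$ restricts to an edge cover of $G'$ plus all dangling edges, recovering $\pre(G',\pi)$ from the oracle on $G''$.

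The main obstacle I expect is re-establishing the emulation result (the edge-cover analogue of Proposition~\ref{prp:fibo}) together with the matching Jacobian non-degeneracy: these are exactly the two places where the matching proof used delicate ad hoc computations (a system of equations involving the Fibonacci sequence, and a SageMath-verified determinant), and for edge covers the underlying recurrence and polynomials change, so one must redo these calculations and check that the analogue of condition~\ref{itema} (the emulating values lie in $[0,1]$) still holds and that the parity distinction (Remark~\ref{rmk:parity}) is still the right one. Everything else — the linear-system setup $\mG=\mV\otimes\mVP$, Kronecker/Vandermonde invertibility, and the approximation handling — transfers with only notational changes.
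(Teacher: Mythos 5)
Your proposal has the correct overall shape and matches the paper's strategy: reduce from counting edge covers on $3$-regular planar graphs (the paper cites \cite{cai2012holographic}, with \cite[Appendix~D]{amarilli2017conjunctive} for the restriction to simple graphs), redefine selection functions so each vertex selects \emph{at least} one incident edge, reuse the $S_{\tau,\tau'}$/interpolation machinery, and invoke subgraph-closure so the graph handed to the oracle actually lies in $\calG$. Your diagnosis of \emph{why} subgraph-closure is needed — that $\pi(e)=0$ does not ``remove'' an edge in the edge-cover world, because its endpoints must still be covered — is exactly right and is the content of the dangling-edge counterexample in the paper.

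There are two places where your plan diverges from the paper, one a missed shortcut and one an unnecessary complication. First, you propose to ``redo the calculations'' for the emulation result and the Jacobian. The paper instead observes a clean complementation symmetry: for a path of length~$4$, the edge-cover quantity $\Pi^{bb'}_4(p,q,r,s)$ equals the matching quantity $\Pi^{\bar b\bar b'}_4(\bar p,\bar q,\bar r,\bar s)$ (compare Equations~\eqref{eqn:11}--\eqref{eqn:00}: absence for matchings plays the role of presence for edge covers, and the boundary bits flip). Consequently, Lemma~\ref{lem:fiboexpr} becomes $\Pi^{bb'}_n(1/2,\dots,1/2) = f_{n+b+b'}/2^n$, the concatenation identity (Lemma~\ref{lem:concat}) only changes a minus sign, and — crucially — the emulating probabilities for the edge-cover version of Proposition~\ref{prp:fibo} can simply be taken to be $1-\pp(i),\,1-\pq(i),\,1-\pr(i),\,1-\ps(i)$, so condition~\ref{itema} is inherited for free and no new SageMath work or Fibonacci inequalities are required. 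The Jacobian non-vanishing also transfers immediately, since each partial derivative merely flips sign and the $4\times4$ determinant is unchanged. Your route would work, but it would reprove a substantial chunk of Appendix~\ref{apx:fibo} from scratch.

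Second, your $G''$ construction with dangling edges attached to vertices of $G$ not touched by $G'$ is both unnecessary and not obviously a subgraph of $G$ (those dangling edges need not exist in $G$). The correct and simplest move — which you gesture at in your last alternative — is to pass to the oracle the subgraph of $G$ isomorphic to $\sub(H,\eta)$, with no leftover vertices and hence no isolated vertices; this is in $\calG$ precisely by the subgraph-closure hypothesis, and that is all the closure assumption is used for.
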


This is proved like Result~\ref{res:main}, with the following
modifications. We reduce from counting edge covers (instead of matchings) on
$3$-regular planar graphs: this is hard by~\cite{cai2012holographic}, even on
simple graphs~\cite[Appendix~D]{amarilli2017conjunctive}. We now define
a selection function~$\mu$ to map
each vertex~$x \in V$ to \emph{at least} one incident edge, and we define the
types and the sets~$S_{\tau,\tau'}$ as before, via an arbitrary orientation of
the graph~$H$. We obtain the number of edge covers of~$H$ from the
quantities~$|S_{\tau,\tau'}|$ exactly as in Equation~\ref{eqn:param-8}. We
redefine  $\Pi^{bb'}_{n}(\rho)$ to be the probability of an edge cover in a
path of length~$n$ with probabilities~$\rho$ on the edges and with endpoint
constraints given by~$b,b'$ as before. Lemma~\ref{lem:fiboexpr} then becomes
$\Pi^{bb'}_{n}(1/2,\ldots,1/2) = \frac{f_{n+b+b'}}{2^n}$, i.e., the role
of~$b,b'$ is “reversed”. Analogous versions of Lemma~\ref{lem:concat} and of Proposition~~\ref{prp:fibo}
still hold, so the relevant Jacobian determinants are still non-identically null.
We take the graph~$G\in \calG$ again via the topological minor
extraction result, but this time directly extracting
$\sub(H,\eta) \in \calG$
as $\calG$ is subgraph-closed.
The rest of the proof is identical.

We point out that
the situation is different
for \emph{perfect} matchings.  Indeed, using a weighted variant of the FKT
algorithm~\cite[Chapter 4]{cai2017complexity},
the weighted counting of perfect matchings is polynomial-time over the class of planar graphs,
which is treewidth-constructible.

We conclude by leaving open two directions for future work. The first one would be to obtain
the same kind of lower bounds when the
probabilities annotate the \emph{nodes} instead of the edges, that is, studying
the corresponding weighted counting problems for, e.g., independent sets, vertex covers, or cliques. 
We believe that the corresponding result should hold and do not expect any
surprises.
The second question would be to show our hardness results in the \emph{unweighted} case,
e.g., unweighted counting of matchings, assuming that the graph family is
subgraph-closed. This appears to be much more challenging, as
our current proof crucially relies on the ability to use arbitrary probability values.

\bibliography{main.bib}

\appendix
\section{Hardness under ZPP reductions}
\label{apx:formal}
We formally show the result about hardness under ZPP reduction claimed in the
preliminaries:

\begin{claim}
  If $P_1$ is a problem which is \#P-hard, and $P_1$ has a
  ZPP-reduction to a problem~$P_2$, then $P_2$ is \#P-hard under ZPP reductions.
\end{claim}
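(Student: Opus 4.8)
The plan is to reduce everything to one composition fact: that a deterministic polynomial-time Turing reduction followed by a ZPP reduction is again a ZPP reduction. Concretely, I would first establish the statement: if a problem $A$ reduces to a problem $B$ by a deterministic polynomial-time Turing reduction and $B$ admits a ZPP reduction to a problem $C$, then $A$ admits a ZPP reduction to $C$. Granting this, the claim follows at once. By definition it is enough to exhibit, for every problem $P_0$ in \#P (which in particular covers every \#P-complete problem), a ZPP reduction from $P_0$ to $P_2$; and since $P_1$ is \#P-hard there is a deterministic polynomial-time Turing reduction from $P_0$ to $P_1$, which, composed with the given ZPP reduction from $P_1$ to $P_2$, yields the desired ZPP reduction from $P_0$ to $P_2$.

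For the composition fact I would argue directly. Let $M$ be the deterministic polynomial-time oracle machine witnessing $A \le_T^p B$; on input $x$ it runs for at most $p(|x|)$ steps for some polynomial $p$, hence makes at most $p(|x|)$ oracle queries, each of length at most $p(|x|)$. Let $M'$ be the ZPP algorithm with an oracle for $C$ witnessing the ZPP reduction from $B$ to $C$: on any instance $y$ it runs in polynomial time and, with probability at least a constant $\delta > 0$, outputs $B(y)$, and otherwise outputs the special failure value --- and, crucially, it \emph{never} outputs an incorrect value. The composed algorithm, on input $x$, simulates $M(x)$ and answers each oracle query $y$ of $M$ by an \emph{amplified} call to $M'$: it runs $M'(y)$ independently $t$ times and takes the first non-failure answer. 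This is sound exactly because $M'$ is zero-error; such an amplified call never returns a wrong value, returns $B(y)$ with probability at least $1 - (1-\delta)^t$, and returns failure otherwise. Choosing $t = O(\log |x|)$ large enough that $p(|x|)\cdot(1-\delta)^t \le 1/2$, a union bound over the at most $p(|x|)$ queries shows that with probability at least $1/2$ every query gets the true value $B(y)$, in which case the simulation coincides with running $M$ against a genuine $B$-oracle, so the composed algorithm outputs $A(x)$; if some amplified call fails, it outputs the failure value instead. As there are polynomially many queries, each resolved by $t = O(\log|x|)$ polynomial-time runs of $M'$, the whole computation runs in polynomial time, has success probability at least the constant $1/2$, and is zero-error --- that is, it is a ZPP reduction from $A$ to $C$. (As recalled in the preliminaries, the constant success probability can afterwards be boosted by independent repetition so that the failure probability becomes negligible.)

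The only delicate point --- and the one real obstacle --- is the zero-error bookkeeping in this composition: feeding $M$ a single \emph{wrong} oracle answer could silently corrupt its computation and produce an undetectably wrong final answer, so the argument genuinely depends on $M'$ being zero-error, which is what makes the amplification step legitimate, guarantees that every failure is detected, and ensures the composed algorithm never outputs an incorrect value. The remaining ingredients --- the polynomial running-time bound, the union bound over polynomially many queries, and the choice of a logarithmic number of repetitions --- are routine.
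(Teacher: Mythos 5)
Your proof is correct and follows essentially the same route as the paper's: compose the deterministic Turing reduction from $P_0$ to $P_1$ with the ZPP reduction from $P_1$ to $P_2$, amplify each ZPP call by independent repetition, and control the overall failure probability by a union bound over the polynomially many queries, relying throughout on the zero-error property so that only detectable failures (never silent corruption) can occur. The only cosmetic differences are that you factor the argument through an explicit composition lemma and choose a logarithmic number of repetitions per call, whereas the paper inlines the composition and takes $k$ polynomial in the number of calls; both choices are polynomial and equally valid.
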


\begin{proof}
  Let $P_0$ be a \#P-hard problem, and let us show that $P_0$ has a
  ZPP-reduction to~$P_2$. We will do so by composing the two reductions. The only
  subtle point is that the reductions are Turing reductions, so we must control
  the overall probability of failure knowing that each reduction to the
  $P_2$-oracle may fail. Specifically, given an instance $I$ to~$P_0$, perform the
  polynomial-time reduction using an oracle to~$P_1$, and evaluate the calls
  to~$P_1$ by performing the ZPP-reduction to~$P_2$ and using the oracle to~$P_2$.
  Let $n$ be the total number of calls of the ZPP-reduction to~$P_2$; it is a
  polynomial in~$|I|$. Let $c$ be the failure probability of the ZPP-reduction,
  and let $k$ be sufficiently large so that $n c^k \leq c$; this is polynomial
  in~$n$. When performing the ZPP-reduction to~$P_2$, repeat each call $k$ times.
  This guarantees that, by the union bound, the total probability of failure is
  at most $n c^k \leq c$, i.e., it is at most a constant; so indeed we have
  defined a ZPP reduction from~$P_0$ to~$P_2$.
\end{proof}

Following this claim, to prove Result~\ref{res:main}, we will establish
\#P-hardness under ZPP reductions simply by giving a ZPP reduction from a
\#P-hard problem.

\section{Proof of Fact~\ref{fct:param}}
\label{apx:param}
We prove Fact~\ref{fct:param}, whose statement we recall:
\fctparam*
Indeed, let us show that there is a bijection between the matchings of~$H$ and
the set $\bigcup_{\substack{\tau \in [m+1]^4 \\ \tau_{01} = \tau_{10} = 0}} S_{\tau}$; as
the~$S_\tau$ are pairwise disjoint, this is enough to prove the
claim. If~$M$ is a matching of~$H$, let~$\mu_M$ be the selection function
of~$H$ that assigns~$\emptyset$ to every node that is not in any edge
of~$M$, and that assigns~$\{e\}$ to every node that is in~$e$ for some 
edge~$e\in M$.  It is easy to see that~$\mu_M$ is a well-defined selection
function (because~$S$ is a matching), that it is in
some~$S_{\tau}$ with~$\tau_{01} = \tau_{10} = 0$, and that the mapping~$M\mapsto
\mu_M$ is injective. Furthermore, any selection function~$\mu$ that is in
some~$S_{\tau}$ with~$\tau_{01} = \tau_{10} = 0$ can be obtained as~$\mu_M$ for some matching~$M$ of~$H$: take~$M\colonequals \{\{x,y\}\mid
\{x,y\} \in \mu(x) \cap \mu(y)\}$, i.e., $\tau_{11}$ is the cardinality of the matching
and $\tau_{00} = m-\tau_{11}$ is the cardinality of the complement of the matching.
Hence we indeed have a bijection between the matchings of~$H$ and
$\bigcup_{\substack{\tau \in [m+1]^4 \\ \tau_{01} = \tau_{10} = 0}} S_{\tau}$.

\section{Proof of Fact~\ref{fact:match-to-prob}}
\label{apx:match-to-prob}
We prove Fact~\ref{fact:match-to-prob}, whose statement we recall here:

\matchtoprob*

Indeed, for a matching~$M_6$ of~$H_6$, consider the selection
function~$\mu_{M_6}$ defined as follows: for every directed edge~$(x,y)$
of~$\overrightarrow{H}$, letting $x \edgeg v_1 \edgeg v_2 \edgeg v_3 \edgeg v_4 \edgeg v_5 \edgeg y$ be the
corresponding path in~$H_6$, we add~$\{x,y\}$ to~$\mu_{M_6}(x)$ if the first
edge of that path is in~$M_6$, and we add~$\{x,y\}$ to~$\mu_{M_6}(y)$ if the
last edge of that path is in~$M_6$. Observe that~$\mu_{M_6}$ is indeed a
selection function, and that it is in~$S_\tau$ for exactly one
$4$-tuple~$\tau \in [m+1]^4$. We now use the definition of $\prm(H_6(\kappa))$, i.e., Equation~\eqref{eqn:match}, and split the sum according to the
$4$-tuple~$\tau$ for which the selection function~$\mu_{M_6}$ is
in~$S_\tau$, and then split again according to the 
specific~$\mu \in S_\tau$
to which the selection function is equal, as
follows.
\begin{align}
\nonumber \prm(H_6(\kappa)) \ &= \ \sum_{\text{matching }M_6\text{ of }H_6} \Pr_{H_6(\kappa)}(M_6)\\
\label{eqn:bla}&= \ \sum_{\tau \in [m+1]^4} \ \ \ \sum_{\mu \in S_\tau} \  \sum_{\substack{\text{matching }M_6\text{ of }H_6\\ \text{s.t.\ } \mu_{M_6} = \mu}} \ \ \ \Pr_{H_6(\kappa)}(M_6).
\end{align}
But, by definition of~$H_6(\kappa)$
and of the~$\Lambda_{\kappa,bb'}$, and because the edges are independent, for a selection function~$\mu \in S_\tau$ we have

\begin{align*}
\sum_{\substack{\text{matching }M_6\text{ of}\\ H_6\text{ s.t.\ } \mu_{M_6} = \mu}} \Pr_{H_6(\kappa)}(M_6) &=\ \frac{1}{2^{2m}} \times \left(\prod_{\substack{\{x,y\} \in E\\ \text{s.t.\ }\{x,y\} \notin \mu(x)\cup \mu(y)}} \Lambda_{\kappa,00}\right) \left(\prod_{\substack{\{x,y\} \in E\\ \text{s.t.\ }\{x,y\} \in \mu(y)\setminus \mu(x)}} \Lambda_{\kappa,01}\right)\\
&\phantom{lalalala} \times \left(\prod_{\substack{\{x,y\} \in E\\ \text{s.t.\ }\{x,y\} \in \mu(x)\setminus \mu(y)}} \Lambda_{\kappa,10}\right) \left(\prod_{\substack{\{x,y\} \in E\\ \text{s.t.\ }\{x,y\} \in \mu(x)\cap \mu(y)}} \Lambda_{\kappa,11}\right)\\
&= \  \frac{1}{2^{2m}}\times (\Lambda_{\kappa,00})^{\tau_{00}} \times (\Lambda_{\kappa,01})^{\tau_{01}} \times (\Lambda_{\kappa,10})^{\tau_{10}} \times (\Lambda_{\kappa,11})^{\tau_{11}}.
\end{align*}
Injecting the above into Equation~\eqref{eqn:bla},
and noticing that it only depends on $\tau$ and $\kappa$,
we can factorize the resulting expression to obtain Fact~\ref{fact:match-to-prob}, as wanted.

\section{Finding in PTIME Rational Values that Make the Matrix Invertible}
\label{apx:85}
In this section we give more details about the result by Dalvi and Suciu that allows us
throughout the paper to find rational values of the probabilities~$\rho_\kappa$
that make the relevant matrices from
Sections~\ref{sec:sub6},~\ref{sec:same-length} and~\ref{sec:general} have
rational entries and be invertible.  
This result appears as Proposition 8.44 of~\cite{dalvi2012dichotomy}
and is restated as Proposition~\ref{prp:85} in our paper.

The original result of Dalvi and Suciu only mentions \emph{rational}
probability values instead of decimal fractions, but this would not be good
enough for our proof in Section~\ref{sec:general}.  Fortunately, an inspection
of the proof of this result reveals that this still holds if we want to compute
decimal fractions: indeed, the proof relies on the 
multivariate version of the following observation to find the rational values: if~$f$ is a polynomial of
degree~$d$ in one variable that is not the null polynomial, to find a value
that is not a root of~$f$ it is enough to try~$d+1$ distinct values of the
variable. This obviously still works if we additionally require that the values
are decimal fractions (see \cite[Section 8.5]{dalvi2012dichotomy} for the full
proof).

In our context, we only use this result with $k = 4$ and~$I=J=[m+1]^4$. For
Section~\ref{sec:sub6}, the polynomials~$P_\tau$ are those mentioned in the
“Making $\mV$ invertible” paragraph.  For Section~\ref{sec:same-length} the
polynomials~$P_\tau$ are this time given according to
Equation~\ref{eqn:match-to-prob-same}.  For Section~\ref{sec:general}, we use
it independently
for 
the two matrices~$\mV$ and~$\mV'$, which are both identical (up to renaming of variables) to the matrix from
Section~\ref{sec:same-length}.

\section{Explicit Computation of the Jacobian Determinant}
\label{apx:jacobian-first}
By explicit computation with the help of SageMath (see
\verb|jacobian-pqrs.ipynb| in supplementary material~\cite{supmat}) we obtain that 
$\det(\mJ_\xi) =
\chi_{00}\chi_{11}(\chi_{00}+\chi_{10}-\chi_{00}\chi_{10})(1-\chi_{00})(1-\chi_{01})^2(1-\chi_{11})^3$.  
This evaluates to $\frac{1}{128}$
for~$\chi_{00}=\chi_{11}=1/2$, $\chi_{01}=\chi_{10}=0$.

\section{Proof of Lemma~\ref{lem:concat}}
\label{apx:concat}
We prove Lemma~\ref{lem:concat}, whose statement we recall here.

\concat*

The intuition for the result is that the edge subsets that are matchings are
those where the $(n+1)$-th edge is not kept, plus those where the $n$-th edge is
not kept, minus those where the $n$-th and $(n+1)$-th edge were not kept (as
these were counted twice).

We now give the formal proof.
We prove the claim for~$b=b'=0$, the other cases being similar. Let us
see~$P_{n+n'}$ as the concatenation of the two paths~$P_n$ and~$P_{n'}$,
where~$P_n$ and~$P_{n'}$ are disjoint except for the connecting node, and
define the following sets and quantities, for~$b,b'\in \{0,1\}$:

\begin{itemize}
  \item $M_{n+n'}^{bb'}$ is the set of matchings~$M$ of~$P_{n+n'}$ such that
the~$n$-th edge is in~$M$ iff~$b=1$ and the~$(n+1)$-th edge is in~$M$
iff~$b'=1$ (in particular $M_{n+n'}^{11}$ is empty); and~$\alpha^{bb'}$ is
the probability of $M_{n+n'}^{bb'}$ in $P_{n+n'}(\rho,\rho')$.
  \item $M_{n}^{\bullet, b}$ is the set of matchings~$M$ of~$P_n$ such that, if~$b=1$
then the~$n$-th edge is not in~$M$; in particular,
$\Pi_n^{0,b}(\rho)$ is the probability of~$M_{n}^{\bullet, b}$ in
$P_{n}(\rho)$
  \item $M_{n'}^{b, \bullet}$ is the set of matchings~$M$ of~$P_{n'}$ such that,
if~$b=1$ then the first edge is not in~$M$; in particular,
$\Pi_{n'}^{b,0}(\rho')$ is the probability of~$M_{n'}^{b,
    \bullet}$ in $P_{n'}(\rho')$.\\
\end{itemize}
Observe that the set of matchings of~$P_{n+n'}$ is the disjoint union
of~$M_{n+n'}^{00}$, $M_{n+n'}^{01}$ and $M_{n+n'}^{10}$, so that we have
\begin{equation}
\label{eqn:stup}
\Pi_{n+n'}^{00}(\rho,\rho') = \alpha^{00} + \alpha^{01} + \alpha^{10}.\end{equation}
Now, if~$S$ is a set of sets of edges of~$P_n$ and~$S'$ a set of sets of edges of~$P_{n'}$, define
\linebreak$S \odot S' \colonequals \{M \cup M'  \mid M\in S \text{ and } M' \in S'\}$.
Then, observe that

\begin{itemize}
  \item $M_n^{\bullet,0} \odot M_{n'}^{1,\bullet} = M_{n+n'}^{00} \cup M_{n+n'}^{10}$; and
  \item $M_n^{\bullet,1} \odot M_{n'}^{0,\bullet} = M_{n+n'}^{00} \cup M_{n+n'}^{01}$; and
  \item $M_n^{\bullet,1} \odot M_{n'}^{1,\bullet} = M_{n+n'}^{00}$.\\
\end{itemize}
The above right-hand side unions being disjoint, this implies that ($\star$):

\begin{itemize}
  \item $\Pr(M_n^{\bullet,0} \odot M_{n'}^{1,\bullet}) = \alpha^{00} + \alpha^{10}$; and
  \item $\Pr(M_n^{\bullet,1} \odot M_{n'}^{0,\bullet}) = \alpha^{00} + \alpha^{01}$; and
  \item $\Pr(M_n^{\bullet,1} \odot M_{n'}^{1,\bullet}) = \alpha^{00}$,\\
\end{itemize}
where the probability distribution of~$\Pr$ is that of
$P_{n+n'}(\rho,\rho')$. Now, using the fact that
the edges are independent, notice that~$\Pr(M_n^{\bullet,b} \odot
M_{n'}^{b',\bullet}) = \Pi_n^{0,b}(\rho) \times \Pi_{n'}^{b',0}(\rho')$.
Combining this observation together with ($\star$) and Equation~\eqref{eqn:stup}
concludes.

\section{Proof of Lemma~\ref{lem:fiboexpr}}
\label{apx:fiboexpr}
Here we prove Lemma~\ref{lem:fiboexpr}, whose statement we recall.

\fiboexpr*

We first show the claim for $b = b' = 0$ by induction on $n$. Recall that we
have $\Pi^{00}_n(1/2,\ldots,1/2) =  \Pi_n(1/2,\ldots,1/2)$. We have 
$\Pi_1(1/2) = 1$ and $f_3/2^1 = 1$, and $\Pi_2(1/2) = \frac{3}{4}$ and $f_4/2^2 = \frac{3}{4}$.  For $n>2$, we have:
\begin{align*}
\Pi^{00}_n(1/2,\ldots,1/2) &= \Pi_n(1/2,\ldots,1/2) = \frac{1}{2} \Pi_{n-1}(1/2,\ldots,1/2) + \frac{1}{2} \frac{1}{2} \Pi_{n-2}(1/2,\ldots,1/2)\\
& = \frac{1}{2} \Pi^{00}_{n-1}(1/2,\ldots,1/2) + \frac{1}{2} \frac{1}{2} \Pi^{00}_{n-2}(1/2,\ldots,1/2).
\end{align*}

Using the induction hypothesis, we have:
\[
\Pi^{00}_n(1/2,\ldots,1/2) = \frac{1}{2} \frac{f_{n+1}}{2^{n-1}} + \frac{1}{2} \frac{1}{2}
\frac{f_{n}}{2^{n-2}}.
\]
Hence:
\[
\Pi^{00}_n(1/2,\ldots,1/2) = \frac{1}{2^n} (f_{n+1} + f_{n}).
\]
Now, the definition of the Fibonacci sequence concludes.

Second, we show the claim for arbitrary $b, b'$. It is clear that
$\Pi^{01}_{n}(1/2,\ldots,1/2) = \frac{1}{2} \Pi^{00}_{n-1}(1/2,\ldots,1/2)$
because the last edge must be absent, and then the condition on the~$n-1$
remaining edges is the same as the condition on $P_{n-1}(1/2, \ldots, 1/2)$.
For the same reason, $\Pi^{10}_{n}(1/2,\ldots,1/2) = \frac{1}{2}
\Pi^{00}_{n-1}(1/2,\ldots,1/2)$, and $\Pi^{11}_{n}(1/2,\ldots,1/2) =
\frac{1}{2^2} \Pi^{00}_{n-2}(1/2,\ldots,1/2)$. Hence, we obtain the claimed
equalities.

\section{Proof of Fact~\ref{fct:jacobian}}
\label{apx:jacobian}
We prove Fact~\ref{fct:jacobian}, whose statement we recall:
\jacobian*

Let us first consider the Jacobian $\mJ_{\xi_N}$, which is as follows:

           \begin{equation*}
           \mJ_{\xi_N} = 
           \begin{pmatrix}
             \frac{\partial \Upsilon_{\chi,00}}{\partial \chi_{00}} & \frac{\partial \Upsilon_{\chi,00}}{\partial \chi_{01}} & \frac{\partial \Upsilon_{\chi,00}}{\partial \chi_{10}} & \frac{\partial \Upsilon_{\chi,00}}{\partial \chi_{11}} \\
             \frac{\partial \Upsilon_{\chi,01}}{\partial \chi_{00}} & \frac{\partial \Upsilon_{\chi,01}}{\partial \chi_{01}} & \frac{\partial \Upsilon_{\chi,01}}{\partial \chi_{10}} & \frac{\partial \Upsilon_{\chi,01}}{\partial \chi_{11}} \\
             \frac{\partial \Upsilon_{\chi,10}}{\partial \chi_{00}} & \frac{\partial \Upsilon_{\chi,10}}{\partial \chi_{01}} & \frac{\partial \Upsilon_{\chi,10}}{\partial \chi_{10}} & \frac{\partial \Upsilon_{\chi,10}}{\partial \chi_{11}} \\
             \frac{\partial \Upsilon_{\chi,11}}{\partial \chi_{00}} & \frac{\partial \Upsilon_{\chi,11}}{\partial \chi_{01}} & \frac{\partial \Upsilon_{\chi,11}}{\partial \chi_{10}} & \frac{\partial \Upsilon_{\chi,11}}{\partial \chi_{11}} \\
           \end{pmatrix}.
           \end{equation*}

And recall from Section~\ref{sec:sub6} that we have:
           \begin{equation*}
           \mJ_\xi = 
           \begin{pmatrix}
             \frac{\partial \Lambda_{\chi,00}}{\partial \chi_{00}} & \frac{\partial \Lambda_{\chi,00}}{\partial \chi_{01}} & \frac{\partial \Lambda_{\chi,00}}{\partial \chi_{10}} & \frac{\partial \Lambda_{\chi,00}}{\partial \chi_{11}} \\
             \frac{\partial \Lambda_{\chi,01}}{\partial \chi_{00}} & \frac{\partial \Lambda_{\chi,01}}{\partial \chi_{01}} & \frac{\partial \Lambda_{\chi,01}}{\partial \chi_{10}} & \frac{\partial \Lambda_{\chi,01}}{\partial \chi_{11}} \\
             \frac{\partial \Lambda_{\chi,10}}{\partial \chi_{00}} & \frac{\partial \Lambda_{\chi,10}}{\partial \chi_{01}} & \frac{\partial \Lambda_{\chi,10}}{\partial \chi_{10}} & \frac{\partial \Lambda_{\chi,10}}{\partial \chi_{11}} \\
             \frac{\partial \Lambda_{\chi,11}}{\partial \chi_{00}} & \frac{\partial \Lambda_{\chi,11}}{\partial \chi_{01}} & \frac{\partial \Lambda_{\chi,11}}{\partial \chi_{10}} & \frac{\partial \Lambda_{\chi,11}}{\partial \chi_{11}} \\
           \end{pmatrix}.
           \end{equation*}

Let us call $L_{00}$, $L_{01}$, $L_{10}$, $L_{11}$ the lines of
the above, so that we write it:
\begin{equation*}
\mJ_{\xi} = 
\begin{pmatrix}
L_{00} \\
L_{01} \\
L_{10} \\
L_{11} \\
\end{pmatrix}.
\end{equation*}

Now, using Fact~\ref{fact:alpha-t-f}, we can express the lines of the matrix
$\mJ_{\xi_N}$ as linear combinations of the $L_{b,b'}$, thus:

\begin{equation*}
\mJ_{\xi_N} = \frac{1}{2^{4N}}
\begin{pmatrix}
L_{00} f_{N+1} + L_{01} f_N \\
L_{00} f_N + L_{01} f_{N-1} \\
L_{10} f_{N+1} + L_{11} f_N \\
L_{10} f_N + L_{11} f_{N-1} \\
\end{pmatrix}.
\end{equation*}

We now use two properties of the determinant:
\begin{itemize}
\item It is multilinear, so that the determinant of the above matrix can be
expressed as the sum of the determinants of the 16 matrices obtained by choosing
one term in each row
\item It is alternating: the terms where the same $L_{b,b'}$ occurs twice (even
with different coefficients) have a determinant of zero
\end{itemize}

Thus, we can write $\det(\mJ_{\xi_N})$ as:
\begin{align*}
  \det(\mJ_{\xi_N}) = & \frac{1}{2^{4N}} \left(
f_{N+1}^2 f_{N-1}^2
\det
\begin{pmatrix}
L_{00} \\
L_{01} \\
L_{10} \\
L_{11} \\
\end{pmatrix}
+
f_{N+1} f_{N-1} f_N^2
\det
\begin{pmatrix}
L_{00} \\
L_{01} \\
L_{11} \\
L_{10} \\
\end{pmatrix}\right.
  \\ &
\left.
  \quad\quad\quad\quad
\quad\quad\quad\quad
  +
f_{N+1} f_{N-1} f_N^2
\det
\begin{pmatrix}
L_{01} \\
L_{00} \\
L_{10} \\
L_{11} \\
\end{pmatrix}
+
f_N^4
\det
\begin{pmatrix}
L_{00} \\
L_{01} \\
L_{11} \\
L_{10} \\
\end{pmatrix}
\right)
\end{align*}

Note that the determinant in the first term is 
that of $\mJ_{\xi}$. As for the others, we use the fact that swapping two rows of a matrix multiplies the
determinant by $-1$, to obtain the same determinants. This gives:
\[
\det(\mJ_{\xi_N}) = \frac{
f_{N+1}^2 f_{N-1}^2
-2
f_{N+1} f_{N-1} f_N^2
+
f_N^4
}{2^{4N}} \times \det(\mJ_{\xi}) 
\]

We now observe that:
\[
f_{N+1}^2 f_{N-1}^2 -2 f_{N+1} f_{N-1} f_N^2 + f_N^4 = (f_{N+1} f_{N-1} -
f_N^2)^2
\]

By Cassini's identity, the right-hand side evaluates to 1, so that we have the
claimed equality.

\section{Proof for Remark~\ref{rmk:parity}}
\label{apx:parity-incompatibility}
In this section we prove the following proposition, which justifies that we
need to redefine the parameters~$S_\tau$ to differentiate between even-length
and odd-length subdivisions (cf.\ Remark~\ref{rmk:parity}).  Note that this is
not necessary for the proof of our main result and can safely be skipped; we
only include it for completeness.

\begin{proposition}
\label{prp:parity-incompatibility}
Let~$n\in \mathbb{N}^+$ and~$\rho \in [0,1]^n$, and let us consider
the “determinant-style” quantity \[D = \Pi_n^{01}(\rho) \times
\Pi_n^{10}(\rho) - \Pi_n^{00}(\rho) \times
\Pi_n^{11}(\rho).\]
  Then~$D$ is null if and only if one of the $\rho_i$ is $0$ or~$1$. Otherwise, $D$
  is positive if $n$ is even, and negative if $n$ is odd.
\end{proposition}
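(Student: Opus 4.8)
The plan is to recognise that the quantity $D$ is \emph{multiplicative} under concatenation of paths, and then to reduce the whole statement to the trivial case of a single edge. Concretely, for a probability tuple $\rho$ I would arrange the four behaviours into the $2\times 2$ matrix $T(\rho) = \left(\begin{smallmatrix}\Pi^{00}(\rho) & \Pi^{01}(\rho)\\ \Pi^{10}(\rho) & \Pi^{11}(\rho)\end{smallmatrix}\right)$ (writing $\Pi^{bb'}$ for $\Pi^{bb'}_{|\rho|}$), and check, by directly expanding, that Lemma~\ref{lem:concat} is exactly the matrix identity
\[
  T(\rho,\rho') \ = \ T(\rho)\begin{pmatrix} 0 & 1 \\ 1 & -1 \end{pmatrix} T(\rho')
\]
for all $\rho \in [0,1]^n$ and $\rho' \in [0,1]^{n'}$: the $(b,b')$-entry of the right-hand product works out to $\Pi^{b1}(\rho)\Pi^{0b'}(\rho') + (\Pi^{b0}(\rho) - \Pi^{b1}(\rho))\Pi^{1b'}(\rho')$, which is precisely the expression given by Lemma~\ref{lem:concat}.

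Next I would observe that $D = \Pi^{01}(\rho)\Pi^{10}(\rho) - \Pi^{00}(\rho)\Pi^{11}(\rho) = -\det T(\rho)$, and that the constant matrix above has determinant $-1$. Taking determinants in the identity above then gives the multiplicativity property $D(\rho,\rho') = D(\rho)\,D(\rho')$. Applying this repeatedly, writing $P_n(\rho)$ as the concatenation of its $n$ single-edge subpaths, yields $D(\rho_0,\dots,\rho_{n-1}) = \prod_{i=0}^{n-1} D(\rho_i)$, where $D(\rho_i)$ is the corresponding quantity for the length-$1$ path $P_1(\rho_i)$. I would then compute this base case directly from the definitions: for a single edge of probability $p$ one has $\Pi_1^{00}(p)=1$ and $\Pi_1^{01}(p)=\Pi_1^{10}(p)=\Pi_1^{11}(p)=1-p$ (a forced endpoint edge excludes the unique edge of $P_1$), so $D(p) = (1-p)^2 - (1-p) = -p(1-p)$. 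Substituting, this gives
\[
  D(\rho_0,\dots,\rho_{n-1}) \ = \ (-1)^n \prod_{i=0}^{n-1} \rho_i\,(1-\rho_i),
\]
from which the proposition is immediate: the product vanishes exactly when some $\rho_i \in \{0,1\}$, and otherwise every factor $\rho_i(1-\rho_i)$ is strictly positive, so $D$ has the sign of $(-1)^n$, i.e.\ it is positive for even $n$ and negative for odd $n$.

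This argument is short, so I do not expect a real obstacle; the only points needing a little care are the base-case evaluation of the $\Pi_1^{bb'}(p)$ and the bookkeeping to ensure Lemma~\ref{lem:concat} is only invoked on nonempty tuples (which is fine when one peels off one edge at a time). A more pedestrian alternative would be a direct induction on $n$ using Lemma~\ref{lem:concat} with $n'=1$ while tracking the sign, but the determinant/multiplicativity formulation seems cleanest and avoids any case distinction on $b,b'$.
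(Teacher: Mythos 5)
Your proof is correct, and it takes a genuinely different and cleaner route than the paper. The paper's own argument is a two-step induction on~$n$: it hard-codes the base cases $n=1,2$, peels one edge off \emph{each} end of the path, expands $\Pi_n^{00}, \Pi_n^{01}, \Pi_n^{10}, \Pi_n^{11}$ in terms of $\Pi_{n-2}^{bb'}$ of the interior path, and after a fair amount of cancellation observes
\[
  D(\rho_0,\dots,\rho_{n-1}) \ = \ \rho_0\,\overline{\rho_0}\,\rho_{n-1}\,\overline{\rho_{n-1}}\, D(\rho_1,\dots,\rho_{n-2}),
\]
then reads off the sign and the vanishing condition from the induction hypothesis. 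Your version replaces this bookkeeping with the transfer-matrix reformulation of Lemma~\ref{lem:concat}: the identity $T(\rho,\rho')=T(\rho)\bigl(\begin{smallmatrix}0&1\\1&-1\end{smallmatrix}\bigr)T(\rho')$ is a direct rewriting of the lemma, taking determinants immediately yields $D(\rho,\rho')=D(\rho)D(\rho')$ (the middle matrix has determinant~$-1$, cancelling the two minus signs from $D=-\det T$), and peeling off single edges gives the closed form $D=(-1)^n\prod_i\rho_i(1-\rho_i)$. This is shorter, avoids the case distinction on $b,b'$, avoids needing $n=2$ as a separate base case, and as a bonus produces an explicit product formula rather than just the sign and vanishing locus. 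The only small verification you rightly flag is the base case $\Pi_1^{00}=1$, $\Pi_1^{01}=\Pi_1^{10}=\Pi_1^{11}=1-p$, which is immediate from the definition $\Pi_1^{bb'}(p)=\Pi_3(b,p,b')$.

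One presentational nit: it is worth stating explicitly that you are peeling off one edge at a time (so both factors in each application of Lemma~\ref{lem:concat} are nonempty), which you do mention at the end; in a final write-up that remark should appear at the point where the decomposition is invoked rather than as an afterthought.
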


The proof is by induction on $n$.
We note that similar reasoning appears in \cite[Lemma
7.2]{amarilli2022uniform}.

  For $n=1$, we have $\Pi_1^{00}(\rho_0) = 1$ and $\Pi_1^{bb'}(\rho_0) =
1-\rho_0$ for $b, b' \in \{0, 1\}$ such that $(b, b') \neq (0, 0)$.  Thus $D =
(1-\rho_0)^2 - 1+\rho_0 = -\rho_0(1-\rho_0)$, which is zero if $\rho_0=0$ or
$\rho_0=1$ and negative otherwise.

  For~$n=2$,  we have:
 \begin{itemize}
   \item $\Pi_2^{00}(\rho_0, \rho_1) = 1 - \rho_0 \rho_1$
   \item $\Pi_2^{10}(\rho_0, \rho_1) = 1 - \rho_0$
   \item $\Pi_2^{01}(\rho_0, \rho_1) = 1 - \rho_1$
   \item $\Pi_2^{11}(\rho_0, \rho_1) = (1 - \rho_0)(1 - \rho_1)$
 \end{itemize}
 Thus $D = (1-\rho_0)(1-\rho_1) - (1-\rho_0\rho_1)(1-\rho_0)(1-\rho_1) = (1-\rho_0)(1-\rho_1) \rho_0 \rho_1$.
 This is zero if one of $\rho_0$ or $\rho_1$ is equal to $0$ or~$1$, and
 positive otherwise.

  We now reason by induction.
  For brevity we omit the arguments of $\Pi$
and use $\Pi_n^{bb'}$ for $\Pi_n^{11}(\rho)$ and $\Pi_{n-2}^{bb'}$ for $\Pi_{n-2}^{bb'}(\rho_1, \ldots,
  \rho_{n-2})$. We also write for brevity $\overline{p} = 1-p$ for~$p\in [0,1]$.
  For $n>2$, we have:
  \begin{itemize}
    \item $\Pi_n^{11} =
      \overline{\rho_0}\,\overline{\rho_{n-1}} \Pi_{n-2}^{00}$
    \item $\Pi_n^{01} =
      \Pi_n^{11} +  \rho_0 \overline{\rho_{n-1}} \Pi_{n-2}^{10}$ because the possible edge subsets with an added edge to the
      right are those with an added edge to both ends, plus those where the
      $\rho_0$ edge is kept (and the $\rho_{n-1}$ edge is not kept)
    \item $\Pi_n^{10} =
      \Pi_n^{11} + \overline{\rho_0} \rho_{n-1} \Pi_{n-2}^{01}$ (symmetrically)
    \item $\Pi_n^{00} =
      \Pi_n^{01} + \Pi_n^{10} -
      \Pi_n^{11} + \rho_0 \rho_{n-1} \Pi_{n-2}^{11}$ because the possible edge subsets with no added edges are
      those where the $\rho_0$ edge must not be kept, plus those where the
      $\rho_{n-1}$ edge must not be kept, minus those where the $\rho_0$ and $\rho_{n-1}$
      edge must both not be kept (double counts), plus those where the $\rho_0$ and
      $\rho_{n-1}$ edges must both be kept. Expanding the definitions of
      $\Pi_n^{01}$ and $\Pi_n^{10}$, this rewrites to:
      $\Pi_n^{00} =
      \Pi_n^{11} + \overline{\rho_{n-1}} \rho_0 \Pi_{n-2}^{10} + \overline{\rho_0} \rho_{n-1} \Pi_{n-2}^{01}
      + \rho_0 \rho_{n-1} \Pi_{n-2}^{11}$
  \end{itemize}
  Let us compute~$D$:
  \begin{align*}
    D = &(\Pi_n^{11} + \overline{\rho_{n-1}}\rho_0\Pi^{10}_{n-2}) 
    (\Pi_n^{11} + \overline{\rho_0}\rho_{n-1}\Pi^{01}_{n-2})\\
    &- \Pi_n^{11} (\Pi_n^{11} + \overline{\rho_{n-1}} \rho_0 \Pi_{n-2}^{10} +
    \overline{\rho_0} \rho_{n-1} \Pi_{n-2}^{01}
      + \rho_0 \rho_{n-1} \Pi_{n-2}^{11})
   \end{align*}
    Many terms simplify, leaving:
  \[
    D = \overline{\rho_{n-1}}\rho_0\Pi^{10}_{n-2} \overline{\rho_0}\rho_{n-1}\Pi^{01}_{n-2}
    - \Pi_n^{11} \rho_0 \rho_{n-1} \Pi_{n-2}^{11}
    \]
    Expanding the definition of $\Pi_n^{11}$ and factoring yields:
    \[
      D = \rho_0 \overline{\rho_0}\rho_{n-1}\overline{\rho_{n-1}} (\Pi^{10}_{n-2} \Pi^{01}_{n-2}
  - \Pi_{n-2}^{00}\Pi_{n-2}^{11})
\]
We recognize the expression of the determinant-style expression for $n-2$. Thus,
  $D$ is zero if that expression is zero, i.e., by induction, one of the $\rho_1,
  \ldots, \rho_{n-2}$ is $0$ or $1$; or if one of $\rho_0, \rho_{n-1}$ is $0$ or
  $1$. Thus $D$ is zero iff one of the $\rho_0, \ldots, \rho_{n-1}$ is $0$ or~$1$.
  Otherwise the sign of~$D$ is that of the expression for $n-2$, so the
  induction hypothesis concludes.

\section{Proof of the Emulation Result (Proposition~\ref{prp:fibo})}
\label{apx:fibo}
In this section we prove Proposition~\ref{prp:fibo}. We recall its statement
for the reader's convenience:

\fibo*

We prove Proposition~\ref{prp:fibo} in the rest of this appendix section.
We fix once and for all the even integer~$i\geq 4$.
To prove the result, we first give some general-purpose inequality lemmas about the Fibonacci
sequence in Appendix~\ref{apx:fibolem} which we use in several places. Then we prove Proposition~\ref{prp:fibo} in four steps, 
corresponding to the following four subsections.

First, in Section~\ref{apx:fibo-equiv} we derive a
system of equations, denoted~\ref{iteme}, that is equivalent to \ref{itemb}.
We then we prove in Section~\ref{apx:fibo-bornes} that any tuple of real
numbers $(p,q,r,s)$ that is a solution to~\ref{iteme} must be in~$(0,1)^4$.  In
Section~\ref{apx:fibo-sat} we use SageMath to help us find symbolic expressions
that satisfy system~\ref{iteme}.  Last, in Section~\ref{apx:fibo-welldef}, we
show that these expressions are indeed well-defined.  Putting it all together
gives us Proposition~\ref{prp:fibo}.

\subsection{Inequality Lemmas on the Fibonacci Sequence}
\label{apx:fibolem}
Let us first prove the general-purpose results on the Fibonacci sequence.
We will use \emph{Binet's
formula}, a closed-form expression for~$f_n$ given by~$f_n = \frac{\varphi^n -
(1-\varphi)^n}{\sqrt{5}}$, where~$\varphi \colonequals \frac{1+\sqrt{5}}{2}$
($\approx 1.61$) is the \emph{golden ratio}.  Looking at the statement of the
next three lemmas, notice that it is clear that these inequalities are true
asymptotically, since~$f_n \sim \frac{\varphi^n}{\sqrt{5}}$ when~$n$ goes to
infinity; however, we need to prove that the inequalities hold when starting
from some specific values.  To this end, we will use the following two
trivialities:

\begin{fact}
\label{fact:2-vs-phi}
For all~$\alpha>0$ and~$n \geq n_\alpha \colonequals \frac{\log\alpha}{\log 2 - \log \varphi}$, we have
 $2^n \geq \alpha \times \varphi^n$.
\end{fact}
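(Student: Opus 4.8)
The plan is to take logarithms and reduce the claim to a trivial linear inequality in $n$. Fix $\alpha>0$. Since both $2^n$ and $\alpha\varphi^n$ are strictly positive, applying the strictly increasing function $\log$ shows that $2^n \geq \alpha\varphi^n$ is equivalent to $n\log 2 \geq \log\alpha + n\log\varphi$, i.e.\ to
\[
  n\,(\log 2 - \log\varphi) \;\geq\; \log\alpha .
\]

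The only point that requires a word of justification is that $\log 2 - \log\varphi > 0$. I would argue this from $\sqrt 5 < 3$, which gives $\varphi = \frac{1+\sqrt 5}{2} < 2$, hence $\log\varphi < \log 2$. Since $\log 2 - \log\varphi$ is therefore a strictly positive constant, dividing the displayed inequality by it preserves its direction and produces the equivalent condition $n \geq \frac{\log\alpha}{\log 2 - \log\varphi} = n_\alpha$. As every manipulation above is an equivalence, the hypothesis $n \geq n_\alpha$ implies $2^n \geq \alpha\varphi^n$, which is exactly the claim.

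There is no real obstacle here: the statement is genuinely elementary (the paper itself calls it a triviality), and the only care needed is (i) to check the sign of $\log 2 - \log\varphi$ before dividing by it, and (ii) to keep track that the chain of manipulations consists of equivalences, so that the one-directional implication we actually need, namely $n \geq n_\alpha \Rightarrow 2^n \geq \alpha\varphi^n$, indeed follows. As a sanity check one can note that when $\alpha \leq 1$ one has $\log\alpha \leq 0$ and hence $n_\alpha \leq 0$, so the bound is vacuous for nonnegative $n$; this is consistent with the fact that in that regime $2^n \geq \varphi^n \geq \alpha\varphi^n$ holds for every $n$ anyway.
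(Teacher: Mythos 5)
Your proof is correct and uses essentially the same route as the paper: take logarithms, observe that $\log 2 - \log\varphi > 0$ because $\varphi < 2$, and rearrange the resulting linear inequality in $n$. The only cosmetic difference is that you present the manipulations as a chain of equivalences starting from the conclusion, whereas the paper argues forward from $n \geq n_\alpha$; both are sound.
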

\begin{proof}
  We have $n_\alpha (\log 2 - \log \varphi) \geq \log \alpha$, noting that $\log 2
  - \log \varphi > 0$ because $\varphi < 2$. Thus, we have $n \log 2 \geq \log \alpha
  + n \log \varphi$. As the exponential is an increasing function, we obtain the claimed inequality.
\end{proof}
and
\begin{fact}
\label{fact:fiboeps}
For all~$n\geq 2$ we have
\[
 \frac{1}{2} \ \frac{\varphi^n}{\sqrt{5}}\ \leq \ f_n \ \leq \ \frac{3}{2} \ \frac{\varphi^n}{\sqrt{5}}.
\]
\end{fact}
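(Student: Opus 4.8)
The plan is to invoke Binet's formula, $f_n = \frac{\varphi^n - \psi^n}{\sqrt{5}}$ with $\psi \colonequals 1-\varphi = \frac{1-\sqrt{5}}{2}$, and to reduce the whole statement to one elementary estimate. First I would record the identity $\varphi\psi = -1$ (both $\varphi$ and $\psi$ are roots of $x^2 = x+1$), so that $\psi = -1/\varphi$ and therefore
\[
\frac{\sqrt{5}\, f_n}{\varphi^n} \ = \ 1 - \Bigl(\frac{\psi}{\varphi}\Bigr)^{\!n} \ = \ 1 - (-1)^n \varphi^{-2n}.
\]
Then I would bound the correction term: whatever the parity of $n$, its absolute value is exactly $\varphi^{-2n}$, and for $n \geq 2$ this is at most $\varphi^{-4} = 1/(\varphi+1)^2$ (using $\varphi^2 = \varphi+1$), which is $< 1/4$ since $\varphi > 1$ forces $(\varphi+1)^2 > 4$. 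Hence $\frac{\sqrt{5}\, f_n}{\varphi^n}$ lies in $(3/4, 5/4) \subseteq (1/2, 3/2)$, and multiplying through by the positive quantity $\varphi^n/\sqrt{5}$ gives both claimed inequalities at once.

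I do not expect a real obstacle here; the only point needing slight care is that $\psi^n$ changes sign with the parity of $n$, which I sidestep by bounding $|\psi^n/\varphi^n|$ rather than the signed quantity. If one preferred to avoid Binet's formula altogether, an \emph{induction} on $n$ would also work: verify $n = 2$ and $n = 3$ directly, and for $n > 3$ add the two inductive bounds for $f_{n-1}$ and $f_{n-2}$, using $\varphi^n = \varphi^{n-1} + \varphi^{n-2}$ so that the constants $1/2$ and $3/2$ factor cleanly out of the sum; but the closed-form argument is shorter and reuses Binet's formula, which the surrounding text already introduces.
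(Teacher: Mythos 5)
Your proof is correct, and it uses the same core tool as the paper — Binet's formula — but with a cleaner decomposition. The paper first bounds the \emph{additive} error, arguing via a logarithm manipulation that $|\psi^n|/\sqrt{5} = (\varphi-1)^n/\sqrt{5} \leq 1/2$ for $n\geq 2$, and then separately observes $\varphi^n/\sqrt{5}\geq 1$ to convert this into the multiplicative bound $[\tfrac12,\tfrac32]$. You instead bound the \emph{ratio} directly: using $\varphi\psi=-1$ you rewrite
\[
\frac{\sqrt{5}\,f_n}{\varphi^n} = 1-(-1)^n\varphi^{-2n},
\]
and bound $\varphi^{-2n}\leq\varphi^{-4}<1/4$ for $n\geq 2$, landing in $(3/4,5/4)\subseteq(1/2,3/2)$. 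Your route is shorter, avoids the paper's slightly awkward step of comparing $n\geq 2$ against $\log(\sqrt{5}/2)/\log(\varphi-1)\approx-0.2$ (a threshold that is trivially satisfied and adds no real content), and actually yields a strictly tighter constant than what the fact asserts. Both arguments are elementary and equally valid for the paper's purposes; yours is a modest but genuine streamlining.
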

\begin{proof}
Let us first show that we have
  \begin{equation}
    \label{eqn:encadre}
\frac{\varphi^n}{\sqrt{5}} - \frac{1}{2} \leq f_n \leq \frac{\varphi^n}{\sqrt{5}} + \frac{1}{2}.
  \end{equation}
  To show this, let 
  us consider the quantity $\frac{\log(\frac{\sqrt{5}}{2})}{\log(\varphi -
  1)}\simeq -0.2$, which is negative (note that $\varphi-1 > 0$ so the quantity
  is well-defined). As $n \geq 2$, we therefore have 
  $n \geq \frac{\log(\frac{\sqrt{5}}{2})}{\log(\varphi -
  1)}$. Multiplying by $\log (\varphi-1)$, which is negative, we have:
  \[n \log (\varphi-1) \leq \log(\frac{\sqrt{5}}{2}).\]
  The exponential is an increasing function, so we can exponentiate and get:
  \[(\varphi-1)^n \leq \frac{\sqrt{5}}{2}.\]
  The left-hand-side is clearly positive because $\varphi-1>0$, so:
  \[\left|(\varphi-1)^n\right| \leq \frac{\sqrt{5}}{2}.\]
  Thus:
  \[\frac{\left|(\varphi-1)^n\right|}{\sqrt{5}} \leq \frac{1}{2}.\]
  Binet's formula allows us to get Equation~\ref{eqn:encadre}.

Now $\frac{\varphi^n}{\sqrt{5}} \geq 1$ for $n\geq 2$, hence~$\frac{1}{2} \leq
\frac{1}{2} \, \frac{\varphi^n}{\sqrt{5}}$, and therefore
\[
 \frac{1}{2} \, \frac{\varphi^n}{\sqrt{5}} \leq f_n \leq \frac{3}{2} \, \frac{\varphi^n}{\sqrt{5}}
\]
when~$n\geq 2$, just as claimed.
\end{proof}

Indeed, these will allow us to show, for each inequality that we want to prove
to be true for all $n\geq m$, that it holds for all $n\geq n_\alpha$ for some
$n_\alpha$, and then to check by direct
computation that the inequality is also true for all~$n \in [m,n_\alpha]$. The
script where these computations are done can be found as
\verb|Fibonacci-inequalities.ipynb| in the supplementary material~\cite{supmat}.  Let us proceed.

\begin{lemma}
\label{lem:2nf2-f0f0}
We have $2^n  \geq \frac{f_n^2}{f_{n-2}}$ for all~$n \geq 4$.
\end{lemma}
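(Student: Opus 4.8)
The plan is to follow the standard template used throughout this appendix for these Fibonacci inequalities: prove the bound for all sufficiently large $n$ via Binet's formula (in the form of Fact~\ref{fact:fiboeps} and Fact~\ref{fact:2-vs-phi}), and then dispatch the finitely many remaining small values of $n$ by direct computation.

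First I would rewrite the claim in the equivalent form $2^n f_{n-2} \geq f_n^2$, which is legitimate since $f_{n-2} > 0$ for $n \geq 4$. Next, since $n \geq 4$ gives both $n \geq 2$ and $n - 2 \geq 2$, Fact~\ref{fact:fiboeps} applies and yields $f_n \leq \frac{3}{2}\,\frac{\varphi^n}{\sqrt{5}}$ and $f_{n-2} \geq \frac{1}{2}\,\frac{\varphi^{n-2}}{\sqrt{5}}$. Combining these and simplifying (using $\varphi^{2n}/\varphi^{n-2} = \varphi^2\varphi^n$),
\[
  \frac{f_n^2}{f_{n-2}} \ \leq \ \frac{(3/2)^2\,\varphi^{2n}/5}{(1/2)\,\varphi^{n-2}/\sqrt{5}} \ = \ \frac{9\,\varphi^2}{2\sqrt{5}}\,\varphi^n .
\]
Set $\alpha \colonequals \frac{9\varphi^2}{2\sqrt{5}}$, a fixed positive constant (numerically $\alpha \approx 5.3$).

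Now Fact~\ref{fact:2-vs-phi}, applied with this $\alpha$, gives $2^n \geq \alpha\varphi^n$ for all $n \geq n_\alpha \colonequals \frac{\log\alpha}{\log 2 - \log\varphi}$; chaining with the previous display, $2^n \geq \alpha\varphi^n \geq \frac{f_n^2}{f_{n-2}}$ for all such $n$, which is exactly the desired inequality. A quick numerical estimate gives $n_\alpha < 8$, so the only cases left are the integers $n$ with $4 \leq n \leq 7$, and for each of these I would simply evaluate $2^n f_{n-2}$ and $f_n^2$ and check the inequality directly (these computations are recorded in \verb|Fibonacci-inequalities.ipynb| in the supplementary material~\cite{supmat}).

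The argument is entirely routine; the only thing to watch is that the crude constants $3/2$ and $1/2$ coming from Fact~\ref{fact:fiboeps} be good enough to keep the threshold $n_\alpha$ small, so that the explicit verification covers only a handful of values. I note in passing that a naive induction does not seem to work: bounding $2^{n+1} f_{n-1} \geq 2^{n+1} f_{n-2} \geq 2 f_n^2$ would reduce the inductive step to proving $f_n^2 \geq 2 f_n f_{n-1} + f_{n-1}^2$, equivalently $f_{n-2}^2 \geq 3 f_{n-1}^2$, which is false for large $n$; this is why the Binet-based route is preferable.
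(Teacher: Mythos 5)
Your proof is correct and follows essentially the same route as the paper's: bound $\frac{f_n^2}{f_{n-2}}$ above by $\frac{9\varphi^2}{2\sqrt{5}}\varphi^n$ via Fact~\ref{fact:fiboeps}, invoke Fact~\ref{fact:2-vs-phi} to get $n_\alpha \approx 7.8$, and check $n\in\{4,\dots,7\}$ directly. (Your side remark on induction has a small algebra slip — the reduced inequality is $f_{n-2}^2 \geq 2 f_{n-1}^2$, not $3 f_{n-1}^2$ — but that does not affect the conclusion there, nor the main argument.)
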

\begin{proof}
By Fact~\ref{fact:fiboeps}, the following inequality is true for~$n \geq 4$:
\begin{align*}
  \frac{f_n^2}{f_{n-2}} &\leq \frac{\big(\frac{3}{2}\frac{\varphi^n}{\sqrt{5}}\big)^2}{\frac{1}{2}\frac{\varphi^{n-2}}{\sqrt{5}}}\\
 & = \frac{9\varphi^2 }{2\sqrt{5}} \times \varphi^n.\\
\end{align*}
We now use Fact~\ref{fact:2-vs-phi} with~$\alpha \colonequals \frac{9\varphi^2 }{2\sqrt{5}}$, and obtain~$n_\alpha \simeq 7.8$. Hence for~$n
\geq 8$ we have indeed $\frac{f_n^2}{f_{n-2}} < 2^n$. We prove that this is
also the case for~$n\in [4,7]$ by direct computation.
\end{proof}

\begin{lemma}
\label{lem:2nf2f0-f1f1f_1}
We have $2^n  \geq \frac{f_{n-1}^2 f_{n+1}}{f_{n-2} f_n}$ for all $n\geq 4$.
\end{lemma}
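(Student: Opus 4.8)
The plan is to follow the same template as for Lemma~\ref{lem:2nf2-f0f0}: bound the right-hand side by a constant times $\varphi^n$ using Fact~\ref{fact:fiboeps}, then conclude via Fact~\ref{fact:2-vs-phi} for all sufficiently large $n$, and finally verify the inequality on the remaining finite range by direct computation.

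First I would observe that for $n \geq 4$ all Fibonacci indices occurring in the fraction --- namely $n-2$, $n-1$, $n$, and $n+1$ --- are at least $2$, so Fact~\ref{fact:fiboeps} applies to each of them. Bounding the three numerator factors from above and the two denominator factors from below, I get
\[
\frac{f_{n-1}^2 f_{n+1}}{f_{n-2} f_n} \ \leq \ \frac{\left(\tfrac32\right)^3 \varphi^{2(n-1)+(n+1)}/(\sqrt5)^3}{\left(\tfrac12\right)^2 \varphi^{(n-2)+n}/(\sqrt5)^2} \ = \ \frac{27\varphi}{2\sqrt5}\,\varphi^n,
\]
where I used $2(n-1)+(n+1) - \bigl((n-2)+n\bigr) = n+1$ for the exponent of $\varphi$ and collected the numerical constants as $\frac{(3/2)^3}{(1/2)^2}\cdot\frac{1}{\sqrt5} = \frac{27}{2\sqrt5}$.

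Next I would apply Fact~\ref{fact:2-vs-phi} with $\alpha \colonequals \frac{27\varphi}{2\sqrt5}$ (numerically $\alpha \approx 9.77$), which yields $n_\alpha \approx 10.8$; hence $2^n \geq \alpha\,\varphi^n \geq \frac{f_{n-1}^2 f_{n+1}}{f_{n-2} f_n}$ for every $n \geq 11$. It then only remains to check the inequality for $n \in \{4,5,\ldots,10\}$, which is a finite computation carried out in the script \texttt{Fibonacci-inequalities.ipynb} in the supplementary material~\cite{supmat}. Combining the two ranges gives the claim for all $n \geq 4$.

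I do not expect any genuine obstacle here, as the argument is a routine instance of the established template; the only points requiring care are the bookkeeping of the $\varphi$-exponent and of the numerical constant $\alpha$ (hence the precise cutoff $n_\alpha$), and ensuring that the direct computation covers exactly the interval $[4, \lceil n_\alpha\rceil - 1]$.
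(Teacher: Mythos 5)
Your proof is correct and follows exactly the paper's own argument: bound the ratio by $\frac{27\varphi}{2\sqrt5}\,\varphi^n$ via Fact~\ref{fact:fiboeps}, apply Fact~\ref{fact:2-vs-phi} with that constant to get $n_\alpha\approx 10.7$, and check $n\in\{4,\dots,10\}$ directly. The only (immaterial) discrepancy is the rounding of $n_\alpha$ (you say $\approx 10.8$, the paper says $\simeq 10.7$); both lead to the same cutoff $n\geq 11$.
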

\begin{proof}
By Fact~\ref{fact:fiboeps}, the following inequality is true for~$n \geq 4$:
\begin{align*}
 \frac{f_{n-1}^2 f_{n+1}}{f_{n-2} f_n} &\leq
\frac{(\frac{3}{2} \ \frac{\varphi^{n-1}}{\sqrt{5}})^2 (\frac{3}{2} \ \frac{\varphi^{n+1}}{\sqrt{5}})}{(\frac{1}{2} \ \frac{\varphi^{n-2}}{\sqrt{5}})(\frac{1}{2} \ \frac{\varphi^{n}}{\sqrt{5}})}\\
& = \frac{27\varphi }{2\sqrt{5}} \times \varphi^n.
\end{align*}
We now use Fact~\ref{fact:2-vs-phi} with~$\alpha \colonequals \frac{27\varphi }{2\sqrt{5}}$, and obtain~$n_\alpha \simeq 10.7$. Hence for~$n
\geq 11$ we have indeed $\frac{f_{n-1}^2 f_{n+1}}{f_{n-2} f_n} < 2^n$. We prove
that this is also the case for~$n\in [4,10]$ by direct computation.
\end{proof}

\begin{lemma}
\label{lem:forsigma}
We have $2^n \geq 10 \times \frac{f_n^5}{f_{n-2}^4}$ for~$n \geq 48$.
\end{lemma}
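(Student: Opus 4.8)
The plan is to follow exactly the template used in the proofs of Lemmas~\ref{lem:2nf2-f0f0} and~\ref{lem:2nf2f0-f1f1f_1}: first bound the ratio $f_n^5/f_{n-2}^4$ from above by a constant times $\varphi^n$ using Fact~\ref{fact:fiboeps}, and then compare this against $2^n$ via Fact~\ref{fact:2-vs-phi}.

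First I would note that for $n \geq 48$ we have $n \geq 2$ and $n - 2 \geq 2$, so Fact~\ref{fact:fiboeps} applies to both $f_n$ and $f_{n-2}$, giving $f_n \leq \frac{3}{2}\frac{\varphi^n}{\sqrt5}$ and $f_{n-2} \geq \frac{1}{2}\frac{\varphi^{n-2}}{\sqrt5}$. Raising the first inequality to the power $5$, the second to the power $4$, and dividing, I obtain
\[
10 \times \frac{f_n^5}{f_{n-2}^4} \ \leq \ 10 \times \frac{(3/2)^5}{(1/2)^4}\times\frac{\varphi^{8}}{\sqrt5}\times\varphi^n ,
\]
and I would set $\alpha \colonequals 10 \times \frac{(3/2)^5}{(1/2)^4}\times\frac{\varphi^{8}}{\sqrt5}$. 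Then I would apply Fact~\ref{fact:2-vs-phi} with this value of $\alpha$: a direct numerical evaluation (to be recorded in \verb|Fibonacci-inequalities.ipynb| in~\cite{supmat}) gives $\alpha \simeq 2.55 \times 10^{4}$ and hence $n_\alpha = \frac{\log\alpha}{\log 2 - \log\varphi} \simeq 47.9$, so that for every $n \geq 48 \geq n_\alpha$ we have $2^n \geq \alpha\varphi^n \geq 10\,\frac{f_n^5}{f_{n-2}^4}$, which is exactly the claim.

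I do not expect a genuine obstacle here. The one mild subtlety --- and presumably the reason the statement is phrased for $n \geq 48$ rather than for a smaller bound --- is that $\lceil n_\alpha\rceil$ happens to equal $48$, so, in contrast to the two preceding lemmas, no separate finite check for small values of $n$ is required and the argument closes immediately after invoking Fact~\ref{fact:2-vs-phi}. The only step demanding (routine) care is the numerical verification of the constant $\alpha$ and of the threshold $n_\alpha$, carried out in the companion script.
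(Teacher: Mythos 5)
Your proposal is correct and follows exactly the paper's argument: both use Fact~\ref{fact:fiboeps} to bound $10\,f_n^5/f_{n-2}^4$ above by $\alpha\varphi^n$ with $\alpha = 10\cdot(3/2)^5/(1/2)^4\cdot\varphi^8/\sqrt5 = 1215\,\varphi^8/\sqrt5$, then apply Fact~\ref{fact:2-vs-phi} to get $n_\alpha\simeq 47.9$. You also correctly observe, as the paper implicitly does, that no finite check of small cases is needed since the stated threshold $48$ is exactly $\lceil n_\alpha\rceil$.
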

\begin{proof}
By Fact~\ref{fact:fiboeps}, the following inequality is true for~$n \geq 4$:
\begin{align*}
10 \times \frac{f_n^5}{f_{n-2}^4} &\leq 10 \times \frac{(\frac{3}{2} \ \frac{\varphi^{n}}{\sqrt{5}})^5}{(\frac{1}{2} \ \frac{\varphi^{n-2}}{\sqrt{5}})^4}\\
 &= \frac{1215\ \varphi^8}{\sqrt{5}} \times \varphi^n.
\end{align*}
We now use Fact~\ref{fact:2-vs-phi} with~$\alpha \colonequals \frac{1215\
\varphi^8}{\sqrt{5}}$ and obtain~$n_\alpha \simeq 47.9 $.  Hence for~$n \geq 48$ we
have indeed $10 \times \frac{f_n^5}{f_{n-2}^4} \leq 2^n$.  
\end{proof}

\subsection{Step (i): An equivalent system}
\label{apx:fibo-equiv}
In this section we derive a system of equations that is equivalent
to~\ref{itemb}.

For brevity, we write $\Pi^{bb'}_{4}$ for $\Pi^{bb'}_{4}(p,q,r,s)$, and write
$\bar{p}$ for~$1-p$ (and similarly for~$q,r$ and~$s$).
Note that we can explicitly express $\Pi^{00}_{4}$ as a sum of products of the
$p,q,r,s$ and of the $\bar{p},\bar{q},\bar{r},\bar{s}$, intuitively
corresponding to the edge subsets of the graph $P_4(p,q,r,s)$ that are a
matching, and the same is true for $\Pi^{bb'}_{4}$ by removing the terms
involving $p$ (if $b = 1$) and those involving $s$ (if $b' = 1$).

We start by writing
out explicitly system~\ref{itemb}, which by Lemma~\ref{lem:fiboexpr}
consists of the following four equations:

\begin{description}
\item[(B)]:
\begin{align}
  \label{eqn:o1} \Pi^{00}_{4} & = \frac{f_{i+2}}{2^i}\\
  \label{eqn:o2} \Pi^{01}_{4} & = \frac{f_{i+1}}{2^i}\\
  \label{eqn:o3} \Pi^{10}_{4} & = \frac{f_{i+1}}{2^i}\\
  \label{eqn:o4} \Pi^{11}_{4} & = \frac{f_{i}}{2^i}
\end{align}
\end{description}

We claim that for any tuple~$(p,q,r,s)$ of real numbers,
this tuple satisfies~\ref{itemb} if and only if it satisfies the following system, denoted~\ref{iteme}.

\begin{description}
\item[(E)\label{iteme}]:
\begin{align}
  \label{eqn:g1} \pp \nq \ns & = \frac{f_{i-1}}{2^i}\\
  \label{eqn:g2} \ps \nr \np & = \frac{f_{i-1}}{2^i}\\
  \label{eqn:g3} \pq \pr & = \frac{1}{f_{i-1}^2}\\
  \label{eqn:g4} \np \ns & = \frac{f_{i-1}^2}{2^i f_{i-2}} 
\end{align}
\end{description}

We prove each implication in turn in the next two sections. Note that we
actually only need the direction \ref{iteme} $\implies$ \ref{itemb} for the
proof of Proposition~\ref{prp:fibo}, but we prove the equivalence for
completeness. 

\subsubsection{\ref{iteme} implies~\ref{itemb}}
We prove here that~\ref{iteme} implies~\ref{itemb}. 

\subparagraph*{Getting Equation \eqref{eqn:o4}.}
Observe that we have:
\begin{align}
\label{eqn:11}
 \Pi^{11}_{4} = \np \ns (1-qr).  
\end{align}
This is by explicit computation on $P_4(p,q,r,s)$: the edge subsets that are
a matching must have the first and last edge missing, and cannot have both
remaining edges present.

Now, we use Equation~\eqref{eqn:g4} and Equation~\eqref{eqn:g3} to substitute
$\np \ns$ and $qr$, and obtain:
\[
  \Pi^{11}_{4} = \frac{f_{i-1}^2}{2^i f_{i-2}} \left(1-
  \frac{1}{f_{i-1}^2}\right).  
\]
Let us bring to the same denominator. We get:
\[
  \Pi^{11}_{4} = \frac{f_{i-1}^2}{2^i f_{i-2}} \times
  \frac{f_{i-1}^2-1}{f_{i-1}^2}
\]
We use Cassini's identity, remembering that $i$ is even so $i-1$ is odd, and get:
\[
  \Pi^{11}_{4} = \frac{f_{i-1}^2}{2^i f_{i-2}} \times
  \frac{f_{i}f_{i-2}}{f_{i-1}^2}
\]
Simplifying, we get Equation~\eqref{eqn:o4}.

\subparagraph*{Getting Equations \eqref{eqn:o2} and \eqref{eqn:o3}.}
Next, observe that we have:
\begin{align}
  \label{eqn:01}\Pi^{01}_{4} = \Pi^{11}_{4}  + \pp \nq \ns.
\end{align}
Indeed, the edge subsets of $P_4$ that are a matching and have the last edge
missing are those that are a matching and have the first and last edge missing,
corresponding to $\Pi^{11}_{4}$, plus the ones where the last edge is missing
and the first edge is not missing. The latter implies that the second edge must
be missing, corresponding to the term $\pp \nq \ns$.

Now, we have just shown that 
$\Pi^{11}_{4} = \frac{f_{i}}{2^i}$, and we have $\pp \nq \ns =
\frac{f_{i-1}}{2^i}$ by Equation~\eqref{eqn:g1}. So we have:
\[
  \Pi^{01}_{4} = \frac{f_{i}}{2^i} + \frac{f_{i-1}}{2^i}
\]
By definition of the Fibonacci sequence ($f_{i+1} = f_i + f_{i-1}$), we obtain 
Equation~\eqref{eqn:o2}.
 
Equation~\eqref{eqn:o3} is
obtained symmetrically, using Equation~\eqref{eqn:g2} and the fact that:
\begin{align}
  \label{eqn:10}\Pi^{10}_{4} = \Pi^{11}_{4} + \ps \nr \np.
\end{align}

\subparagraph*{Getting Equation~\eqref{eqn:o1}.}
Last, we have:
\begin{align}
  \label{eqn:00}  \Pi^{00}_{4} = \Pi^{11}_{4} + \pp \ps \nq \nr + \pp \nq \ns + \ps \nr \np.
\end{align}
This is because the edge subsets of~$P_4$ that are matchings are those where
both the first and last edge are missing, corresponding to $\Pi^{11}_{4}$, plus
those where the first and last edge are present (corresponding to $\pp \ps \nq
\nr$ because the other edges must be missing), plus those where the first edge
is present and the last one missing (corresponding to $\pp \nq \ns$ because the
second edge must be missing), plus those where the last edge is present and the
first one is missing (corresponding analogously to $\ps \nr \np$).

By Equations~\eqref{eqn:o4}, \eqref{eqn:g1}, and \eqref{eqn:g2}, this means that
\begin{align}
  \label{eqn:bloub}\Pi^{00}_{4} = \frac{f_i}{2^i} + \pp \ps \nq \nr + \frac{f_{i-1}}{2^i} + \frac{f_{i-1}}{2^i}.
\end{align}
We now need to get rid of the term $\pp \ps \nq \nr$. But let us now multiply
Equations~\eqref{eqn:g1},~\eqref{eqn:g2} and~\eqref{eqn:g3}. We obtain
\[
\pp \np \pq \nq \pr \nr \ps \ns = \frac{1}{2^{2i}}.
\]
This implies that 
\[
\pp \ps \nq \nr = \frac{1}{2^{2i}(\np \ns) (\pq \pr)},
\]
and using Equations~\eqref{eqn:g3} and~\eqref{eqn:g4} we get 
\[
\pp \ps \nq \nr = \frac{f_{i-2}}{2^i}.
\]
We now inject the above into \eqref{eqn:bloub} and obtain
\begin{align*}
 \Pi^{00}_{4} &= \frac{f_i}{2^i} + \frac{f_{i-2}}{2^i} + \frac{f_{i-1}}{2^i} + \frac{f_{i-1}}{2^i}\\
	       &= \frac{f_{i+2}}{2^i},
\end{align*}
where the last line is obtained by applying three times the definition of the
Fibonacci sequence. We have thus obtained Equation~\eqref{eqn:o1}.

\subsubsection{\ref{itemb} implies~\ref{iteme}}

In this section we prove that~\ref{itemb} implies~\ref{iteme}.  We point out
again that this is not strictly necessary for the proof of
Proposition~\ref{prp:fibo} and can safely be skipped; we only include this for
completeness.

\subparagraph*{Getting Equations \eqref{eqn:g1} and \eqref{eqn:g2}.}
By Equations~\eqref{eqn:o2} and~\eqref{eqn:o3}, we have $\Pi^{01}_{4} = \Pi^{10}_{4}$. Hence by
Equations~\eqref{eqn:01} and~\eqref{eqn:10}:
\[
  \pp \nq \ns = \ps \nr \np.
\]
Calling this quantity $Q$, Equations~\eqref{eqn:01} and~\eqref{eqn:10} rewrite
to:
\[
  \Pi^{01}_{4} = \Pi^{10}_{4} = Q + \Pi^{11}_{4}.
\]
Now, by system~\ref{itemb} and by the definition
of the Fibonacci sequence, we have
\begin{align*}
\Pi^{01}_{4} &= \frac{f_{i+1}}{2^i} \\
	      &= \frac{f_i}{2^i} + \frac{f_{i-1}}{2^i}\\
	      &= \Pi^{11}_{4} + \frac{f_{i-1}}{2^i}.
\end{align*}
This implies that $Q = \pp \nq \ns = \ps \nr \np = \frac{f_{i-1}}{2^i}$, i.e.,
we have Equations~\eqref{eqn:g1} and~\eqref{eqn:g2}.

\subparagraph*{Getting Equation \eqref{eqn:g3}.}

Let us now compute the ``determinant-style'' expression $\Pi^{01}_{4}\Pi^{10}_4 - \Pi^{00}_{4} \Pi^{11}_{4}$. Combining Equations \eqref{eqn:01}, \eqref{eqn:10}, and \eqref{eqn:00}, we have:
\[
\Pi^{01}_{4}\Pi^{10}_4 - \Pi^{00}_{4} \Pi^{11}_{4} = (\Pi^{11}_{4}  + \pp \nq \ns)(\Pi^{11}_{4} + \ps \nr \np) - (\Pi^{11}_{4} + \pp \ps \nq \nr + \pp \nq \ns + \ps \nr \np)\Pi^{11}_{4}
\]
This simplifies, leaving us with:
\[
\Pi^{01}_{4}\Pi^{10}_4 - \Pi^{00}_{4} \Pi^{11}_{4} =  \pp \nq \ns \ps \nr \np - \pp \ps \nq \nr \Pi^{11}_{4}.
\]
We now inject \eqref{eqn:11} to obtain
\[
\Pi^{01}_{4}\Pi^{10}_4 - \Pi^{00}_{4} \Pi^{11}_{4} =
\pp \nq \ns \ps \nr \np
- \pp \ps \nq \nr \np \ns (1 - \pq \pr).
\]
This simplifies again, and we obtain, after reordering of terms:
\[
\Pi^{01}_{4}\Pi^{10}_4 - \Pi^{00}_{4} \Pi^{11}_{4} =
 \pp \pq \pr \ps \np \nq \nr \ns.
\]
On the other hand, by system~\ref{itemb} we have that
\[
2^{2i} (\Pi^{01}_{4}\Pi^{10}_4 - \Pi^{00}_{4} \Pi^{11}_{4}) = f_{i+1}^2 - f_{i+2} f_{i}.
\]
Now, remembering that~$i$ is even, we have by Cassini's identity that~$f_{i+1}^2 = f_i f_{i+2} + 1$.
Therefore, we obtain that
\begin{equation}
\label{eqn:f1}
\pp \pq \pr \ps \np \nq \nr \ns = \frac{1}{2^{2i}}.
\end{equation}
Continuing, notice that
\begin{align*}
\pp \pq \pr \ps \np \nq \nr \ns &= (\pp \nq \ns)(\ps \nr \np)\pq \pr\\
 &= \pq \pr \frac{f_{i-1}^2}{2^{2i}}
\end{align*}
by Equations \eqref{eqn:g1} and \eqref{eqn:g2}. Together with \eqref{eqn:f1}, we obtain $\pq \pr =
\frac{1}{f_{i-1}^2}$, that is, Equation~\eqref{eqn:g3}.

\subparagraph*{Getting Equation \eqref{eqn:g4}.}
We know that $\Pi^{11}_{4} = \np \ns (1-qr)$, and $\Pi^{11}_{4} =
\frac{f_i}{2^i}$ by Equation~\eqref{eqn:o4}. Hence, we have
\[
  \np \ns = \frac{f_i}{2^i (1-\pq \pr)}.
\]
We now use \eqref{eqn:g3} and Cassini's identity to simplify $1-qr$, and we
obtain $\np \ns = \frac{f_{i-1}^2}{2^i f_{i-2}}$, that is,
Equation~\eqref{eqn:g4}.  This concludes the proof that system~\ref{itemb} and
system~\ref{iteme} are equivalent.

\subsection{Step (ii): Membership in $(0,1)$}
\label{apx:fibo-bornes}

We now prove that \emph{any} tuple of real numbers $(p,q,r,s)$ that satisfies
system~\ref{iteme} must be
in~$(0,1)^4$. Clearly, looking at the equations of~\ref{iteme}, none of
$p,q,r,s$ can be equal to~$0$ or to~$1$.  Let us then consider three possible
ranges for each variable that cover all the possibilities: the variable is
either in~$(-\infty, 0)$, or it is in $(0,1)$, or it is in~$(1,\infty)$.
Observe that each equation of \ref{iteme} tells us something about the possible
ranges of~$p,q,r,s$, by considering that all expressions are positive:
for instance, Equation~\eqref{eqn:g3} implies that $q\in (-\infty,0)$ iff $r\in
(-\infty,0)$, and Equation~\eqref{eqn:g4} that~$p\in (1,\infty)$ iff~$s\in
(1,\infty)$. We use a helper script to analyze the $3^4 = 81$ possible ranges
of~$p,q,r,s$, shown in 
Algorithm~\ref{algo:prune-ranges}.
This algorithm as a Python script can be found as \verb|prune-ranges.py| in
supplementary material. 
\begin{algorithm}
\caption{Small helper script to prune out possible ranges of~$p,q,r$ and $s$.}
\KwIn{Nothing.}
\KwOut{Prints out the possible ranges that are not discarded by simple considerations on the sign of expressions in system~\ref{iteme}.}
\BlankLine
\tcc{We can use $-0.5$, $0.5$ and $1.5$ as representatives of the three
  candidate ranges for each variable}
\For{$(p,q,r,s)$ in $\{-0.5, 0.5, 1.5\}^4$}{
  \If(\tcp*[f]{By Equation~\eqref{eqn:g1}}){$p(1-q)(1-s) < 0$}{
    continue\;
  }
  \If(\tcp*[f]{By Equation~\eqref{eqn:g2}}){$s(1-r)(1-p) < 0$}{
    continue\;
  }
  \If(\tcp*[f]{By Equation~\eqref{eqn:g3}}){$qr < 0$}{
    continue\;
  }
  \If(\tcp*[f]{By Equation~\eqref{eqn:g4}}){$(1-p)(1-s) < 0$}{
    continue\;
  }
  print(($p$,$q$,$r$,$s$))\;
}
\tcc{This outputs 
$(-0.5, 1.5, 1.5, -0.5)$,
$(1.5, 1.5, 1.5, 1.5)$, 
$(0.5, -0.5, -0.5, 0.5)$,
$(0.5, 0.5, 1.5, -0.5)$,
$(-0.5, 1.5, 0.5, 0.5)$, and
$(0.5, 0.5, 0.5, 0.5)$.}
\label{algo:prune-ranges}
\end{algorithm}

Only $6$ ranges survive this script. The last one of them is~$(0,1)^4$, and we will show
that the five first ranges are in fact impossible,
by chasing down inequalities from~\ref{iteme} until we reach contradictions.

\subparagraph*{First two cases.} In the first two cases, we have~$q,r \in
(1,\infty)$.  But this is clearly not possible by Equation~\eqref{eqn:g3},
since~$\frac{1}{f_{i-1}^2} < 1$ (given that~$i\geq 4$).

\subparagraph*{Third case.} The third case to discard is when $p,s \in (0,1)$
and~$q,r \in (-\infty,0)$. Call this assumption~$(\dagger)$.  Observe then
that~$\nq >1$, and multiply this inequality by~$\pp$ and by $\ns$ (which are
both $>0$) to obtain
\[
 \nq \pp \ns > \pp \ns.
\]
Now, by \eqref{eqn:g1}, this implies:
\[
\frac{f_{i-1}}{2^i} > \pp \ns.
\]
Hence $\ns < \frac{1}{p} \times \frac{f_{i-1}}{2^i}$. Multiply by $\np$ (which is $>0$) and use \eqref{eqn:g4} to get
\[
 \frac{f_{i-1}^2}{2^i f_{i-2}} = \np \ns < \frac{\np}{\pp} \times \frac{f_{i-1}}{2^i}
\]
and so $\frac{\np}{\pp} > \frac{f_{i-1}}{f_{i-2}}$. 
As $p > 0$, we then have:
\[
\np > p \frac{f_{i-1}}{f_{i-2}}.
\]
Recalling that $p = 1 - \np$, we have:
\[
  \np > (1-\np) \frac{f_{i-1}}{f_{i-2}}.
\]
Rearranging terms, we obtain:
\[
\np > \frac{\frac{f_{i-1}}{f_{i-2}}}{1 + \frac{f_{i-1}}{f_{i-2}}}.
\]
We can simplify by multiplying the numerator and denominator by~$f_{i-2}$ and
using the definition of the Fibonacci sequence, to get:
\begin{align}
\label{ineqa}
\np > \frac{f_{i-1}}{f_{i}}.
\end{align}
This in turn implies, multiplying by~$\ns$ (which is~$>0$) and by
\eqref{eqn:g4} again, that
\[
\frac{f_{i-1}^2}{2^i f_{i-2}} = \np \ns > \ns \times \frac{f_{i-1}}{f_{i}}
\]
so that
\[
  \frac{f_{i} f_{i-1}}{2^i f_{i-2}} > \ns.
\]
Now, using $\ns = 1 - \ps$ and reordering terms, we get that
\[
 \ps > \frac{2^i f_{i-2} - f_i f_{i-1}}{2^i f_{i-2}}.
\]
But $2^i f_{i-2} > f_i f_{i-1}$ by Lemma~\ref{lem:2nf2-f0f0}
(using~$f_i > f_{i-1}$), and~$\ps > 0$ by $(\dagger)$, hence we have
\begin{align}
 \label{ineqb}\frac{1}{\ps} < \frac{2^i f_{i-2}}{2^i f_{i-2} - f_i f_{i-1}}.
\end{align}
Now, we multiply \eqref{ineqa} by $\ps$ and $\nr$ (which are $>0$) to obtain
\[
  \frac{f_{i-1}}{2^i} = \np \ps \nr > \ps \nr \times \frac{f_{i-1}}{f_i},
\]
hence $\nr < \frac{1}{s} \times \frac{f_i}{2^i}$, and thanks to \eqref{ineqb} we get
\[
\nr < \frac{f_i f_{i-2}}{2^i f_{i-2} - f_i f_{i-1}},
\]
implying 
\[
 \pr > \frac{2^i f_{i-2} - f_i f_{i-1} - f_i f_{i-2}}{2^i f_{i-2} - f_i f_{i-1}}
 = \frac{2^i f_{i-2} - f_i^2}{2^i f_{i-2} - f_i f_{i-1}}
\]
where the last equality is using the definition of the Fibonacci sequence.
This last expression is positive according to
Lemma~\ref{lem:2nf2-f0f0}. But~$\pr$ is supposed to be negative by
$(\dagger)$, a contradiction.

\subparagraph*{Fourth case.} The fourth case to discard is when $s \in
(-\infty,0)$, $p,q \in (0,1)$ and $r \in (1,\infty)$.  Call again this
assumption $(\dagger)$. 
We start by multiplying $\pr >1$ by
$q$, which is positive by~$(\dagger)$, to get, with \eqref{eqn:g3}, 
\[
 \frac{1}{f_{i-1}^2} = \pq \pr > \pq,
\]
hence
\[
 \frac{1}{f_{i-1}^2} > 1 - \nq,
\]
so that reordering terms
\[
  \nq > 1 - \frac{1}{f_{i-1}^2} = \frac{f_{i-1}^2 - 1}{f_{i-1}^2},
\]
and by Cassini's identity we get:
\[\nq > \frac{f_i f_{i-2}}{f_{i-1}^2}.
\]
We multiply this last
inequality by $\pp$ and $\ns$, which are positive by~$(\dagger)$, and obtain,
from \eqref{eqn:g1},
\[
 \frac{f_{i-1}}{2^i} = \nq \pp \ns > \pp \ns \times \frac{f_i f_{i-2}}{f_{i-1}^2}.
\]
Therefore $\ns < \frac{1}{p} \times \frac{f_{i-1}^3}{2^i f_{i-2} f_i}$. Multiply by
$\np$, which is positive by~$(\dagger)$, and use \eqref{eqn:g4} to obtain
\[
 \frac{f_{i-1}^2}{2^i f_{i-2}} = \np \ns < \frac{\np}{\pp} \times \frac{f_{i-1}^3}{2^i f_{i-2} f_i},
\]
which simplifies to $\frac{\np}{\pp} > \frac{f_i}{f_{i-1}}$. 
As $p > 0$, and by similar reasoning as in the previous case, we get that
$\np >  \frac{f_i}{f_{i+1}}$.
Multiply by $\ns$ and use \eqref{eqn:g4} again to obtain
\[
 \frac{f_{i-1}^2}{2^i f_{i-2}} = \np \ns > \ns \times \frac{f_i}{f_{i+1}},
\]
hence
\[
  \ns < \frac{f_{i-1}^2 f_{i+1}}{2^i f_{i-2} f_i}
\]
and, using $\ns = 1 - \ps$ and reordering, we get:
\[
 \ps > \frac{2^i f_{i-2} f_i - f_{i-1}^2 f_{i+1}}{2^i f_{i-2} f_i}.
\]
But this last expression is positive by
Lemma~\ref{lem:2nf2f0-f1f1f_1}, whereas $s$ is negative according to
$(\dagger)$, a contradiction.

\subparagraph*{Fifth case.} The last case to discard is when $p \in
(-\infty,0)$, $q \in (1,\infty)$ and $r,s \in (0,1)$. But this case is
symmetrical to the fourth case,
so we are done.

\subsection{Step (iii): Satisfying the System}
\label{apx:fibo-sat}

We use SageMath to obtain a solution. The code can be found in the Jupyter
notebook \verb|obtain-solution-and-check-sigma-small-N.ipynb| in supplementary material~\cite{supmat}.  Writing~$T
\colonequals 1/2^i$ and~$F_k \colonequals f_{i+k}$, we define:

\begin{align*}
P &\colonequals 2 \, F_{-1} F_{-2}^{2} + 2 \, {\left(F_{-1}^{2} - 1\right)} F_{-2} \\
Q &\colonequals 2 \, F_{-1}^{2} F_{-2} - 2 \, {\left(F_{-1}^{4} + F_{-1}^{3} F_{-2}\right)} T \\
A &\colonequals 2 \, F_{-1} F_{-2}^{2} \\
\Xi &\colonequals F_{-1}^{2} F_{-2} - {\left(F_{-1}^{4} + 2 \, F_{-1}^{3} F_{-2} + F_{-1}^{2} F_{-2}^{2}\right)} T \\
\Theta &\colonequals F_{-1}^{2} T - F_{-2} \\
 C_0 &\colonequals \left(F_{-1}^{4} - 2 \, F_{-1}^{2} + 1\right) F_{-2}^{2}\\
 C_1 &\colonequals 2 \, \left({\left(F_{-1}^{4}\, +\, F_{-1}^{2}\right)} F_{-2}^{3}\, +\, 2 \, {\left(F_{-1}^{5} - F_{-1}^{3}\right)} F_{-2}^{2}\, +\, {\left(F_{-1}^{6} - 2 \, F_{-1}^{4}\, +\, F_{-1}^{2}\right)} F_{-2}\right) \\
 C_2 &\colonequals  F_{-1}^{8} + 4 F_{-1}^{5} F_{-2}^{3} + F_{-1}^{4} F_{-2}^{4} - 2 F_{-1}^{6} + F_{-1}^{4}\\
     &\phantom{=} + 2 (3 F_{-1}^{6} - F_{-1}^{4}) F_{-2}^{2} + 4 (F_{-1}^{7} - F_{-1}^{5}) F_{-2}\\
 \Sigma &\colonequals C_0 - C_1 T + C_2 T^2.
\end{align*}

\begin{remark}
These expressions could be simplified, using the properties of the Fibonacci
sequence. Nevertheless, we leave them as-is in this section so that it is
easier to see that they match the ones in the notebook, and to keep the
notebook as clean as possible.  We will simplify some of these in the next
section.  
\end{remark}

Finally, we pose
\begin{align*}
p(i) &\colonequals (A + \Xi + \Theta + \sqrt{\Sigma})/P\\
q(i) &\colonequals (\Xi - \Theta + \sqrt{\Sigma})/Q\\
r(i) &\colonequals (\Xi - \Theta - \sqrt{\Sigma})/Q\\
s(i) &\colonequals (A + \Xi + \Theta - \sqrt{\Sigma})/P
\end{align*}

One can then check by executing the notebook
that these expressions satisfy system \ref{iteme}, when we look at~$F_{-1}$ and
$F_{-2}$ as symbolic variables. Note that one can understand how this
verification can be performed; when computing the left-hand sides we obtain fractions whose
numerator is a polynomial $P_1$ in $T$ and the $F_k$, plus such a polynomial
$P_2$ times a square root of such a polynomial, divided by such a polynomial. Up
to multiplying by the denominator, showing the identities amounts to checking
that the polynomials $P_1$ and $P_2$ are correct, which can be done by expanding
them and checking that the monomials are correct.

We point, however, that the resulting expressions are only
\emph{symbolic} expressions, and that we do not know a priori if they are
well-defined, that is, if $\Sigma$ is always non-negative and if the denominators
are not null. We prove in the next section that this is the case, using
properties of the Fibonacci sequence.

\subsection{Step (iv): Checking Well-Definedness}
\label{apx:fibo-welldef}

We now prove that the expressions $p(i),q(i),r(i),s(i)$ are well-defined, i.e.,
that~$\Sigma$ is non-negative and that the denominators~$P$ and~$Q$ are never
null. This will effectively conclude the proof of Proposition~\ref{prp:fibo}.
Remember that~$i$ is an even integer greater than~$4$. 

\subparagraph*{Checking that $P$ is not null.} By Cassini's identity we have~$F_{-1}^2 =
F_{-2} F_0 + 1$, and so we get 
\begin{align*}
 P &= 2 \, F_{-1} F_{-2}^{2} + 2 \,  F_{-2}^2 F_0 \\
   &= 2 \, F_{-2}^2 (F_{-1} + F_0)\\
   &= 2\, F_{-2}^2 F_1,
\end{align*}
hence~$P$ is clearly not null.

\subparagraph*{Checking that $Q$ is not null.} 
We have:
\begin{align*}
& Q = 0\\
& \iff 2\, F_{-1}^2 F_{-2} = 2 \, T \, (F_{-1}^4 + F_{-1}^3 F_{-2})\\
& \iff F_{-2} = T \, F_{-1} \, (F_{-1} + F_{-2})\\
& \iff 2^i = \frac{F_{-1} F_0}{F_{-2}}
\end{align*}
But $2^i > \frac{F_{-1} F_0}{F_{-2}}$ according to
Lemma~\ref{lem:2nf2-f0f0} (using~$F_0 > F_{-1}$), so~$Q$ cannot be null. 

\subparagraph*{Checking that $\Sigma$ is non-negative.} Recall that~$\Sigma = C_0 -C_1 T +
C_2 T^2$, where~$T$ does not occur in any of the~$C_i$.  First, we
simplify~$C_0$, using Cassini's equality.
\begin{align*}
 C_0 &= \left(F_{-1}^{4} - 2 \, F_{-1}^{2} + 1\right) F_{-2}^{2}\\
     &= F_{-2}^2 \, (F_{-1}^2 - 1)^2\\
     &= F_{-2}^2 \, (F_{-2} F_0)^2\\
     &= F_{-2}^4  F_0^2.
\end{align*}

Next, we crudely upper bound~$C_1$, again using Cassini's identity, and the monotonicity
of the Fibonacci sequence. 
\begin{align*}
 C_1 &= 2 \, \left({\left(F_{-1}^{4}\, +\, F_{-1}^{2}\right)} F_{-2}^{3}\, +\, 2 \, {\left(F_{-1}^{5} - F_{-1}^{3}\right)} F_{-2}^{2}\, +\, {\left(F_{-1}^{6} - 2 \, F_{-1}^{4}\, +\, F_{-1}^{2}\right)} F_{-2}\right) \\
    &= 2 \left((F_{-1}^{4}\, +\, F_{-1}^{2}) F_{-2}^{3}\, +\, 2 \, F_{-2}^2 F_{-1}^3 (F_{-1}^2 -1)\, +\, F_{-2} F_{-1}^2 (F_{-1}^2 - 1)^2 \right)\\
    &= 2 \left((F_{-1}^{4}\, +\, F_{-1}^{2}) F_{-2}^{3}\, +\, 2 \, F_{-2}^3 F_{-1}^3 F_0\, +\, F_{-2}^3 F_{-1}^2 F_0^2 \right)\\
   &\leq 2 \left((F_{0}^{4}\, +\, F_{0}^{2}) F_{0}^{3}\, +\, 2 \, F_{0}^3 F_{0}^3 F_0\, +\, F_{0}^3 F_{0}^2 F_0^2 \right)\\
   &= 2 \, (F_0^5 + 4 F_0^7)\\
   &\leq 2 \, (F_0^7 + 4 F_0^7)\\
   &= 10 F_0^7.
\end{align*}

We then show that~$C_2 \geq0$ as follows: 
\begin{align*}
 C_2 &=  F_{-1}^{8} + 4 F_{-1}^{5} F_{-2}^{3} + F_{-1}^{4} F_{-2}^{4} - 2 F_{-1}^{6} + F_{-1}^{4} \\
     &\phantom{=} + 2 (3 F_{-1}^{6} - F_{-1}^{4}) F_{-2}^{2} + 4 (F_{-1}^{7} - F_{-1}^{5}) F_{-2}\\
     & = F_{-1}^6 (F_{-1}^2 - 2) + 4 F_{-1}^{5} F_{-2}^{3} + F_{-1}^{4} F_{-2}^{4}  + F_{-1}^{4} \\
     &\phantom{=} + 2 F_{-2}^{2} F_{-1}^4 (3 F_{-1}^{2} - 1) + 4 F_{-2} F_{-1}^{5} (F_{-1}^{2} - 1)\\
\end{align*}
But clearly~$F_{-1}^2 \geq2$ (since~$i \geq 4$ and~$F_3 = 2$), hence~$C_2 \geq0$ indeed.

But then, observe that this implies 
\[
\Sigma \geq F_{-2}^4  F_0^2 - 10 F_0^7 \times T.
\]
This last term is non-negative if and only if $2^i \geq 10 \times
\frac{F_0^5}{F_{-2}^4}$, and we prove this to be the case in
Lemma~\ref{lem:forsigma} for~$i \geq 48$. We complete the proof by checking by
direct computation that~$\Sigma \geq 0$ for all even integers between~$4$ and~$47$ 
(\verb|obtain-solution-and-check-sigma-small-N.ipynb| in supplementary material~\cite{supmat}).

\section{Proofs for the Precision Argument}
\label{apx:precision}
In this section we give the proof details for step 5 of Section~\ref{sec:general}.
Recall that our goal is to
determine the result $\mC$ of all oracle calls on the graphs
$G(\kappa,\kappa')$, because then we could recover~$\mS$ as $\mG^{-1} \mC$ since we can compute the
inverse of~$\mG$ in polynomial time.
The problem is that we cannot actually invoke the oracle on the graphs
$G(\kappa,\kappa')$, because some of the edge probabilities are
non-rational, namely, the $p(i),q(i),r(i),s(i)$.

To work around this issue, the first step is to argue that we can compute
decimal fraction approximations of the $p(i),q(i),r(i),s(i)$, in polynomial time in the
number of desired decimal places. This easily follows from their expressions of
the form $\frac{P\pm \sqrt{Q}}{R}$ with $P,Q,R$ polynomials in quantities that
we can compute exactly.

\begin{restatable}{lemma}{precision}
\label{lem:precision}
  Given an even integer $i \geq 4$ and number $z$ of decimal places, both written
  in unary, we can compute in polynomial time in~$i$ and $z$ four decimal fractions 
  $0 \leq \widehat{\pp(i)}, \widehat{\pq(i)}, \widehat{\pr(i)}, \widehat{\ps(i)} \leq 1$ such that $|\pp(i)-\widehat{\pp(i)}|
  \leq 2^{-z}$ and similarly for $\widehat{\pq(i)}$, $\widehat{\pr(i)}$, and $\widehat{\ps(i)}$.
\end{restatable}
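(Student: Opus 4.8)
The plan is to use the explicit closed forms obtained in Appendix~\ref{apx:fibo-sat}, namely
\[
\pp(i) = \frac{A + \Xi + \Theta + \sqrt{\Sigma}}{P}, \qquad
\pq(i) = \frac{\Xi - \Theta + \sqrt{\Sigma}}{Q}, \qquad
\pr(i) = \frac{\Xi - \Theta - \sqrt{\Sigma}}{Q}, \qquad
\ps(i) = \frac{A + \Xi + \Theta - \sqrt{\Sigma}}{P},
\]
where $A,\Xi,\Theta,P,Q$ and the $C_j$ in $\Sigma = C_0 - C_1 T + C_2 T^2$ are fixed-degree polynomials, with rational coefficients, in $f_{i-1}$, $f_{i-2}$ and $T \colonequals 2^{-i}$; the strategy is to evaluate everything exactly as decimal fractions except the single square root, to approximate $\sqrt{\Sigma}$ well, and to propagate the resulting error through the four formulas.

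First I would compute the exact pieces. The numbers $f_{i-1}$ and $f_{i-2}$ have $O(i)$ bits and are computable in time polynomial in~$i$, and $T = 2^{-i} = 5^i/10^i$ is a decimal fraction with $i$ decimal places; hence each of $A,\Xi,\Theta,P,Q,\Sigma$ is a decimal fraction whose numerator and denominator have $O(i)$ bits, and all are computable exactly in time polynomial in~$i$. Writing $\Sigma = N/10^{2i}$ with $N \in \NN$ (recall $\Sigma$ has at most $2i$ decimal places, and $\Sigma \geq 0$ by Appendix~\ref{apx:fibo-welldef}), we have $\sqrt{\Sigma} = \sqrt{N}/10^i$; so for a target precision $z'$ given in unary, I would compute $\bigl\lfloor \sqrt{N\cdot 10^{2z'}}\bigr\rfloor$ in time polynomial in $i$ and $z'$ by integer square root (binary search or Newton iteration on an $O(i+z')$-bit integer) and divide by $10^{z'+i}$, obtaining a decimal fraction $\widehat{\sigma}$ with $|\sqrt{\Sigma} - \widehat{\sigma}| \leq 10^{-z'}$.

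Next I would propagate the error. Substituting $\widehat{\sigma}$ for $\sqrt{\Sigma}$ in the four formulas perturbs each value by at most $10^{-z'}/|P|$ (for $\pp(i),\ps(i)$) or $10^{-z'}/|Q|$ (for $\pq(i),\pr(i)$). Here I use the a priori bounds $|P|\geq 1$, since $P = 2 f_{i-2}^2 f_{i+1}$ is a positive integer, and $|Q| \geq 2^{-i}$, since $2^i Q = 2^{i+1} f_{i-1}^2 f_{i-2} - 2\bigl(f_{i-1}^4 + f_{i-1}^3 f_{i-2}\bigr) \in \mathbb{Z}$ is nonzero ($Q\neq 0$ by Appendix~\ref{apx:fibo-welldef}). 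The four perturbed quantities are rationals with $O(i+z')$ bits; rounding each to the nearest decimal fraction with $z+1$ places adds at most $10^{-(z+1)}$ further error, so choosing $z' \colonequals z+i+2$ makes the total error at most $2^{i}\cdot 10^{-(z+i+2)} + 10^{-(z+1)} \leq 10^{-(z+2)} + 10^{-(z+1)} < 10^{-z} \leq 2^{-z}$ (using $2^i \le 10^i$). Since $z'$ is polynomial in $i$ and $z$, the whole computation runs in polynomial time. Finally, as all four true values lie in $(0,1)$ by Appendix~\ref{apx:fibo-bornes}, clamping each rounded approximation into $[0,1]$ cannot increase its distance to the corresponding true value and guarantees it lies in $[0,1]$, as required.

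The argument is mostly routine arithmetic on rationals of polynomially many bits; the only point needing a little care is the lower bound on the denominators, which are not bounded below by an absolute constant — but $P$ is a positive integer and $2^iQ$ is a nonzero integer, which yields $|P|\ge 1$ and $|Q|\ge 2^{-i}$ and suffices for the error bound above.
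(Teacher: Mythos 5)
Your proof is correct and follows essentially the same strategy as the paper's: evaluate the polynomial pieces exactly, approximate $\sqrt{\Sigma}$ by an integer square root to sufficient precision, and propagate the error through the division by the denominator. The only substantive difference is in how you lower-bound the denominator: the paper argues generically that since $Y\in\{P,Q\}$ is a nonzero decimal fraction computed in polynomial time, it must have polynomially many decimal places and hence $|Y|>2^{-l}$ for some $l$ polynomial in~$i$; you instead give explicit bounds $|P|\ge 1$ (since $P=2f_{i-2}^2 f_{i+1}$ is a positive integer) and $|Q|\ge 2^{-i}$ (since $2^iQ$ is a nonzero integer), which is slightly sharper and makes the error arithmetic more concrete. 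Your added rounding-and-clamping step at the end, ensuring the outputs are decimal fractions in $[0,1]$, is a reasonable explicit implementation of what the paper takes for granted, using the fact from Appendix~\ref{apx:fibo-bornes} that the true values are in $(0,1)$. Both arguments are sound; yours is a bit more self-contained, the paper's a bit more modular.
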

\begin{proof}
First note that we can compute exactly the Fibonacci numbers, the value $2^{-i}$,
and polynomials in these values, as exact rationals. Now,
  we recall from Proposition~\ref{prp:fibo} that the expressions for $p(i)$, $q(i)$,
$r(i)$, and $s(i)$ are a sum of such polynomials and of $\sqrt{\Sigma}$, where
$\Sigma$ is such a polynomial, divided by such polynomials ($P$ or $Q$).
So the problem boils down to approximating expressions of the form $\frac{X \pm
\sqrt{\Sigma}}{Y}$ where $X$, $Y$, and $\Sigma$ are computed as exact rationals.

We know that there is some value~$l$ polynomial
in~$i$ such that 
 $|Y| > 2^{-l}$, because~$Y$ is computed in polynomial time, i.e., the number
 of decimal places of~$Y$ must be polynomial in~$i$. Let us approximate
 $\sqrt{\Sigma}$ to have error at most $2^{-(l+z)}$, which we can do in
 polynomial time in~$l$ and~$z$ (see, e.g., \cite{sqrtprecision}).
 The absolute error in the result is then at most $2^{-l}$, because we know
 $X/Y$ exactly and we know $\sqrt{\Sigma}/Y$ up to $2^{-z}$ since the error in
  $\sqrt{\Sigma}$ is at most multiplied by~$2^{l}$.
\end{proof}

We now point out that the oracle result $\mC$ that we wish to obtain is in fact
a vector of decimal fractions, and that we can bound its number of decimal places.
This is easy to notice if we consider the graph $\sub(H,\eta')$ where $\eta'$
subdivides each edge $e$ to $N$ or $N'$ depending on the parity of $\eta(e)$.
Indeed, proposition~\ref{prp:fibo} then ensures
that the oracle result $\mC$ on the “ideal” graphs $G(\kappa,\kappa')$
is the same result that we would obtain on the graph $\sub(H,\eta')$ with
probabilities set as in the lower part of Figure~\ref{fig:emul}, i.e., \emph{not} using
the $p(i),q(i),r(i),s(i)$. Now, as all the probabilities on that graph are
decimal fractions, the answer $\mC$ is in fact itself a vector of decimal fractions. Further, we can bound the number of
decimal places of its components to $[m \times (\max(N,N') + 10)] \times z$ with $z$ the maximal
number
of decimal places of a decimal fraction in $\rho_\kappa, \rho'_{\kappa}$, which
is polynomial. This uses the following immediate result:

\begin{restatable}{lemma}{precgraph}
  \label{lem:precgraph}
  Let $(G, \pi)$ be a probabilistic graph with $\pi$ mapping each edge $e\in E$
  to a probability $\pi(e)$ which is a decimal fraction with at most $z$
  decimal places. Then the number of decimal places of
  $\prm(G,\pi)$ is at most $m z$, where $m$ is the number of edges of~$m$.
\end{restatable}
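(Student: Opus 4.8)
The plan is to expand $\prm(G,\pi)$ using its definition in Equation~\eqref{eqn:match} and argue that it is an integer divided by $10^{mz}$. First I would normalize: a decimal fraction with \emph{at most} $z$ decimal places can be written with \emph{exactly} $z$ decimal places, i.e., each $\pi(e)$ equals $a_e/10^z$ for some integer $0 \le a_e \le 10^z$; consequently $1-\pi(e) = (10^z - a_e)/10^z$ is also of this form. Thus every factor appearing in the products below has denominator exactly $10^z$.

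Next I would observe that for \emph{any} subset $E' \subseteq E$, the quantity $\Pr_{G,\pi}(E') = \prod_{e\in E'}\pi(e) \times \prod_{e\in E\setminus E'}(1-\pi(e))$ is a product of exactly $m = |E|$ factors, each equal to an integer over $10^z$; hence $\Pr_{G,\pi}(E') = b_{E'}/10^{mz}$ for some non-negative integer $b_{E'}$. Summing over the matchings of $G$, we get
\[
\prm(G,\pi) \ = \ \sum_{\text{matching }M\text{ of }G} \Pr_{G,\pi}(M) \ = \ \frac{1}{10^{mz}} \sum_{\text{matching }M\text{ of }G} b_{M},
\]
which is an integer divided by $10^{mz}$, hence a decimal fraction with at most $mz$ decimal places, as claimed.

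This argument has no real obstacle; the only point requiring a moment's care is the normalization step, ensuring every edge probability and its complement share the common denominator $10^z$ so that the product of $m$ such factors lands in $\mathbb{Z}/10^{mz}$ uniformly. After that, the claim is immediate from linearity of the sum over matchings.
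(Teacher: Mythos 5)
Your proof is correct and follows essentially the same approach as the paper: express each term $\Pr_{G,\pi}(M)$ as a product of $m$ factors each with at most $z$ decimal places (equivalently, denominator $10^z$), conclude that each term has at most $mz$ decimal places, and note that summing preserves this bound. Your explicit normalization to the common denominator $10^z$ is a minor elaboration of the same argument.
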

\begin{proof}
  The answer to $\prm(G,\pi)$ is a sum over edge subsets of $(G, \pi)$, so it
  suffices to show the result for each edge subset. Now, the probability of an
  edge subset is a product of $m$ values which are either edge probabilities
  $\pi(e)$ or their complement $1 - \pi(e)$. Both of these have at most $z$
  decimal places, so the product has at most $mz$ decimal places, which concludes.
\end{proof}

Now, the last step is to argue that we can recover exactly the oracle result
$\mC$ by invoking the oracles on graphs with approximations of the non-rational
probabilities.
Let $z' \colonequals m
\times (\max(N,N') + 10) \times  z$ be the maximal number of decimal places of a
component of~$\mC$.
Let $z'' \colonequals z'+2m + 1$.
Use Lemma~\ref{lem:precision} to compute, for each $0 \leq i \leq
\max(N,N')$, decimal fraction approximations
$\widehat{\pp(i)},\widehat{\pq(i)},\widehat{\pr(i)},\widehat{\ps(i)}$ of
$\pp(i),\pq(i),\pr(i),\ps(i)$ which are accurate to $z''$ binary places.  Let
$\widehat G(\kappa,\kappa')$ be the graph defined like $G(\kappa,\kappa')$ but
with the probabilities $\pp(i),\pq(i),\pr(i),\ps(i)$ replaced by 
$\widehat{\pp(i)},\widehat{\pq(i)},\widehat{\pr(i)},\widehat{\ps(i)}$. We call
the oracle on these graphs, and obtain a vector $\widehat \mC$. The only
missing ingredient is to bound the error on each component of~$\widehat \mC$
relative to $\mC$. This follows from an easy variant of
Lemma~\ref{lem:precgraph}:

\begin{restatable}{lemma}{precgraphb}
  \label{lem:precgraphb}
  Let $(G, \pi)$ with $G = (V, E)$ be a probabilistic graph with $\pi$ mapping each edge $e\in E$
  to a real value, and let $\widehat \pi$ mapping each edge $e$ to a 
  decimal fraction $\widehat \pi(e)$. Then we have: 
  $|\prm(G,\pi) - \prm(G,\widehat\pi)| \leq 2^{2m} \max_{e\in E} |\widehat \pi(e) - \pi(e)|$.
\end{restatable}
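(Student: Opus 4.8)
The plan is to reduce the estimate to an elementary perturbation bound for products and then sum over matchings. First I would unfold the definition of $\prm$: writing $m \colonequals |E|$, we have $\prm(G,\pi) = \sum_{M} \Pr_{G,\pi}(M)$ where $M$ ranges over the matchings of $G$ and $\Pr_{G,\pi}(M) = \prod_{e\in M}\pi(e)\prod_{e\in E\setminus M}(1-\pi(e))$, and likewise with $\widehat\pi$ in place of $\pi$. Subtracting these expressions term by term and using the triangle inequality, it suffices to bound $|\Pr_{G,\pi}(M) - \Pr_{G,\widehat\pi}(M)|$ for each fixed matching $M$, and then to multiply this bound by the number of matchings of $G$, which is at most $2^m$ since every matching is in particular an edge subset.

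For a fixed $M$, both $\Pr_{G,\pi}(M)$ and $\Pr_{G,\widehat\pi}(M)$ are products of exactly $m$ factors indexed by $E$: the factor at $e$ is $\pi(e)$ (resp.\ $\widehat\pi(e)$) if $e\in M$, and $1-\pi(e)$ (resp.\ $1-\widehat\pi(e)$) otherwise. All these factors lie in $[0,1]$ --- in our application the true probabilities are genuine probability values by Proposition~\ref{prp:fibo}, and the approximations produced by Lemma~\ref{lem:precision} are as well --- and any two corresponding factors differ in absolute value by at most $\epsilon \colonequals \max_{e\in E}|\widehat\pi(e)-\pi(e)|$. I would then apply the standard telescoping identity $\prod_{i=1}^m a_i - \prod_{i=1}^m b_i = \sum_{j=1}^{m}\bigl(\prod_{i<j} b_i\bigr)(a_j-b_j)\bigl(\prod_{i>j} a_i\bigr)$: since each partial product is bounded by $1$ in absolute value and $|a_j-b_j|\leq\epsilon$, each of the $m$ summands has absolute value at most $\epsilon$, whence $|\Pr_{G,\pi}(M) - \Pr_{G,\widehat\pi}(M)| \leq m\epsilon$.

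Putting the two steps together gives $|\prm(G,\pi)-\prm(G,\widehat\pi)| \leq 2^m\cdot m\epsilon \leq 2^{2m}\epsilon$, where the last inequality uses $m\leq 2^m$. There is no real obstacle here: the argument is pure bookkeeping. The only points worth a line of care are that the factors really do lie in $[0,1]$ so that the telescoping estimate applies, and the deliberately wasteful bound $2^m m \leq 2^{2m}$, which costs nothing because the lemma is invoked only with a fixed multiplicative slack that is later compensated by taking the approximation precision $z''$ large enough.
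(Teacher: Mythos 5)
Your proof is correct and reaches the same bound, but by a genuinely different per-matching estimate. The paper substitutes $\widehat\pi(e) = \pi(e) + \delta(e)$, expands the product $\prod_{e \in E'}(\pi(e)+\delta(e))\prod_{e\notin E'}\bigl([1-\pi(e)]-\delta(e)\bigr)$ into $2^m$ monomials, notes that all but one carry at least one $\delta(e)$ factor and that every other factor lies in $[-1,1]$, and so bounds the per-matching error by $2^m\max_e|\delta(e)|$; summed over $|W|\le 2^m$ matchings this gives $2^{2m}$. You instead invoke the telescoping identity for a difference of products, giving the tighter per-matching bound $m\epsilon$, and recover $2^{2m}$ via the crude $m\le 2^m$. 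Both are valid; your route gives the sharper intermediate bound $m\,2^m\epsilon$ (which you then deliberately discard), while the paper's is somewhat more self-contained since it proceeds by brute-force expansion rather than quoting a separate algebraic identity. One small caveat that both proofs must handle, and which you do flag explicitly: the stated lemma does not literally require $\widehat\pi(e)\in[0,1]$, yet both arguments need the factors (or their differences) bounded by $1$ in absolute value; this is satisfied in the application by Lemma~\ref{lem:precision}, and the paper's proof also tacitly assumes it, so you are in good company here.
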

\begin{proof}
  Let $W$ be the set of all edge subsets of $G$ which are matchings.
  Write $\delta(e) \colonequals \widehat\pi(e) - \pi(e)$ for $e \in E$.
  We have:
  \[
    \prm(G,\widehat\pi) - \prm(G,\pi) = \sum_{E' \in W} \prod_{e \in E'} \widehat\pi(e)
    \prod_{e \notin E'} (1 - \widehat\pi(e)) - \prm(G,\pi) 
  \]
  Thus, injecting $\delta(e)$:
  \[
    \prm(G,\widehat\pi) - \prm(G,\pi)= \sum_{E' \in W} \prod_{e \in E'} (\pi(e) +
    \delta(e)) \prod_{e \notin E'} ([1 - \pi(e)] - \delta(e)) - \prm(G,\pi)
  \]
  Let us bound the absolute value of the left-hand side to conclude. Note that,
  in the right-hand side, we can expand each term of the sum to obtain $2^m$
  terms, one of which is the corresponding term of $\prm(G,\pi)$, the others all
  having some $\delta(e)$ as a factor, and the other quantities in the product
  are between 
$-1$ and $1$ because this is the case of the values of $\pi(e)$, of
  $\widehat\pi(e)$, and of their differences $\delta(e)$. Hence, using the triangle
  inequality we obtain
  \[
    \left|\prm(G,\widehat\pi) - \prm(G,\pi)\right| \leq \sum_{E' \in W} 
    2^m \max_{e\in E} |\widehat \pi(e) - \pi(e)|,
  \]
  and the result follows immediately.
\end{proof}

Thus, the result $\widehat\mC$ of the oracle calls is such that each
component has error at most $2^{-(z'+1)}$ relative to $\mC$.
Thus, truncating them to $z'$ decimal places matches the exact value of $\mC$.
Thus, we can recover the values $\mC$
and conclude the proof.

\section{Proof of Theorem~\ref{thm:edgecovers}}
\label{apx:others}
In this section we give more details on the proof of
Proposition~\ref{thm:edgecovers} by explaining how the proof of
Sections~\ref{sec:sub6}--\ref{sec:general} and of the 
relevant lemmas and propositions is modified.

Using the new definition of the~$\Pi^{bb'}_n$ as given in
Section~\ref{sec:others}, Lemma~\ref{lem:fiboexpr} becomes:

\begin{lemma}
\label{lem:fiboedgecovers}
For all $n \in \mathbb{N}^+$, $b, b' \in \{0, 1\}$, we have
$\Pi^{bb'}_{n}(1/2,\ldots,1/2) = \frac{f_{n+b+b'}}{2^n}$.
\end{lemma}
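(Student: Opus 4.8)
The plan is to adapt the proof of Lemma~\ref{lem:fiboexpr} essentially verbatim, arguing by induction on $n$ and handling the four pairs $(b,b') \in \{0,1\}^2$ in parallel. The one conceptual change is that, under the edge-cover semantics, setting $b = 1$ (resp.\ $b' = 1$) means that the left (resp.\ right) endpoint of the path need \emph{not} be covered, which is precisely why the roles of $b$ and $b'$ become ``reversed'' in the exponent of $f$ compared to the matching case. It is convenient to work with the rescaled quantities $g^{bb'}_n \colonequals 2^n\,\Pi^{bb'}_n(1/2,\ldots,1/2)$, so that the goal becomes $g^{bb'}_n = f_{n+b+b'}$ for all $n \geq 1$ and all $b,b' \in \{0,1\}$.

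For the base case $n = 1$, the path $P_1(1/2)$ has a single edge, present with probability $1/2$, and no internal vertex. When $b = b' = 1$ every edge subset is vacuously an edge cover, so $g^{11}_1 = 2 = f_3$; in the three other cases at least one of the two endpoints must be covered, which forces the single edge to be present, so $g^{00}_1 = g^{01}_1 = g^{10}_1 = 1 = f_1 = f_2$. For the inductive step with $n \geq 2$, I would condition on whether the last edge $e_{n-1} = \{v_{n-1},v_n\}$ of $P_n(1/2,\ldots,1/2)$ lies in the random subset. If it does (probability $1/2$), then $v_{n-1}$ and $v_n$ are both covered, and what remains is exactly the edge-cover event on $P_{n-1}(1/2,\ldots,1/2)$ with left constraint $b$ and right constraint $1$ (since $v_{n-1}$ is already covered), contributing $\frac{1}{2}\,\Pi^{b1}_{n-1}(1/2,\ldots,1/2)$. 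If it does not (probability $1/2$), then no incident edge of $v_n$ is kept, so the event is impossible when $b' = 0$, and when $b' = 1$ it reduces to the edge-cover event on $P_{n-1}(1/2,\ldots,1/2)$ with left constraint $b$ and right constraint $0$, contributing $\frac{1}{2}\,\Pi^{b0}_{n-1}(1/2,\ldots,1/2)$. This yields, for $n \geq 2$, the relations $g^{b0}_n = g^{b1}_{n-1}$ and $g^{b1}_n = g^{b1}_{n-1} + g^{b0}_{n-1}$, whence $g^{b0}_{n+1} = g^{b1}_n = g^{b1}_{n-1} + g^{b0}_{n-1} = g^{b0}_n + g^{b0}_{n-1}$, which is the Fibonacci recurrence. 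Plugging in the base values ($g^{00}_1 = 1$ and $g^{00}_2 = g^{01}_1 = 1$; and $g^{10}_1 = 1$, $g^{10}_2 = g^{11}_1 = 2$) then gives $g^{00}_n = f_n$ and $g^{10}_n = f_{n+1}$ for all $n \geq 1$, and finally $g^{01}_n = g^{00}_{n+1} = f_{n+1}$ and $g^{11}_n = g^{10}_{n+1} = f_{n+2}$. In every case $g^{bb'}_n = f_{n+b+b'}$, which is the claim.

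I do not expect a genuine obstacle: the argument is a direct adaptation, and the short two-line induction of Lemma~\ref{lem:fiboexpr} merely needs to be rerun with the new semantics. The only points that require a little care are (i) fixing the precise meaning of the endpoint constraints $b,b'$ for edge covers, where ``$b = 1$'' \emph{relaxes} rather than \emph{forbids}; and (ii) the case-analysis bookkeeping, in particular remembering that deleting the last edge leaves $v_{n-1}$ still needing coverage unless that edge was kept, and that the degenerate path $P_1$ with $b = b' = 1$ contains no vertex that must be covered.
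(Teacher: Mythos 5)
Your proposal is correct and takes essentially the same approach as the paper: condition on the presence of a boundary edge to obtain a Fibonacci-type recursion, then match base cases. The paper conditions on the \emph{first} one or two edges and proves the case $b=b'=1$ first (obtaining $\Pi^{11}_n = \tfrac12\Pi^{11}_{n-1} + \tfrac14\Pi^{11}_{n-2}$ directly) before deducing the other three cases via one-step relations, whereas you condition on the \emph{last} edge and carry all four pairs $(b,b')$ through coupled first-order recurrences before solving; these are only cosmetic variations of the same argument.
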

\begin{proof}
We first prove it for~$b=b'=1$, by induction on~$n$. 
For~$n=1$, we have~$\Pi^{11}_1(1/2) = 1$, and~$f_{3}/2^1 = 1$. 
For~$n=2$ we have~$\Pi^{11}_2(1/2,1/2) = 3/4$, and~$f_{4}/2^2 = 3/4$. 
Now, assuming the claim holds for~$1\leq i \leq n-1$, $n\geq 3$, let us look
at~$\Pi^{11}_n(1/2,\ldots,1/2)$.  By case analysis on the first edge, we have:
\begin{align*}
\Pi^{11}_n(1/2,\ldots,1/2) &= \frac{1}{2} \Pi^{11}_{n-1}(1/2,\ldots,1/2) + \frac{1}{2} \frac{1}{2} \Pi^{11}_{n-2}(1/2,\ldots,1/2),
\end{align*}
and the induction hypothesis together with the definition of the Fibonacci sequence concludes.

Now, it is clear that~$\Pi^{01}_n(1/2,\ldots,1/2)=\frac{1}{2} \Pi^{11}_{n-1}(1/2,\ldots,1/2)$, because
the first edge must be present, and then the condition on the~$n-1$ remaining edges is the same as the condition of~$P_{n-1}(1/2,\ldots,1/2)$.
Similarly we have $\Pi^{10}_n(1/2,\ldots,1/2)=\frac{1}{2} \Pi^{11}_{n-1}(1/2,\ldots,1/2)$
and $\Pi^{00}_n(1/2,\ldots,1/2)=\frac{1}{2^2} \Pi^{11}_{n-2}(1/2,\ldots,1/2)$, so the claim holds.
\end{proof}

Lemma~\ref{lem:concat} becomes:

\begin{lemma}
\label{lem:concat2}
Let~$n,n' \in \mathbb{N}^+$ and~$\rho \in [0,1]^n$, $\rho' \in [0,1]^{n'}$ be tuples of probability values. 
Then, for every~$b,b' \in \{0,1\}$ we have
\begin{align*}
\Pi_{n+n'}^{bb'}(\rho, \rho') \ = \ & (\Pi_n^{b0}(\rho) \times \Pi_{n'}^{1b'}(\rho'))
+ \ (\Pi_n^{b1}(\rho) \times \Pi_{n'}^{0b'}(\rho'))
- \ (\Pi_n^{b0}(\rho) \times \Pi_{n'}^{0b'}(\rho')).
\end{align*}
\end{lemma}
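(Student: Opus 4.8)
The plan is to mirror the proof of Lemma~\ref{lem:concat} (Appendix~\ref{apx:concat}) with ``matching'' replaced by ``edge cover'' throughout, paying attention to the vertex where the two paths are glued. As in that proof, I would first reduce to the case $b=b'=0$: the probability-$1$ edges appended by the endpoint constraints $b,b'$ sit at the two far ends of $P_{n+n'}$ and play no role near the gluing vertex, so the remaining cases follow from the very same computation. Before starting, I would also double-check the $\Pi$-notation conventions for edge covers, in particular that $\Pi_n^{b1}(\rho)$ equals the probability that the $\rho$-edges cover all vertices of $P_n$ \emph{except possibly the right endpoint} (a probability-$1$ dangling edge there covers that endpoint and its fresh leaf automatically, so it only \emph{relaxes} the covering requirement), and symmetrically on the left.

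Concretely, write $P_{n+n'}$ with vertices $u_0,\dots,u_{n+n'}$, viewed as $P_n$ on $u_0,\dots,u_n$ glued at $u_n$ to $P_{n'}$ on $u_n,\dots,u_{n+n'}$, and let $e_n\colonequals\{u_{n-1},u_n\}$ and $e_{n+1}\colonequals\{u_n,u_{n+1}\}$ be the two ``junction edges'' incident to $u_n$. For $c,c'\in\{0,1\}$ let $C^{cc'}$ be the set of edge covers $S$ of $P_{n+n'}(\rho,\rho')$ with $e_n\in S$ iff $c=1$ and $e_{n+1}\in S$ iff $c'=1$. Since $u_n$ must be covered, $C^{00}=\emptyset$, so the edge covers of $P_{n+n'}$ are the disjoint union $C^{01}\sqcup C^{10}\sqcup C^{11}$. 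Writing $\Pr$ for the distribution of $P_{n+n'}(\rho,\rho')$, inclusion--exclusion with $A=C^{10}\cup C^{11}$ and $B=C^{01}\cup C^{11}$ (so $A\cap B=C^{11}$) gives $\Pi_{n+n'}^{00}(\rho,\rho')=\Pr(A)+\Pr(B)-\Pr(C^{11})$.

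The remaining step is to evaluate these three probabilities using independence of the edges. For $A$ (edge covers with $e_n$ present), $u_n$ is already covered by $e_n$, so the restriction of $S$ to $P_n$ must be a genuine edge cover of $P_n$ (which automatically forces $e_n$ present, as it is the unique $P_n$-edge at $u_n$), while the restriction to $P_{n'}$ only needs to cover $u_{n+1},\dots,u_{n+n'}$; hence $\Pr(A)=\Pi_n^{00}(\rho)\,\Pi_{n'}^{10}(\rho')$. Symmetrically $\Pr(B)=\Pi_n^{01}(\rho)\,\Pi_{n'}^{00}(\rho')$, and $\Pr(C^{11})=\Pi_n^{00}(\rho)\,\Pi_{n'}^{00}(\rho')$, since both junction edges being present forces a full edge cover on each side. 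Substituting gives the claimed identity for $b=b'=0$, and the general case is identical after replacing the relevant $\Pi_n^{00},\Pi_{n'}^{00}$ by $\Pi_n^{b0},\Pi_n^{b1}$ and $\Pi_{n'}^{0b'},\Pi_{n'}^{1b'}$ as appropriate.

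I do not expect a genuine obstacle. The one point of care --- the analogue of the subtlety flagged in the matching proof --- is that the ``overlap'' term is now $C^{11}$ (\emph{both} junction edges present) rather than $C^{00}$ (both absent), which is precisely why the subtracted term in Lemma~\ref{lem:concat2} is $\Pi_n^{b0}(\rho)\,\Pi_{n'}^{0b'}(\rho')$ and not $\Pi_n^{b1}(\rho)\,\Pi_{n'}^{1b'}(\rho')$ as in Lemma~\ref{lem:concat}; this is the same ``reversal'' already visible in Lemma~\ref{lem:fiboedgecovers}.
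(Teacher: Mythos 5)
Your proposal is correct and takes essentially the same approach as the paper's proof sketch: the paper sketches exactly this inclusion–exclusion over whether the two junction edges incident to the gluing vertex are present (and notes the ``overlap'' is the case where both are present, hence the subtracted term $\Pi_n^{b0}\Pi_{n'}^{0b'}$), deferring the formal write-up to a straightforward adaptation of the proof of Lemma~\ref{lem:concat}. Your version spells out that adaptation, including the correct interpretation of the endpoint flags $b,b'$ for edge covers (relaxing rather than tightening the covering constraint at the endpoint), and that matches the paper's intent.
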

\begin{proof}[Proof sketch]
  We only sketch the proof, as a formal proof similar to that of
  Lemma~\ref{lem:concat} is easy to obtain from this.
  The possible edge subsets in~$P_{n+n'}$ that are edge covers are those where
the $n$-th edge is kept, plus those where the $(n+1)$-th edge is kept,
minus those where both the $n$-th and $(n+1)$-th edge were kept (as these were
counted twice).
\end{proof}

Next, we argue that the analogue of Proposition~\ref{prp:fibo} still holds.
To this end, let us inspect the system of
Equation~\ref{itemb} from this proposition, where the~$\Pi^{bb'}_n$ are now defined
using edge covers. We first look at $\Pi^{00}_4(p,q,r,s)$. By explicit computation we have
\begin{align*}
 \Pi^{00}_{4}(p,q,r,s) = \pp \ps (1-\nq \nr).  
\end{align*}

Indeed, the edge subsets of~$P_4$ that are an edge cover must have the first
and last edge present, and cannot have both middle edges absent.
Notice then that this is the same as the “old” $\Pi^{11}_{4}(p,q,r,s)$ (for
matchings), but switching~$p,q,r,s$ into~$\np,\nq,\nr,\ns$ (see
Equation~\ref{eqn:11}).

We now look at $\Pi^{01}_{4}(p,q,r,s)$. We have
\begin{align}
  \Pi^{01}_{4}(p,q,r,s) = \Pi^{00}_{4}(p,q,r,s)  + \pp \pr \ns.
\end{align}
Indeed, the edge subsets of $P_4$ that are an edge cover and have the first edge
present are those that are an edge cover and have the first and last edge present,
corresponding to $\Pi^{00}_{4}(p,q,r,s)$, plus the ones where the first edge is present
and the last edge is missing. The latter implies that the third edge must
be present, corresponding to the term $\pp \pr \ns$.
Notice then that this is the same as the old $\Pi^{10}_{4}(p,q,r,s)$, but switching~$p,q,r,s$ into~$\np,\nq,\nr,\ns$ (see
Equation~\ref{eqn:10}).

Similarly, the new~$\Pi^{10}_{4}(p,q,r,s)$ is equal to the old $\Pi^{01}_{4}(p,q,r,s)$, but switching~$p,q,r,s$ into~$\np,\nq,\nr,\ns$ (Equation~\ref{eqn:01}).

In the same manner, we easily can show that the new~$\Pi^{11}_{4}(p,q,r,s)$ is equal
to the old $\Pi^{00}_{4}(p,q,r,s)$, but switching~$p,q,r,s$
into~$\np,\nq,\nr,\ns$ (Equation~\ref{eqn:00}).

Therefore, using Lemma~\ref{lem:fiboedgecovers}, the new System~\ref{itemb} can
be obtained from the old one by simply switching $p,q,r,s$
into~$\np,\nq,\nr,\ns$, hence we can take the solutions to the new system to
be~$(1-p(i),1-q(i),1-r(i),1-s(i))$, where $p(i),q(i),r(i),s(i)$ is the solution
to the old system: indeed, this still satisfies condition~\ref{itema}, and the
obtained solutions are also of the prescribed form.

Last, we point out that the relevant Jacobian determinants are again not the
null polynomials: this follows directly from the previous remarks on how the
$\Pi^{bb'}_{4}(p,q,r,s)$ have been changed, and from the fact that the
determinant is alternating. For instance, the new $\det(\mJ)$ is then the “old”
$\det(\mJ)$, but again swapping~$\chi_{00}, \chi_{01},\chi_{10},\chi_{11}$ by
$1-\chi_{00}, 1-\chi_{01},1-\chi_{10},1-\chi_{11}$: indeed, when we derive the
new $\Pi^{b_1 b_2}_{4}(\chi)$ by~$\chi_{b'_1,b'_2}$, only the sign changes.
Since it changes in every cell and the matrix is~$4\times4$, it does not change
globally.
An inspection of the rest of the proof reveals that it still works.

\end{document}